\tikzstyle{empty node}=[draw,inner sep=2] 
\tikzstyle{solid node}=[circle,draw,inner sep=2,fill=black] 
\tikzstyle{hollow node}=[circle,draw,inner sep=2] 
\tikzstyle{hollow node1}=[rectangle,draw,inner sep=2] 
\tikzset{edge/.style = {->,> = latex}}
\theoremstyle{plain}
\newtheorem{thm}{Theorem}[section]
\newtheorem{prop}[thm]{Proposition}
\newtheorem{cor}[thm]{Corollary}
\renewcommand{\P}{\mathbb{P}}
\newcommand{\E}{\mathbb{E}}
\newcommand{\G}{\mathcal{G}}
\newcommand{\T}{\mathcal{T}}
\newcommand{\indep}{\perp\!\!\!\perp}
\newcommand{\argmin}{\operatornamewithlimits{\arg\min}}
\newcommand{\argmax}{\operatornamewithlimits{\arg\max}}
\newcommand{\dto}{\stackrel{d}{\longrightarrow}}
\newcommand{\path}[2]{({#1} \rightsquigarrow {#2})}
\definecolor{bll}{RGB}{209,229,255}
\providecommand{\keywords}[1]
{
	\small	
	\textbf{\textit{Keywords---}} #1
}
\newcommand{\js}[1]{\textcolor{magenta}{\sffamily\footnotesize [JS: {#1}]}}
\newcommand{\sa}[1]{\textcolor{blue}{\sffamily\footnotesize [SA: {#1}]}}
\title{Inference on extremal dependence in the domain of attraction of a structured Hüsler--Reiss distribution motivated by a Markov tree with latent variables}
\author{Stefka Asenova\thanks{Corresponding author. UCLouvain, LIDAM/ISBA, Voie du Roman Pays 20, 1348 Louvain-la-Neuve, Belgium. E-mail: stefka.asenova@uclouvain.be} \and Gildas Mazo\thanks{MaIAGE, INRA, Université Paris-Saclay 78350, Jouy-en-Josas, France. E-mail: gildas.mazo@inra.fr} \and Johan Segers\thanks{UCLouvain, LIDAM/ISBA, Voie du Roman Pays 20, 1348 Louvain-la-Neuve, Belgium. E-mail: johan.segers@uclouvain.be}}
\date{\today}
\begin{document}

\maketitle

\begin{abstract}
	A Markov tree is a probabilistic graphical model for a random vector indexed by the nodes of an undirected tree encoding conditional independence relations between variables.
	One possible limit distribution of partial maxima of samples from such a Markov tree is a max-stable Hüsler--Reiss distribution whose parameter matrix inherits its structure from the tree, each edge contributing one free dependence parameter.
	Our central assumption is that, upon marginal standardization, the data-generating distribution is in the max-domain of attraction of the said Hüsler--Reiss distribution, an assumption much weaker than the one that data are generated according to a graphical model.
	Even if some of the variables are unobservable (latent), we show that the underlying model parameters are still identifiable if and only if every node corresponding to a latent variable has degree at least three.
	Three estimation procedures, based on the method of moments, maximum composite likelihood, and pairwise extremal coefficients, are proposed for usage on multivariate peaks over thresholds data when some variables are latent.
	A typical application is a river network in the form of a tree where, on some locations, no data are available. We illustrate the model and the identifiability criterion on a data set of high water levels on the Seine, France, with two latent variables. The structured Hüsler--Reiss distribution is found to fit the observed extremal dependence patterns well. The parameters being identifiable we are able to quantify tail dependence between locations for which there are no data. 
\end{abstract}

\keywords{multivariate extremes; tail dependence; graphical models; latent variables; Hüsler--Reiss distribution; Markov tree; tail tree; river network}

\section{Introduction}

A major topic in multivariate extreme value theory is the modeling of tail dependence between a finite number of variables. Informally, tail dependence represents the degree of association between the extreme values of these variables. Probabilistic graphical models \citep{laurit, koller_fried, martin_wain}, are distributions which embody a set of conditional independence relations and have a graph-based representation, according to which the nodes of the graph are associated to the variables and the set of edges encode the conditional independence relations. The intersection of the two fields, extreme value theory and probabilistic graphical models, gives rise to the study of the tail behavior of graphical models.

Consider a river network where the interest is in extreme water levels or water flow in relation to flood risks. Figure~\ref{fig:seine} illustrates part of the Seine network. The graph fixed by the seven labeled nodes and the river channels between them can be a base for building a model for extremal dependence between the water levels at these sites. 

Hydrological data are often used to fit models for multivariate extremes based on graphs. Water flows of the Bavarian Danube are analyzed in \citet{engelke+h:2020}. \citet{joe} study water flows of the Fraser river, British Colombia. Precipitation data in the Japanese archipelago is treated in \citet{justin}, where the model is based on a spatial grid viewed as an ensemble of trees. Other extreme-value models involving graphs appear in \citet{eks16} and \citet{joe}, who study financial data under different models. The first paper uses max-linear models on a directed acyclic graph (DAG) \citep{gissibl}, and the second one a 1-factor model. \citet{klup_sonmez} introduce an infinite max-linear model to analyze the distribution of extreme opinions in a social network.

Relatively recently the relation between extreme value distributions and conditional independence assumptions has been given theoretical relevance. The earliest is the article of \citet{gissibl} introducing max-linear models as structural equation models on a DAG, followed by the regularly varying Markov trees in \citet{mazo} and the extremal graphical models in \citet{engelke+h:2020} based on multivariate Pareto distributions. Earlier, \citet{pap161} showed that for a max-stable random vector with positive and continuous density, conditional independence implies unconditional independence, thereby concluding that a broad class of max-stable distributions does not exhibit an interesting Markov structure. 


A key object of our paper is the multivariate Hüsler--Reiss distribution \citep{hr} with parameter matrix having a particular structure linked to a tree as specified in Eq.~\eqref{eqn:lambda}. The structure is motivated by the fact that the max-domain of attraction of the said Hüsler--Reiss distribution contains certain regularly varying Markov trees. The latter property follows from results in \citet{mazo} and sets our work apart from the extremal graphical models in \citet{engelke+h:2020}, who impose a non-standard conditional independence relation on the multivariate Pareto distribution associated to a max-stable distribution, but without regard for the latter's max-domain of attraction. Still, it turns out that for trees, the structured Hüsler--Reiss models in \citet{engelke+h:2020} and in our paper are the same, as explained in Appendix~\ref{app:eh}. Another structured Hüsler--Reiss distribution based on trees is proposed in \citet{joe}. The form they propose is genuinely different from ours, however, as explained in detail in Appendix~\ref{app:LeeJoe}. 

We consider random samples from the distribution of a random vector $\xi = (\xi_v, v \in V)$ with continuous margins whose variables are indexed by the node set $V = \{1,\ldots,d\}$ of an undirected tree with edge set~$E$. After marginal standardization to the unit-Pareto distribution, we assume that the random vector is in the max-domain of attraction of the tree-structured Hüsler--Reiss distribution described in the previous paragraph. We emphasize that we do not assume that $\xi$ itself satisfies any conditional independence relations with respect to the tree.
The tree only comes into play via the imposed structure on the parameter matrix of the max-stable Hüsler--Reiss distribution containing the distribution of the standardized version of $\xi$ in its max-domain of attraction.

The main result and contribution of our paper is a criterion for identifiability of all $d-1$ parameters $\theta_e \in (0, \infty)$ for $e \in E$ of the tree-structured $d$-variate Hüsler--Reiss distribution in case some of the $d$ variables are latent (unobservable). 
To illustrate why the problem of latent variables is relevant, consider again the Seine network on Figure~\ref{fig:seine}. The red dots designate junctions of two river channels (conversely, in a river delta, a channel could split into several ones). No measurement stations being present there, we cannot observe the water levels at those locations. We propose to treat those water levels as latent variables. The question is then whether it is still possible to identify all $d-1$ parameters. The answer is a surprisingly simple identifiability criterion: it is necessary and sufficient that all nodes indexing latent variables have degree at least three. The important practical implication is that, provided the criterion is met, the latent variables can be included in the model, reflecting the dependence structure more accurately than when they would have been ignored.



\begin{figure}
    \centering
    \includegraphics[height=.4\textheight]{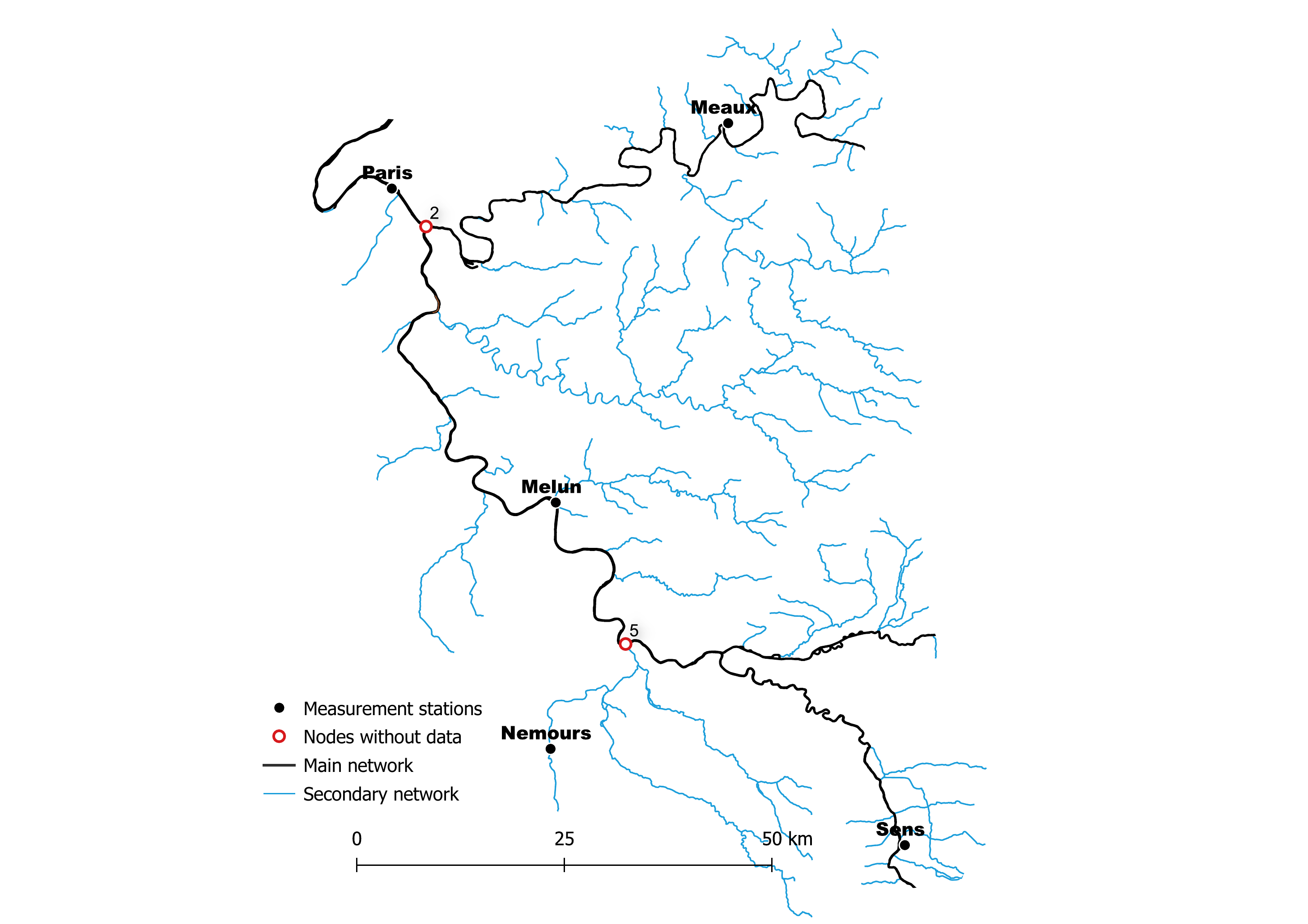}
    \caption{Seine network. The data is from the web-site of Copernicus Land Monitoring Service: \href{https://land.copernicus.eu/imagery-in-situ}{https://land.copernicus.eu/imagery-in-situ}.}
    \label{fig:seine}
\end{figure}

Given a random sample from a distribution in the max-domain of attraction of the tree-structured Hüsler--Reiss distribution, we propose
three types of estimators of the edge parameters: a first one called method of moments estimator (MME) is based on the estimator proposed in \citet{engelke}, a second one is based on the composite likelihood function (composite likelihood estimator or CLE) and the third one is essentially the pairwise extremal coefficient estimator (ECE) introduced in \citet{eks16}. 
All estimators proposed allow for the fact that some of the $d$ variables are latent, provided the identifiability criterion is met.

We illustrate the method by a detailed analysis of data on high water levels at several locations of the Seine network. The network is represented schematically as a tree with seven nodes indexing five observable variables and two latent ones. As the identifiability criterion is met, we can estimate the six dependence parameters of the tree-structured Hüsler--Reiss distribution, each parameter corresponding to an edge in the tree. For the three proposed estimators we compute parameter estimates and confidence intervals. We assess the goodness-of-fit by comparing the model output with various non-parametric measures of tail dependence. Finally, we compare the fitted tail dependence model incorporating latent variables with a model where the latent variables are ignored. 
The outline of the paper is as follows: Section~\ref{sec:2} presents some general theory and describes the model to which the identifiability criterion is applied. The latter is the focus of Section~\ref{sec:latent}. Section~\ref{section:estim} introduces the three estimators, used for statistical inference and Section~\ref{sec:Seine} is dedicated to the study of high water levels on the Seine network. Concluding remarks and perspectives for further research are discussed in Section~\ref{sec:concl}. The Appendix provides proofs that are not in the text, a numerical comparison between our structured Hüsler--Reiss method and the one of \citet{joe}, clarification on the relationship between the different objects in our paper and the objects in \citet{engelke+h:2020}, some simulation results which aim at comparing the different estimators, and details about some estimation procedures and the data preprocessing.

\section{The model -- definition and properties} \label{sec:2}

\subsection{Preliminaries} 
\label{ssec:prelim}


\paragraph{Multivariate extremes.}
Let $V = \{1, \ldots, d\}$ for some integer $d \ge 2$.
A $d$-variate max-stable distribution $G$ is called simple if its margins are unit-Fréchet, that is, a random vector $Z$ with distribution $G$ satisfies $\P(Z_v \leq x) = \exp(-1/x)$ for $x \in (0, \infty)$ and $v \in V$.
Let $X = (X_v, v \in V)$ be a random vector with unit-Pareto margins, i.e., $\P(X_v\leq x)=1-1/x$ for $x\in[1,\infty)$ and $v \in V$. Let $X_i = (X_{v,i}, v \in V)$ for $i = 1, \ldots, n$ be an independent random sample from the distribution of $X$. We say that $X$ belongs to the max-domain of attraction of the simple max-stable distribution $G$, notation $X \in D(G)$, if
\begin{equation*} 
\lim_{n\to\infty}
\P\left(\max_{i=1,\ldots, n}X_{v,i}\leq n z_v, v\in V\right)
=
G(z), \qquad z \in (0, \infty)^d.
\end{equation*}
For more background on max-stable distributions and their domains of attractions, we refer to the reader to \citet[Chapter~5]{resnick} and \citet[Chapter~6]{de2007extreme}.


Throughout the paper the stable tail dependence function (stdf) $l$ of $G$ or $X \in D(G)$ will appear frequently. It is defined as 
\begin{equation}
\label{eq:stdf}
	l(x)
	= \lim_{t \to \infty} t \, \bigl(1 - \P( X_v \le t / x_v, v \in V )\bigr) 
	= -\ln G(1/x_v, v\in V), 
	\qquad x \in [0,\infty)^d,
\end{equation}
with the obvious limit interpretation if $x_v = 0$ for some $v \in V$.
The stdf is closely linked to the exponent function of a simple max-stable distribution in \citet[Eq.~(2.4)]{coles+t:1991}. It is introduced and studied in \citet{huang1992statistics} and \citet{drees1998}; see also later literature in \citet[Chapter~6]{de2007extreme} and \citet[Chapter~8]{beirlant2004statistics}. The stdf evaluated at $x_J=(\mathbbm{1}_{\{j\in J\}}, j\in V)$ is known as an extremal coefficient, of which we make use in Sections~\ref{section:estim} and~\ref{sec:Seine}.

One of the main objects in our paper is the multivariate Hüsler--Reiss distribution. This absolutely continuous max-stable distribution was introduced in \citet{hr} and remains a popular parametric model in recent literature \citep{genton+m+s:2011, huser+d:2013, asadi, engelke, eks16, joe}. It arises as the limiting distribution of partial maxima of a triangular array of row-wise independent and identically distributed random vectors from a multivariate normal distribution with correlation matrix $\rho(n)$ depending on the sample size $n$. In particular, assume that
\[
  \lim_{n\rightarrow\infty}\big(1-\rho_{ij}(n)\big)\ln n
  =
  \lambda^2_{ij} \in (0, \infty)
\]
for every pair of variables $i, j \in V$ and let $\Lambda=(\lambda^2_{ij})_{i,j\in V}$ denote this limiting matrix. Note that $\lambda_{ii}^2 = 0$ for every $i \in V$. For every subset $W\subseteq V$ and any element $u\in W$ let $\Gamma_{W,u}(\Lambda)$ be the square matrix of size $|W|-1$ with elements
\begin{equation} \label{eq:hrdist}
   \big(\Gamma_{W,u}(\Lambda)\big)_{ij}
    =
    2(\lambda_{iu}^2
    +
    \lambda_{ju}^2
    -
    \lambda^2_{ij}),
    \qquad i,j\in W\setminus u.
\end{equation}
\citet{nikoloulopoulos+j+l:2009} and later \citet{genton+m+s:2011} and \citet{huser+d:2013} show that the cumulative distribution function (cdf) as deduced by \citet{hr} can be written as 
\begin{equation} \label{eq:mvhr_short}
    H_{\Lambda}(z)
    =
    \exp\left\{-
    \sum_{u\in V}
    \frac{1}{z_u}\Phi_{d-1}\left(
     \ln\frac{z_v}{z_u} +2\lambda^2_{uv}, v\in V\setminus u; \Gamma_{V,u}(\Lambda)
      \right)
      \right\},
    \qquad z \in (0, \infty)^d, 
\end{equation}
where $\Phi_p(\,\cdot\,; \Sigma)$ denotes the $p$-variate zero mean Gaussian cdf with covariance matrix $\Sigma$. The distribution $H_\Lambda$ in~\eqref{eq:mvhr_short} is a simple max-stable distribution. In particular, its margins are unit-Fréchet, whereas \citet{hr} originally proposed the distribution in terms of Gumbel margins.

Multivariate margins of the $d$-variate Hüsler--Reiss distribution are Hüsler--Reiss distributions too. The corresponding parameter matrix is obtained by selecting the appropriate rows and columns in the original parameter matrix \citep[see, e.g.,][Example~7]{engelke+h:2020}. In particular, if $X \in D(H_\Lambda)$ and if $U \subseteq V$ is non-empty, the stdf $l_U$ of $X_U = (X_u, u \in U)$ is
\begin{equation} 
\label{eq:stdf_hr}
l_U(x)
=
\sum_{u\in U}
{x_u}\, \Phi_{|U \setminus u|}\left(
\ln\frac{x_u}{x_v} +2\lambda^2_{uv}, v \in U \setminus u; \Gamma_{U,u}(\Lambda)
\right),
\qquad x \in [0, \infty)^U.
\end{equation}
Here we write $U \setminus u$ instead of $U \setminus \{u\}$. In case $x_u = 0$ for some $u \in U$, the corresponding term in the sum in \eqref{eq:stdf_hr} vanishes.

\paragraph{Trees.}
We will need some notions from graph theory. A graph is a pair $\mathcal{G} = (V, E)$ where $V = \{1, \ldots, d\}$ is the set of nodes or vertices and $E \subseteq \{(a, b) \in V \times V : a \ne b \}$ is the set of edges. Edges will also be denoted by $e = (a, b) \in E$. The number of vertices in a subset $U\subseteq V$ will be denoted by $|U|$, while $d$ is reserved for $|V|$ only. A graph is undirected if $(a, b) \in E$ is equivalent to $(b, a) \in E$. A path $\path{u}{v}$ from node $u$ to node $v$ is a collection $\{(u_0, u_1), (u_1, u_2), \ldots, (u_{n-1}, u_n)\}$ of distinct, directed edges such that $u_0 = u$ and $u_n = v$. An undirected tree is an acyclic undirected graph $\T = (V, E)$ such that for every pair of distinct nodes $a$ and $b$ there is a unique path $\path{a}{b}$.


\subsection{Model definition } \label{ssec:introX}

Let $\T = (V, E)$ be an undirected tree with node set $V = \{1, \ldots, d\}$ and let $\xi=(\xi_v, v\in V)$ be a random vector with joint cdf $F$ and continuous margins $F_v(z) = \P(\xi_v \le z)$ for $z \in \mathbb{R}$ and $v \in V$.
Let the random vector $X=(X_v, v\in V)$ be defined as $X_v=1/\big(1-F_v(\xi_v)\big)$ for every $v\in V$. 
Because the functions $F_v$ for $v\in V$ are continuous, the marginal distributions of $X$ are unit-Pareto.   


We assume that $X$ is in the max-domain of attraction of the Hüsler--Reiss distribution $H_\Lambda$ in~\eqref{eq:mvhr_short} with $\Lambda = (\lambda_{ij}^2)_{i,j \in V}$ having the following structure linked to the tree $\T$: there exists a vector $\theta = (\theta_e)_{e \in E}$ of positive scalars with $\theta_{ab} = \theta_{ba}$ and such that $\Lambda = \Lambda(\theta)$ where
%
\begin{equation} 
\label{eqn:lambda}
	\big(\Lambda(\theta)\big)_{ij}
	= \lambda^2_{ij}(\theta)
	= \frac{1}{4}\sum_{e \in \path{i}{j}} \theta_e^2\, , \qquad i,j\in V, \ i \ne j.
\end{equation}
The assumption can thus be written compactly as $X\in D(H_{\Lambda(\theta)})$ for some $\theta \in (0, \infty)^E$.

The motivation for the proposed structure is that $H_{\Lambda(\theta)}$ contains in its max-domain of attraction a certain graphical model with respect to $\T$ as explained in Section~\ref{ssec:Y}. Still, it is to be noted that, despite the structure of the parameter matrix, $H_{\Lambda(\theta)}$ itself does not and cannot satisfy any Markov properties with respect to the tree $\T$: by \citet{pap161}, max-stable distributions with continuous joint densities cannot possess any non-trivial conditional independence properties.

In the parametrization in \eqref{eqn:lambda} the extremal dependence in $\xi$ and in $X$ depends on a vector $\theta=(\theta_e, e\in E)$ of $d-1$ free parameters, indexed by the edges of the tree. 
The main theme in this paper concerns inference on the parameter vector $\theta$ in case some of the variables $\xi_v$ are latent (unobservable). The first question is whether all edge parameters $\theta_e$ are still identifiable from \eqref{eq:stdf_hr} when $\Lambda = \Lambda(\theta)$ and when $U \subsetneq V$ contains the indices of variables that can still be observed. For the Seine network in Figure~\ref{fig:seine}, for instance, there are $d = 7$ variables in total, of which two are latent. A necessary and sufficient criterion for parameter identifiability is given in Proposition~\ref{prop:identif} below. Provided the criterion is fulfilled, the second question is how to estimate the parameters. Three estimation methods are proposed in Section~\ref{section:estim} and illustrated in Section~\ref{sec:Seine}.

Note that the random vector $\xi$ itself does not necessarily belong to the max-domain of attraction of some max-stable distribution. The reason is that we do not impose that the marginal distributions of $\xi$ are in the max-domain of attraction of some univariate extreme value distributions. To focus on the tail dependence of $\xi$, we standardize its margins and formulate the assumption in terms of $X$.


\subsection{Motivation of the structured Hüsler--Reiss model}
\label{ssec:Y}

To motivate the structured Hüsler--Reiss parameter matrix $\Lambda(\theta)$ in \eqref{eqn:lambda}, we construct a graphical model $Z^*$ that satisfies the global Markov property with respect to the undirected tree $\T = (V, E)$ and such that $Z^* \in D(H_{\Lambda(\theta)})$. Besides serving as a motivation, the auxiliary model $Z^*$ plays another important role: in view of \citet[Theorem~2]{mazo} we are able to project certain asymptotic properties that hold for $Z^*$ to $X$.  

For disjoint subsets $A, B, C$ of $V$, the expression $A\indep_{\T} B\mid C$ means that $C$ separates $A$ from $B$ in $\T$, also called graphical separation, i.e., all paths from $A$ to $B$ pass through at least one vertex in $C$. Let $Z^*$ be defined on a probability space $(\Omega, \mathcal{B}, \P)$. Conditional independence of $Z^*_A$ and $Z^*_B$ given $Z^*_C$ will be denoted by $Z^*_A \indep_{\P} Z^*_B \mid Z^*_C$; here $Z^*_A = (Z^*_a, a \in A)$ and so on. If $P=\P(Z^* \in \,\cdot\,)$ is the law of $Z^*$, we say that the tree $\T$ is an independence map (I-map) of $P$ if for any disjoint subsets $A,B,C$ of $V$ it holds that
\begin{equation} \label{eq:gmp}
A\indep_{\T}B\mid C
\implies 
Z^*_A\indep_{\P}Z^*_B\mid Z^*_C
\end{equation}
\citep{koller_fried}. 
This assumption is equivalent to the assumption that $Z^*$ obeys the global Markov property with respect to $\T$ \citep{laurit}. 
 


The law of the random vector $Z^*=(Z^*_v, v\in V)$ is defined by the following two assumptions: 
\begin{itemize}
	\item[(Z1)] $Z^*$ satisfies the global Markov property~\eqref{eq:gmp} with respect to the undirected tree $\T = (V, E)$;
	\item[(Z2)] every pair of variables $(Z^*_a, Z^*_b)$ on adjacent nodes $(a,b)=e\in E$ has a bivariate Hüsler--Reiss distribution with parameter $\theta_e \in (0, \infty)$ and unit-Fréchet margins, i.e., the special case of \eqref{eq:stdf_hr} with $U = \{a, b\}$ and $\lambda_{ab}^2 = \theta^2_e/4$.
\end{itemize}
The law of $Z^*$ is absolutely continuous and its joint density function factorizes in terms of the bivariate Hüsler--Reiss densities along pairs of variables on adjacent nodes through the Hammersley--Clifford theorem; see Appendix~\ref{app:simu} where we describe how to sample from $Z^*$. Moreover, for $e = (a, b) \in E$ and if $Z$ has distribution $H_{\Lambda(\theta)}$, the law of $(Z^*_a, Z^*_b)$ is the same as the one of $(Z_a, Z_b)$. However, unless $d = 2$, the law of $Z^*$ is itself not max-stable and thus not equal to the one of $Z$. One way to see this is to note that by \citet{pap161}, the law of $Z$ cannot satisfy the global Markov property with respect to $\T$.

Let $(M_e, e \in E)$ be a random vector of independent lognormal random variables with $\ln M_e \sim \mathcal{N}(-\theta^2_e/2, \theta_e^2)$ for each $e \in E$. In view Theorem~1 and Corollary~1 in \citet{mazo}, we have the convergence in distribution
\begin{equation} \label{eqn:YtoXi}
(Z^*_v/Z^*_u, v\in V\setminus u)\mid Z^*_u>x
\dto
(\Xi_{u,v}, v\in V \setminus u)
=\left({\textstyle\prod_{e \in \path{u}{v}}} M_{e}, \, v\in V\setminus u\right),
\qquad x \to \infty,
\end{equation}
for every $u \in V$.
For every $u\in V$ the vector $(\Xi_{u,v}, v\in V\setminus u)$ is called a tail tree. The multiplicative structure in \eqref{eqn:YtoXi} goes back to the theory of extremes of Markov chains due to \citet{smith_1992}, \citet{perfekt}, \citet{yun} and \citet{seg07}. Note that a chain can be seen as a tree with a single branch.

The vector $(\ln \Xi_{u,v},v\in V\setminus u)$ is a linear transformation of a Gaussian random vector and is therefore itself Gaussian. Its mean vector $\mu_{V,u}(\theta)$ and its covariance matrix $\Sigma_{V,u}(\theta)$ have elements
\begin{align}
\label{eqn:muY}
	\{\mu_{V,u}(\theta)\}_v 
	&=-\frac{1}{2}\sum_{e \in \path{u}{v}} \theta_{e}^2\,, &v\in V \setminus u, \\
\label{eqn:sigmaY}
	\{\Sigma_{V,u}(\theta)\}_{ij}
	&=\sum_{e \in \path{u}{i} \cap \path{u}{j}} \theta_{e}^2\,,
	&i,j\in V\setminus u\, .
\end{align} 
Hence for every $u\in V$ and as $x\rightarrow\infty$, we have the convergence in distribution
\begin{equation} \label{eqn:limitY}
(\ln Z^*_v-\ln Z^*_u, v\in V\setminus u)\mid Z^*_u>x \dto
(\ln \Xi_{u,v}, v\in V \setminus u)\sim \mathcal{N}_{|V \setminus u|}\big(\mu_{V,u}(\theta), \Sigma_{V,u}(\theta)\big), 
\end{equation}
where $\mathcal{N}_p$ is the $p$-variate normal distribution. By construction, $\Sigma_{V,u}(\theta)$ is a covariance matrix and hence positive semi-definite for any $\theta\in (0,\infty)^{d-1}$; it is actually positive definite since the vector $(\ln \Xi_{u,v}, v \in V \setminus u)$ is the result of an invertible linear transformation applied to the vector $(\ln M_e, e \in E)$ of independent and non-degenerate normal random variables. 
The matrix $\Sigma_{V,u}(\theta)$ is moreover the same as the matrix $\Gamma_{W,u}(\Lambda)$ in \eqref{eq:hrdist} with $W = V$ and $\Lambda = \Lambda(\theta)$ in \eqref{eqn:lambda}: 
%
%
\begin{align}
\nonumber
\big\{\Sigma_{V,u}(\theta)\big\}_{ij}
&=
\sum_{e \in \path{u}{i} \cap \path{u}{j}}\theta_e^2
=
\frac{1}{2}\left(\sum_{e \in \path{u}{i}}\theta_e^2
+
\sum_{e \in \path{u}{j}} \theta_e^2
-
\sum_{e \in \path{i}{j}}\theta_e^2\right)
\\&=
2(\lambda_{iu}^2+\lambda_{ju}^2-\lambda_{ij}^2)
=\big\{\Gamma_{V,u}\big(\Lambda(\theta)\big)\big\}_{ij}\,,
\qquad i,j\in V\setminus u.
\label{eq:Sigma2Gamma}
\end{align}
In the second equality it is needed to divide by two because the parameters on shared edges are added twice. 
In addition, the Hüsler--Reiss parameters $\lambda_{uv}^2$ are proportional to the means: 
\begin{equation}
\label{eq:mu2lambda}
    2\lambda_{uv}^2
    =\frac{1}{2}\sum_{e \in \path{u}{v}} \theta_e^2
    = -\{\mu_{V,u}(\theta)\}_v \,, \qquad v\in V\setminus u.
\end{equation}
\begin{prop} \label{prop:Ydomain}
	Let $\T = (V, E)$ be a tree.
	If the law of $Z^* = (Z^*_v, v \in V)$ is given by (Z1)--(Z2) above, then $Z^* \in D(H_{\Lambda(\theta)})$ with $\Lambda(\theta)$ in \eqref{eqn:lambda}.
\end{prop}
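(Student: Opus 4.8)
My plan is to prove the claim by showing that the stable tail dependence function of $Z^*$ coincides with that of $H_{\Lambda(\theta)}$ given in \eqref{eq:stdf_hr}, and then invoking the standard equivalence between max-domain of attraction and stdf convergence. Concretely, every margin of $Z^*$ is unit-Fréchet by (Z2), so $1 - \P(Z^*_v \le s) = 1 - e^{-1/s} \sim 1/s$ as $s \to \infty$, and using $\P(\max_{i\le n} Z^*_{v,i} \le n z_v, v \in V) = [\P(Z^*_v \le n z_v, v \in V)]^n$ together with the tail interpretation of $l$ in \eqref{eq:stdf}, it suffices to establish that $\lim_{t\to\infty} t\,\P(\max_{v\in V} Z^*_v x_v > t) = l_V(x)$ for every $x \in (0,\infty)^d$, where $l_V$ denotes the right-hand side of \eqref{eq:stdf_hr} for $U = V$ and $\Lambda = \Lambda(\theta)$.

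The first key step is to decompose the exceedance event according to the (almost surely unique, since the law of $Z^*$ is absolutely continuous) coordinate attaining the maximum. Up to a null set of ties, $\{\max_v Z^*_v x_v > t\}$ is the disjoint union over $u \in V$ of the events $\{Z^*_u x_u > t,\ Z^*_u x_u \ge Z^*_v x_v \text{ for all } v \ne u\}$, so I would write
\begin{equation*}
t\,\P\!\left(\max_{v\in V} Z^*_v x_v > t\right)
= \sum_{u\in V} t\,\P\!\left(Z^*_u > t/x_u,\ \frac{Z^*_v}{Z^*_u} \le \frac{x_u}{x_v} \text{ for all } v \in V\setminus u\right).
\end{equation*}
Each summand factorizes as $t\,\P(Z^*_u > t/x_u)$ times the conditional probability $\P\bigl(Z^*_v/Z^*_u \le x_u/x_v \ \forall v \ne u \mid Z^*_u > t/x_u\bigr)$. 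The second key step is to pass to the limit in both factors: the unit-Fréchet tail yields $t\,\P(Z^*_u > t/x_u) = t\,(1 - e^{-x_u/t}) \to x_u$, while the tail-tree convergence \eqref{eqn:YtoXi} yields that the conditional probability converges to $\P(\Xi_{u,v} \le x_u/x_v \text{ for all } v \in V\setminus u)$, the orthant $\{w : w_v \le x_u/x_v\}$ being a continuity set for the continuous (log-Gaussian) law of the tail tree.

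It then remains to evaluate the limiting probabilities. By \eqref{eqn:limitY} the vector $(\ln\Xi_{u,v}, v \in V\setminus u)$ is Gaussian with mean $\mu_{V,u}(\theta)$ and covariance $\Sigma_{V,u}(\theta)$, so
\begin{equation*}
\P\!\left(\Xi_{u,v} \le x_u/x_v,\ v\in V\setminus u\right)
= \Phi_{|V\setminus u|}\!\left(\ln\frac{x_u}{x_v} - \{\mu_{V,u}(\theta)\}_v,\ v\in V\setminus u;\ \Sigma_{V,u}(\theta)\right),
\end{equation*}
and substituting the identifications $-\{\mu_{V,u}(\theta)\}_v = 2\lambda_{uv}^2$ from \eqref{eq:mu2lambda} and $\Sigma_{V,u}(\theta) = \Gamma_{V,u}(\Lambda(\theta))$ from \eqref{eq:Sigma2Gamma} turns the sum over $u$ into exactly $l_V(x)$ in \eqref{eq:stdf_hr}, which completes the argument. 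The step I expect to require the most care is the passage to the limit inside the (finite) sum over $u$: one must justify the factorization of each joint exceedance probability, confirm that the relevant orthants are continuity sets of the tail-tree law (guaranteed by the positive definiteness of $\Sigma_{V,u}(\theta)$ noted after \eqref{eqn:limitY}), and handle the degenerate cases $x_v = 0$ via the monotone limit interpretation already adopted in \eqref{eq:stdf} and \eqref{eq:stdf_hr}.
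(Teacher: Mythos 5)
Your proof is correct, but it follows a genuinely different route from the one in the paper. The paper also reduces the claim to identifying the stable tail dependence function of $Z^*$ with \eqref{eq:stdf_hr}, but it then expands $t\,\P(\bigcup_v\{Z^*_v > t/x_v\})$ by inclusion--exclusion over all non-empty $W \subseteq V$; for each $W$ it invokes \eqref{eqn:YtoXi} together with Theorem~2 of \citet{mazo} to obtain $x_u\,\P(\zeta\,\Xi_{uv} > x_u/x_v,\ v \in W\setminus u)$ with an auxiliary independent unit-Pareto factor $\zeta$, integrates $\zeta$ out to get $\E[\min\{x_v\Xi_{uv},\ v\in W\}]$, and finally recognizes the resulting sum as the Hüsler--Reiss exponent function by appealing to Eqs.~(3.5)--(3.6) of \citet{hr} and Remark~2.5 of \citet{nikoloulopoulos+j+l:2009}. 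You instead partition the exceedance event $\{\max_v Z^*_v x_v > t\}$ according to the coordinate attaining the maximum, which is almost surely unique because the law of $Z^*$ is absolutely continuous; this yields only $d$ terms, each of which factors into $t\,\P(Z^*_u > t/x_u) \to x_u$ and a conditional orthant probability that converges, by \eqref{eqn:YtoXi} and the continuity-set property of orthants under the non-degenerate Gaussian law of $(\ln\Xi_{u,v}, v \in V\setminus u)$, directly to the $u$-th summand of \eqref{eq:stdf_hr} after substituting \eqref{eq:mu2lambda} and \eqref{eq:Sigma2Gamma}. Your route buys a shorter computation (a sum over $d$ argmax events rather than $2^d-1$ subsets), dispenses with the auxiliary Pareto variable $\zeta$ and the external identification of the cdf, and lands term-by-term on \eqref{eq:stdf_hr}; its only extra obligations are the ones you already flag, namely the null-probability of ties and the continuity-set check, both of which are guaranteed by the absolute continuity of the law of $Z^*$ and the positive definiteness of $\Sigma_{V,u}(\theta)$ noted after \eqref{eqn:limitY}. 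The paper's route, by contrast, computes the full collection of joint survival limits $\lim_t t\,\P(Z^*_v > t/x_v,\ v \in W)$, which is slightly more information than needed here but follows the standard exponent-measure calculus.
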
 
The proof is given in Appendix~\ref{app:proofprop} and relies on the properties of $Z^*$ mentioned above, in particular on~\eqref{eqn:limitY}.
By constructing a graphical model with respect to $\T$ in the max-domain of attraction of $H_{\Lambda(\Theta)}$, we have argued that the latter is a sensible dependence model for extremes of graphical models on trees.
Moreover, it follows that any random vector $X = (X_v, v \in V)$ with unit-Pareto margins and in the max-domain of attraction of $H_{\Lambda(\theta)}$ shares property~\eqref{eqn:limitY} with $Z^*$.

\begin{cor}
	\label{cor:lnXdiffN}
	Let $\T = (V, E)$ be a tree and let $X=(X_v, v\in V)$ have unit-Pareto margins and belong to $D(H_{\Lambda(\theta)})$ with $\Lambda(\theta)$ as in \eqref{eqn:lambda} for a vector $\theta = (\theta_e, e \in E)$ of positive scalars.
	 Then for every $u\in V$, we have
	\begin{equation} \label{eq:XlimN}
	(\ln X_v-\ln X_u,{v\in V\setminus u}) \mid X_u>t
	\dto
	\mathcal{N}_{|V \setminus u|}\bigl(
	\mu_{V,u}(\theta), 
	\Sigma_{V,u}(\theta)
	\bigr), \qquad t \to \infty. 
	\end{equation}
\end{cor}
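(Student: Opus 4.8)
The plan is to exploit the fact that the limiting conditional law in \eqref{eq:XlimN} depends on $X$ only through its max-stable attractor $H_{\Lambda(\theta)}$, and then to import the answer already obtained for $Z^*$ in \eqref{eqn:limitY}. First I would restate the hypothesis $X \in D(H_{\Lambda(\theta)})$ in the language of multivariate regular variation: because the margins of $X$ are unit-Pareto, $X \in D(H_{\Lambda(\theta)})$ is equivalent to the vague convergence $t\,\P(X/t \in \,\cdot\,) \to \nu$ on $[0,\infty)^d \setminus \{0\}$ as $t \to \infty$, where $\nu$ is the exponent measure of $H_{\Lambda(\theta)}$ \citep[Chapter~5]{resnick}. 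The decisive feature is that $\nu$ is determined by the attractor alone, not by any further property of $X$.

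Second, I would rewrite the conditional cdf of interest as a ratio of two $\nu$-masses. For a fixed threshold vector $s = (s_v, v \in V \setminus u) \in \mathbb{R}^{V \setminus u}$, put
\[
A_s = \bigl\{ y \in [0,\infty)^d : y_v \le y_u \, e^{s_v} \text{ for all } v \in V \setminus u, \ y_u > 1 \bigr\},
\]
a Borel set bounded away from the origin. The event $\{\ln X_v - \ln X_u \le s_v \ \forall v \neq u\} \cap \{X_u > t\}$ coincides with $\{X/t \in A_s\}$, whence
\[
\P\bigl(\ln X_v - \ln X_u \le s_v, \ v \in V\setminus u \bigm| X_u > t\bigr) = \frac{t\,\P(X/t \in A_s)}{t\,\P(X_u > t)}.
\]
The denominator equals $1$ for every $t \ge 1$ because $X_u$ is exactly unit-Pareto, while the numerator converges to $\nu(A_s)$ as $t \to \infty$ as soon as $A_s$ is a $\nu$-continuity set. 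Thus the conditional cdf converges to $\nu(A_s)$, a functional of $\nu$ only.

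Third, the transfer step concludes the proof. By Proposition~\ref{prop:Ydomain} the graphical model $Z^*$ also lies in $D(H_{\Lambda(\theta)})$ and hence has the very same exponent measure $\nu$. Repeating the computation of the previous paragraph for $Z^*$ in place of $X$---with the only change that its $u$-margin is unit-Fréchet, so that $t\,\P(Z^*_u > t) \to 1$ rather than being identically $1$---identifies the conditional limit for $Z^*$ as the same map $s \mapsto \nu(A_s)$. But \eqref{eqn:limitY} already tells us that this limit is the cdf of $\mathcal{N}_{|V \setminus u|}\bigl(\mu_{V,u}(\theta), \Sigma_{V,u}(\theta)\bigr)$. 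Since $X$ and $Z^*$ share the limit functional $\nu(A_s)$, the Gaussian limit \eqref{eq:XlimN} follows for $X$ as well.

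I expect the only delicate point to be the continuity-set requirement $\nu(\partial A_s) = 0$ underlying the numerator's convergence. The boundary of $A_s$ lies in the hyperplanes $\{y_v = y_u e^{s_v}\}$ together with $\{y_u = 1\}$; because $H_{\Lambda(\theta)}$ is absolutely continuous, its exponent measure has a density on $(0,\infty)^d$ and these Lebesgue-null hyperplanes are $\nu$-null, so every $A_s$ is a continuity set. Alternatively, and self-containedly, one can avoid any fine property of $\nu$: writing $p_t(s)$ for the conditional cdf, the standard bounds $\nu(A_s^{\circ}) \le \liminf_{t\to\infty} p_t(s) \le \limsup_{t\to\infty} p_t(s) \le \nu(\overline{A_s})$ hold for both $X$ and $Z^*$, and since the $Z^*$-limit from \eqref{eqn:limitY} is a \emph{continuous} (nondegenerate Gaussian) cdf, the inclusions $A_{s'} \subseteq A_s^{\circ} \subseteq \overline{A_s} \subseteq A_{s''}$ for $s' < s < s''$ together with continuity of that limit force $\nu(A_s^{\circ}) = \nu(\overline{A_s})$ for every $s$, closing the argument for $X$.
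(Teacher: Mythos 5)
Your proof is correct and follows essentially the same route as the paper's: both arguments rest on the observation that the conditional limit law of the log-ratios is a functional of the exponent measure of $H_{\Lambda(\theta)}$ alone, so that the Gaussian limit already established for $Z^*$ in \eqref{eqn:limitY} transfers to any $X$ with the same attractor, the only difference being that you carry out by hand (via the sets $A_s$ and the ratio of $\nu$-masses) the step the paper delegates to \citet[Theorem~2]{mazo}. One small slip in your backup argument: the inclusion $\overline{A_s} \subseteq A_{s''}$ fails on the slice $\{y : y_u = 1\}$, but that slice is $\nu$-null because the unit-Pareto margin forces $\nu(\{y : y_u > x\}) = 1/x$, so the sandwich closes anyway.
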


\begin{proof}
	The max-domain of attraction condition $X \in D(H_{\Lambda(\theta)})$ is known to be equivalent to convergence of the measures $t \, \P(X/t \in \,\cdot\,)$ as $t \to \infty$ to the exponent measure of $H_{\Lambda(\theta)}$ \citep[Proposition~5.17]{resnick}. Such measure convergence is in turn equivalent to convergence in distribution of $X / X_u \mid X_u > t$ as $t \to \infty$ for every $u \in V$ to a limit that can be written in terms of the said exponent measure \citep[Theorem~2]{mazo}. But for $X$ replaced by $Z^*$, the limiting conditional distribution was found to be a certain multivariate lognormal distribution in \eqref{eqn:YtoXi}. The equivalence between \eqref{eqn:YtoXi} and \eqref{eqn:limitY} with $Z^*$ replaced by $X$ is clear by the continuous mapping theorem.
\end{proof}

The random vector $X$ in Corollary~\ref{cor:lnXdiffN} does not need to be a  graphical model with respect to $\T$. The convergence in~\eqref{eq:XlimN} appears in \citet[Theorem~2]{engelke} for a general random vector with standardized margins and in the max-domain of attraction of a Hüsler--Reiss distribution. With Corollary~\ref{cor:lnXdiffN} we arrive at the same result but through the properties of the auxiliary model $Z^*$. The convergence in~\eqref{eq:XlimN} is used to build two estimators in the next section.

In \citet{engelke+h:2020}, a notion of conditional independence different from the classical one is introduced in the context of multivariate Pareto distributions. When specialized to the Pareto distribution associated to a max-stable Hüsler--Reiss distribution, it yields certain restrictions on the Hüsler--Reiss parameter matrix $\Lambda$. In case the conditional independence relations are the ones induced by a tree through graphical separation, the structure of the parameter matrix is the same as the one in \eqref{eqn:lambda}. We explain the connection in Appendix~\ref{app:eh}. Here we just emphasize that in \citet{engelke+h:2020}, no graphical model in the classical sense of the term is constructed that belongs to the max-domain of attraction of $H_{\Lambda(\theta)}$. The way we arrive at the structure of $\Lambda(\theta)$ via the graphical model $Z^*$ in (Z1)--(Z2) is thus entirely different from their approach.

Finally, quite another tree-induced structure of the Hüsler--Reiss parameter matrix is proposed in \citet{joe}. We provide a comparison in Appendix~\ref{app:LeeJoe}.

\section{Latent variables and parameter identifiability}
\label{sec:latent}

A typical application of our model arises in relation to quantities measured on river networks that have a tree-like structure. 
It is natural to associate a node to an existing measurement station or to locations where two river channels meet (junction) or one channel splits (split) even if there is no measurement station there. Stations are supposed to generate data for the quantity of interest, so for any node associated to a station there is a corresponding variable. In practice, junctions/splits may lack measurements, and this means that there are nodes in the tree with latent variables. Nodes with latent variables are those labelled~2 and~5 in the Seine network in Figure~\ref{fig:seine}. 

A naive approach to the presence of latent variables would be to ignore them, that is, to remove the corresponding nodes and all edges incident to them. This will yield a disconnected graph, making it necessary to add edges in some arbitrary way so as to obtain a tree again.
In Figure~\ref{fig:ignore_node} for instance, if node $2$ is suppressed, there are three possible ways to reconnect the remaining nodes and form a tree. Each implies a different structured Hüsler--Reiss parameter matrix and thus a different dependence model.

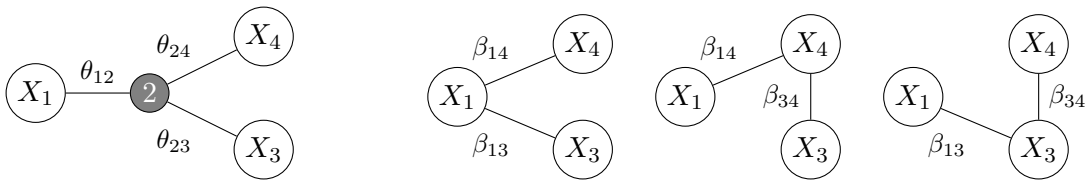
\begin{figure}[h]
	\centering
	\begin{tikzpicture}
	\node[hollow node](1){$X_1$}
	child[grow=right]{node [hollow node, fill=gray] (a) {\textcolor{white}{$2$}}
		child {node [hollow node] (a1) {$X_3$}}
		child {node [hollow node] (a2) {$X_4$}}
	};
	\path (1) -- (a) node [midway,auto=left] {\small $\theta_{12}$};
	\path (a) -- (a1) node [midway,auto=right] {\small $\theta_{23}$};
	\path (a) -- (a2) node [midway,auto=left] {\small $\theta_{24}$};
	\end{tikzpicture}
	\hspace{1.5cm}
	\begin{tikzpicture}
	\node[hollow node] (0) at (0,0) {$X_1$};
	\node[hollow node] (1) at (1.65,0.7)  {$X_4$};
	\node[hollow node] (2) at (1.65,-0.7)  {$X_3$};
	\path (0) edge (1)  ;
	\path (0) edge (2);
	\path (0) -- (1) node [midway,auto=left] {\small $\beta_{14}$};
	\path (0) -- (2) node [midway,auto=right] {\small $\beta_{13}$};
	\end{tikzpicture} 
	\hspace{0.3cm}
	\begin{tikzpicture}
	\node[hollow node] (0) at (0,0) {$X_1$};
	\node[hollow node] (1) at (1.65,0.7)  {$X_4$};
	\node[hollow node] (2) at (1.65,-0.7)  {$X_3$};
	\path (0) edge (1)  ;
	\path (1) edge (2);
	\path (0) -- (1) node [midway,auto=left] {\small $\beta_{14}$};
	\path (1) -- (2) node [midway,auto=right] {\small $\beta_{34}$};
	\end{tikzpicture} 
	\hspace{0.3cm}
	\begin{tikzpicture}
	\node[hollow node] (0) at (0,0) {$X_1$};
	\node[hollow node] (1) at (1.65,0.7)  {$X_4$};
	\node[hollow node] (2) at (1.65,-0.7)  {$X_3$};
	\path (0) edge (2)  ;
	\path (1) edge (2);
	\path (0) -- (2) node [midway,auto=right] {\small $\beta_{13}$};
	\path (1) -- (2) node [midway,auto=left] {\small $\beta_{34}$};
	\end{tikzpicture} 
	\caption{The first tree from the left has four nodes where node~2 has a latent variable. If node~2 is suppressed, there are three possible ways to reconnect the three remaining nodes into a tree again. 
	}
	\label{fig:ignore_node}
\end{figure}

In this paper we do not modify the original tree but take the latent variables into account.
Let $\T = (V, E)$ be an undirected tree and consider the Hüsler--Reiss distribution~\eqref{eq:mvhr_short} with parameter matrix $\Lambda = \Lambda(\theta)$ in~\eqref{eqn:lambda}. When there are nodes with latent variables, the question is whether it is still possible to identify the $d-1$ free edge parameters $\theta_e$ from the distribution of the subvector of observable variables only. Let $U \subseteq V$ denote the set of indices of the observable variables. On the one hand, Eq.~\eqref{eq:XlimN} implies
\begin{equation} \label{eq:XUlimNU}
	(\ln X_v-\ln X_u)_{v\in U\setminus u} \mid X_u>t \dto
	\mathcal{N}_{|U\setminus u|}\bigl(
		\mu_{U,u}(\theta), 
		\Sigma_{U,u}(\theta)
	\bigr), 
	\qquad t \to \infty, 
\end{equation}
with $\mu_{U,u}(\theta)$ and $\Sigma_{U,u}(\theta)$ as in~\eqref{eqn:muY} and~\eqref{eqn:sigmaY} but with $V$ replaced by $U$. On the other hand, $\mu_{U,u}(\theta)$ and $\Sigma_{U,u}(\theta)$ together determine the stdf $l_U$ of the subvector $X_U$ in~\eqref{eq:stdf_hr} through the identities \eqref{eq:Sigma2Gamma} and~\eqref{eq:mu2lambda}. The question is thus whether the parameter vector $\theta$ is still identifiable from the $|U \setminus u|$-variate normal distributions on the right-hand side of~\eqref{eq:XUlimNU}, where $u$ ranges over $U$.

\paragraph{Example.} 
Let $X=(X_a,X_b,X_c)$ have unit-Pareto margins and suppose that $X \in D(H_{\Lambda(\theta)})$ where $\Lambda(\theta)$ is as in \eqref{eqn:lambda} with respect to the chain tree $\T$ with nodes $V = \{a, b, c\}$ and edges between $a$ and $b$ and between $b$ and $c$. Since a parameter is linked to each (undirected) edge of the graph, the parameter vector is $\theta=(\theta_{ab}, \theta_{bc})$. Suppose the variable $X_b$ is latent. By~\eqref{eq:XUlimNU} we have 
\[
    \ln X_c-\ln X_a \mid X_a>t
    \dto
    \mathcal{N}\bigl(-(\theta_{ab}^2 + \theta_{bc}^2)/2, (\theta_{ab}^2 + \theta_{bc}^2)\bigr),
    \qquad t \to \infty.
\]
It is clear that from the limiting normal distribution, we cannot identify $\theta_{ab}$ and $\theta_{bc}$.
\smallskip

In this section it is shown that as long as all nodes with missing variables have degree at least three, the parameters associated to the Hüsler--Reiss distribution of the full vector are still identifiable and hence there is no need to change the tree. To this end, note that by~\eqref{eqn:muY},~\eqref{eq:Sigma2Gamma} and~\eqref{eq:mu2lambda} with $V$ replaced by $U \subseteq V$ such that $u \in U$, the mean vectors $\mu_{U,u}(\theta)$ and covariance matrices $\Sigma_{U,u}(\theta)$ are determined completely by the path sums
\begin{equation} \label{eq:sumpath}
	p_{ab} = \sum_{e \in \path{a}{b}}\theta_e^2 = 4 \lambda_{ab}^2,
	\qquad a, b \in U,
\end{equation}
and that, vice versa, the values of these path sums are determined by the vectors $\mu_{U,u}(\theta)$ and the matrices $\Sigma_{U,u}(\theta)$. If we know the distribution of $X_U = (X_u, {u \in U})$, we can compute the values of these sums, and if we know these sums, we can compute the stdf $l_U$ of $X_U$. The question is thus whether or not the edge parameters $\theta_e$ are identifiable from the values of the path sums $p_{ab}$ for $a, b \in U$. According to the following proposition, there is a surprisingly simple criterion to decide whether this is the case or not. 

\begin{prop} \label{prop:identif}
	Let $\T = (V, E)$ be an undirected tree and let $X = (X_v, {v \in V})$ have unit Pareto margins and be in the max-domain of attraction of the structured Hüsler--Reiss distribution $H_\Lambda$ in~\eqref{eq:mvhr_short} with parameter matrix $\Lambda = \Lambda(\theta)$ in~\eqref{eqn:lambda}. Let $U \subseteq V$ be the set of nodes corresponding to the observable variables. The parameter vector $\theta$ is identifiable from $X_U = (X_u,{u \in U})$ if and only if every node $u \in V \setminus U$ has degree at least three.
\end{prop}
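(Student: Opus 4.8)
The plan is to reduce the statement to a purely combinatorial fact about path sums and then prove the two implications separately. As noted just before the proposition, the distribution of $X_U$ determines, and is determined by, the collection of path sums $p_{ab} = \sum_{e \in \path{a}{b}} \theta_e^2$ for $a, b \in U$. Writing $\gamma_e := \theta_e^2 \in (0,\infty)$, each $p_{ab}$ is a \emph{linear} functional of the vector $\gamma = (\gamma_e)_{e \in E}$, and since $\theta_e \mapsto \theta_e^2$ is a bijection of $(0,\infty)$, the parameter $\theta$ is identifiable from $X_U$ if and only if the linear map $\gamma \mapsto (p_{ab})_{a,b \in U}$ is injective. I would therefore prove that this map is injective if and only if every node of $V \setminus U$ has degree at least three.

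\textbf{Necessity.} Suppose some latent node $w \in V \setminus U$ has degree at most two, and I exhibit a nonzero perturbation of $\gamma$ leaving every $p_{ab}$, $a,b \in U$, unchanged. If $\deg(w) = 1$, its unique incident edge $e$ lies on no path $\path{a}{b}$ with $a,b \in U$ (such a path would have to terminate at the leaf $w \notin U$), so $\gamma_e$ appears in no $p_{ab}$ and may be altered freely. If $\deg(w) = 2$ with incident edges $e_1, e_2$, then any path between two nodes of $U$ through $w$ must use both $e_1$ and $e_2$ (it enters and leaves $w$, which is not an endpoint since $w \notin U$); hence $\gamma_{e_1}, \gamma_{e_2}$ enter every $p_{ab}$ only through the sum $\gamma_{e_1} + \gamma_{e_2}$, and the shift $(\gamma_{e_1}, \gamma_{e_2}) \mapsto (\gamma_{e_1} + \varepsilon, \gamma_{e_2} - \varepsilon)$ fixes all path sums. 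In either case the map is not injective.

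\textbf{Sufficiency.} Assume every latent node has degree at least three; in particular every leaf of $\T$ is observable. The key tool is to read $p$ as a tree metric and use the three-point (median) formula: for any $a, b, c$ whose connecting paths meet at a common node $m$,
\[
	d(a, m) := \sum_{e \in \path{a}{m}} \gamma_e = \tfrac{1}{2}\bigl(p_{ab} + p_{ac} - p_{bc}\bigr).
\]
A supporting lemma I would establish first, by a longest-path argument, is that deleting any node of degree at least three splits $\T$ into components each containing a leaf, hence an observable node. Granting this, I recover $\gamma_e = d(x,y)$ for every edge $e = (x,y)$ as follows. If $x, y \in U$, then $\gamma_e = p_{xy}$. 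If $y \notin U$, its degree being at least three lets me choose observable $b, c$ in two branches at $y$ other than the one containing $x$, making $y$ the median of $(x,b,c)$, so $\gamma_e = \tfrac12(p_{xb}+p_{xc}-p_{bc})$. If both endpoints are latent, I pick observable $a$ (respectively $b$) in a branch at $x$ (respectively $y$) pointing away from $e$, so that $p_{ab} = d(a,x) + \gamma_e + d(y,b)$, and then express both $d(a,x)$ and $d(y,b)$ via the median formula applied to triples of observable nodes supplied by the extra branches at $x$ and $y$. Writing each $\gamma_e$ as an explicit signed combination of path sums with observable arguments shows that $\gamma$, and hence $\theta$, is determined.

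The main obstacle is the sufficiency direction, specifically the edges incident to latent nodes: the argument hinges on locating a latent endpoint exactly as the median of a well-chosen triple of observable nodes, which is possible only because the degree-at-least-three hypothesis supplies at least three distinct branches, each reaching an observable leaf. The delicate bookkeeping is in verifying that the chosen observable nodes lie in the intended branches so that the median coincides with the target node; once this combinatorial selection is pinned down, the recovery of each $\gamma_e$ is a routine application of the median formula.
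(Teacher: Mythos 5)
Your proposal is correct and follows essentially the same route as the paper's proof: necessity by exhibiting the non-identifiability at latent nodes of degree one and two, and sufficiency by recovering each $\theta_e^2$ from the three-point (median) identity $\tfrac12(p_{ab}+p_{ac}-p_{bc})$ applied to observable nodes found in three distinct branches at each latent endpoint. The only (immaterial) difference is the bookkeeping for edges with two latent endpoints, where the paper subtracts $p_{v\hat v}$ from $p_{u\hat v}$ rather than peeling $d(a,x)$ and $d(y,b)$ off $p_{ab}$.
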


\begin{proof}
    \emph{Necessity.} Assume that the elements of the parameter $\theta\in (0,\infty)^{d-1}$ are uniquely identifiable. Let $\bar{U}=V\setminus U\neq \varnothing$ be the set of nodes with latent variables. We need to show that every $v \in \bar{U}$ has degree $d(v)$ at least $3$. We will do this by contraposition.
    As a tree is connected by definition, there cannot be a node of degree zero.
     
    First, assume there is $v\in \bar{U}$ such that $d(v)=1$. The node $v$ must be a leaf node, and in this case there is no path $\path{a}{b}$ with $a, b \in U$ that passes by $v$, and thus $\theta_{uv}^2$, with $u$ the unique neighbor of $v$, does never appear in the sum \eqref{eq:sumpath}. Hence $\theta_{uv}$ is not identifiable, which is a contradiction to the assumption.

    Second, assume there exists $v\in \bar{U}$ with $d(v)=2$. Then $v$ has exactly two neighbors, $i$ and $j$, say. Every path sum $p_{ab}$ for $a,b\in U$ will contain either the sum of the squared parameters, $\theta_{iv}^2 + \theta_{jv}^2$, or neither of these. Hence, the individual edge parameters $\theta_{iv}$ and $\theta_{jv}$ are not identifiable, yielding a contradiction. (This generalizes the example given before the statement of the proposition.)

    \emph{Sufficiency.} Assume that all nodes with latent variables are of degree three or more. Let $e = (u, v) \in E$. We will find a linear combination of the path sums \eqref{eq:sumpath} equal to $\theta_{uv}^2$. 
	
	If $u, v \in U$, then the one-edge path sum $p_{uv} = \theta_{uv}^2$ already meets the condition.
	
	Suppose that $u\in \bar{U}$. By assumption, $u$ has at least two other neighbors besides $v$, say $w$ and $x$. If $v\in U$, then put $\hat{v} = v$. Otherwise, start walking at $v$ away from $u$ until you encounter the first visible node, say $\hat{v} \in U$. There must always be such a node, since $V$ is finite and since all leaves are observable by assumption. Similarly, let $\hat{w} \in U$ and $\hat{x} \in U$ be the first visible nodes encountered when walking away from $u$ and starting in $w$ and $x$, respectively. Note that $\hat v$, $\hat w$, and $\hat x$ are all different since otherwise the graph would contain a non-trivial cycle, which is not possible in the case of a tree. We can thus observe the sums
	\begin{align*}
		p_{\hat{v}\hat{w}}
		&=
		p_{\hat{v}u} + p_{u\hat{w}}\, ,\\
		p_{\hat{v}\hat{x}}
		&=
		p_{\hat{v}u} + p_{u\hat{x}}\, ,\\
		p_{\hat{w}\hat{x}}
		&=
		p_{\hat{w}u} + p_{u\hat{x}}\, .
	\end{align*}
	Since $p_{yz} = p_{zy}$ for every $y,z\in V$, the previous identities constitute three linear equations in three unknowns that can be solved explicitly, producing the values of $p_{u\hat{v}},\, p_{u\hat{w}},\, p_{u\hat{x}}$. In particular, summing the first two equations, subtracting the third, and dividing by two, we find
	\[
		p_{u\hat{v}}
		=
		\tfrac{1}{2} p_{\hat{v}\hat{w}}
		+ \tfrac{1}{2} p_{\hat{v}\hat{x}}
		- \tfrac{1}{2} p_{\hat{w}\hat{x}}\, .
	\]
	
	If $v \in U$, then $v = \hat{v}$, and $\path{u}{v} = \{ e \}$, so that the above equation shows how to combine path sums in a linear way to extract $p_{uv} = \theta_{e}^2$.
		
	If $v \not\in U$, then we can repeat the same procedure with $u$ replaced by $v$. The result is a formula expressing $p_{v\hat{v}}$ as a linear combination of three visible path sums. Now since
	\[
		\theta_{e}^2 
		= p_{u\hat{v}} - p_{v\hat{v}}\, ,
	\]
	we have found a way to extract $\theta_{e}^2$ by a linear combination of at most six visible path sums.
\end{proof}

The proof of Proposition~\ref{prop:identif} consists in solving the equations \eqref{eq:sumpath} with $p_{ab}$ as known and $\theta_e^2$ as unknown. Clearly, this is a linear system of equations and the question is thus whether the coefficient matrix defining the system has full column rank. It is an open question how to write down this matrix, which contains only zeroes and ones, in terms of the tree's adjacency matrix in such a way that an algebraic criterion on the latter matrix can be formulated.

The identifiability criterion in Proposition~\ref{prop:identif} allows nodes with latent variables to be adjacent and still counting in the computation of each other's degree. Consider for instance the following tree:
\begin{center}
\begin{tikzpicture}[scale=0.5]
	\node[hollow node] (1) at (4, 1.73) {$X_1$};
	\node[hollow node] (2) at (4, -1.73) {$X_2$};
	\node[hollow node] (3) at (-4, -1.73) {$X_3$};
	\node[hollow node] (4) at (-4, 1.73) {$X_4$};
	\node[hollow node, fill=gray] (5) at (1.5, 0) {\textcolor{white}{5}};
	\node[hollow node, fill=gray] (6) at (-1.5, 0) {\textcolor{white}{6}};
	\path (1) edge (5);
	\path (1) -- (5) node [midway,auto=right] {\small $\theta_{15}$};
	\path (2) edge (5);
	\path (2) -- (5) node [midway,auto=left] {\small $\theta_{25}$};
	\path (3) edge (6);
	\path (3) -- (6) node [midway,auto=right] {\small $\theta_{36}$};
	\path (4) edge (6);
	\path (4) -- (6) node [midway,auto=left] {\small $\theta_{46}$};
	\path (5) edge (6);
	\path (5) -- (6) node [midway,auto=right] {\small $\theta_{56}$};
\end{tikzpicture}
\end{center}
The variables at the adjacent nodes $5$ and $6$ are latent. Both nodes have degree three and each of the five edge parameters $\theta_e$ can be solved from the path sums $p_{ab}$ between nodes $a, b \in \{1,\ldots,4\}$.

The previous example may give the impression that for the identifiability criterion to hold it is actually enough that all variables on leaf nodes are observable. Although the latter property is indeed necessary, it is not sufficient, as illustrated by the example before Proposition~\ref{prop:identif}.

\section{Estimation} \label{section:estim}

Let $\T = (V, E)$ be an undirected tree with nodes $V = \{1,\ldots,d\}$ and let $(\xi_{v,i},{v \in V}, i=1,\ldots, n)$ be an independent random sample from the distribution of $\xi$ satisfying the assumptions in Section~\ref{ssec:introX}.
Further, let $U \subseteq V$ be the set of indices of observable variables and assume that every $u \in V \setminus U$ has degree at least three, so that, by Proposition~\ref{prop:identif}, the Hüsler--Reiss edge parameters $\theta = (\theta_e, e \in E)$ in the definition of $\Lambda(\theta)$ in \eqref{eqn:lambda} are identifiable from the distribution of the subvector $\xi_U = (\xi_v, v \in U)$.

We propose three methods for estimating the parameter vector $\theta$. The first one, called moment estimator (Section~\ref{ssec:MME}), builds upon the one introduced in \citet{engelke}. The second estimator comes from the optimization of a composite likelihood function (Section~\ref{ssec:cle}). The third estimator, finally, is based on bivariate extremal coefficients (Section~\ref{ssec:ece}) and on the method in \citet{eks16}. 
All estimators are functions of the subvectors $(\xi_{U,i}) = (\xi_{v,i},{v \in U})$ for $i=1,\ldots, n$ only.

An important remark for this whole section is related to the fact that $X$ as introduced in Section~\ref{ssec:introX} should have unit Pareto margins, obtained after the transformation $X_v=1/(1-F_v(\xi_v))$ where $F_v$ is the marginal distribution function of $\xi_v$ for $v \in V$. It is unrealistic to assume that the functions $F_v$ are known, so in practice we use their empirical versions, $\hat{F}_{v,n}(x)=\big[\sum_{i=1}^n\mathbbm{1}(\xi_{v,i}\leq x)\big]/(n+1)$. The estimates of the edge parameters will then be based upon the sample $\hat{X}_1, \ldots, \hat{X}_n$ with coordinates
\begin{equation*} 
	\hat{X}_{v,i} = \frac{1}{1-\hat{F}_{v,n}(\xi_{v,i})}\,,\qquad 
	v \in U, \quad i = 1, \ldots, n,
\end{equation*}
considered as a random sample from the distribution of $X_U = (X_u,{u \in U})$.

A variable indexed by the double subscript $W,i$ will denote the $i$-th observation of variables on nodes belonging to the set $W \subseteq U$: for instance $\hat{X}_{W,i}=(\hat{X}_{v,i}\,, v\in W)$. Such vectors are taken to be column vectors of length $|W|$. When $W = U$ we just write $\hat{X}_i$.

\subsection{Method of moments estimator}
\label{ssec:MME}

\citet{engelke} introduce an estimator of the matrix $\Lambda$ of the Hüsler--Reiss distribution, based on sample counterparts of the matrices $\Gamma_{W,u}(\Lambda)$ in~\eqref{eq:hrdist}. Relying on \eqref{eq:Sigma2Gamma} with $V$ replaced by $W \subseteq U$, we will apply their method to the vector of observable variables and then add a least-squares step to extract the edge parameters $\theta_e$.

As a starting point we take the result in~\eqref{eq:XUlimNU} and as suggested by \citet{engelke} for given $k\in \{1,\ldots n\}$ we obtain the log-differences
\begin{equation}
\label{eq:Delta}
	\Delta_{uv,i} =
	\ln\hat{X}_{v,i}-\ln\hat{X}_{u,i}\, ,
\end{equation}
for $u, v \in U$ and for $i \in I_{u} = \{i = 1,\ldots,n: \hat{X}_{u,i} > n/k\}$.
The proposed estimators of $\mu_{U,u}$ and $\Sigma_{U,u}$ are respectively the sample mean vector
\begin{equation*} 
    \hat{\mu}_{U,u}
    =
    \frac{1}{|I_u|}\sum_{i\in I_u}(\Delta_{uv,i}, v\in U \setminus u)
\end{equation*}
and the sample covariance matrix
\begin{equation*}
    \hat{\Sigma}_{U,u}
    =
    \frac{1}{|I_u|}\sum_{i\in I_u}(\Delta_{uv,i}-\hat{\mu}_{U,u}, v\in U\setminus u)
    (\Delta_{uv,i}-\hat{\mu}_{U,u}, v\in U\setminus u)^\top\, .
\end{equation*}
To estimate the vector of edge parameters $\theta=(\theta_e, e \in E)$, we propose the least squares estimator
\begin{equation} \label{eqn:510} 
	\hat{\theta}^{\mathrm{MM}}_{n,k}
	=
	\argmin_{\theta\in (0,\infty)^{E}}
	\sum_{u \in U}
	\|\hat{\Sigma}_{U,u}-\Sigma_{U,u}(\theta)\|_F^2\, .
\end{equation}
where $\|\,\cdot\,\|_F$ is the Frobenius norm. In this way, we take advantage of the empirical covariance matrices $\hat{\Sigma}_{U,u}$ for each $u \in U$ and thus of each exceedance set $I_u$.

In~\eqref{eqn:510}, for each $u \in U$, we consider the covariance matrix of the log-differences $\Delta_{uv,i}$ for all $v \in U \setminus u$. However, if $v$ is far away from $u$ in the tree, then the extremal dependence between $\xi_u$ and $\xi_v$ may be weak and the difference $\Delta_{uv,i}$ may carry little information. Therefore, we propose a modified estimator where, for each $u \in U$, we limit the scope to a subset $W_u \subseteq U$ of observable variables indexed by nodes near $u$, producing the estimator 
\begin{equation} \label{finalmm}
	\hat{\theta}^{\mathrm{MM}}_{n,k}
	=
	\argmin_{\theta\in(0,\infty)^{E}}
	\sum_{u\in U} \| \hat{\Sigma}_{W_u, u}-\Sigma_{W_u,u}(\theta) \|_F^2\, .
\end{equation}
Besides being simpler to compute, the modified estimator~\eqref{finalmm} performed better than the one in~\eqref{eqn:510} in Monte Carlo experiments. One possible explanation is that by excluding pairs with weak extremal dependence, the bias of the estimator diminishes.

When choosing the sets $W_u$, care needs to be taken that the parameter vector $\theta$ is still identifiable from the collection of covariance matrices $\Sigma_{W_u,u}(\theta)$ for $u \in U$. 
The set of path sums $p_{ab}$ for $a, b \in U$ in Proposition~\ref{prop:identif} is now reduced to the set of the path sums $p_{ab}$ for $a, b \in W_u$ and $u \in U$. Whether or not these are still sufficient to identify $\theta$ needs to be checked on a case-by-case basis. This issue is illustrated in Appendix~\ref{app:Wu}.

\subsection{Composite likelihood estimator} \label{ssec:cle}

The composite likelihood estimator (CLE) is again based on the result in \eqref{eq:XUlimNU}. This time however we maximize a composite likelihood function with respect to the parameter $\theta$ directly. The composite likelihood function consists of multiplication of likelihoods which are defined on subtrees. 

As for the method of moments estimator in Section~\ref{ssec:MME}, we consider for each $u \in U$ a set $W_u \subseteq U$ of nodes that are close to $u$ in the tree, taking care to include sufficiently many variables so that the edge parameters are still identifiable (Appendix~\ref{app:Wu}). Recall the log-differences $\Delta_{uv,i}$ in \eqref{eq:Delta} and the exceedance set $I_u$ right below \eqref{eq:Delta}. Let $\phi_p(\,\cdot\,;\Sigma)$ be the density function of the centered $p$-variate normal distribution with covariance matrix $\Sigma$. The composite likelihood estimator $\hat{\theta}_{n,k}^{\mathrm{CLE}}$ is the maximizer of the composite likelihood
\[
    L\big(\theta; \, \{\Delta_{uv,i}: v\in W_u\setminus u,\, i\in I_u, u\in U\}\big)
    \\=
    \prod_{u\in U}\prod_{i\in I_u}
    \phi_{|W_u \setminus u|}\bigl(
    	(\Delta_{uv,i})_{v\in W_u} - \mu_{W_u, u}(\theta);
    	\Sigma_{W_u, u}(\theta)
   \bigr).
\]
We aggregate the likelihoods of the different normal distributions for all $u \in U$ treating the samples of log-differences as independent, although they are not. Results from Monte Carlo simulation experiments (Appendix~\ref{app:simu}) show that the performance of the CLE is comparable to the one of the moment estimator and the extremal coefficient estimator. 

Other estimation methods based on locally defined likelihoods are used by \citet{engelke+h:2020} and \citet{joe}. The method of \citet{engelke+h:2020} estimates the parameters associated to each clique separately. For trees this means that there are $d-1$ one-variate likelihood functions to optimize, a problem which is doable even in trees with many nodes. A problem with this estimator is that it is inapplicable if there are latent variables because there will always be an adjacent pair of variables with one of them being an unobservable, and making it impossible to estimate the corresponding edge parameter. 
The estimator of \citet{joe} is based on pairwise likelihoods, which can be any pairs, not only adjacent pairs as in the estimator of \citet{engelke+h:2020}. It is obtained by optimizing the composite likelihood which consists of multiplying the pairwise likelihoods. This estimator is applicable when there are latent variables as long as all possible pairs between the observed variables are included in the composite likelihood function. It is close in spirit to the pairwise extremal coefficients estimator considered next.

\subsection{Pairwise extremal coefficients estimator} \label{ssec:ece}

The pairwise extremal coefficients estimator (ECE), defined for general tail dependence models in \citet{eks16}, is based on the bivariate stable tail dependence function (stdf) in \eqref{eq:stdf_hr}. It minimizes the weighted distance between a non-parametric estimate and the fitted parametric stdf.

Let $l$ be the stdf in \eqref{eq:stdf} and recall that the extremal coefficient associated to a node set $J \subseteq V$ is defined as 
\begin{equation} 
\label{eqn:lJ}
	l(x_J) = l_J(1, \ldots, 1) = \lim_{t \to \infty} t \, \P\left( \max_{j \in J} X_j > t \right),
\end{equation}
where $x_J = (\mathbbm{1}_{\{j \in J\}}, j \in V)$ and where $l_J$ is the stdf of the subvector $X_J$. For the Hüsler--Reiss distribution with parameter matrix $\Lambda$ and for a pair of nodes $J = \{u, v\}$, the bivariate extremal coefficient is just $l_J(1, 1) = 2 \Phi(\lambda_{uv})$, with $\Phi$ the standard normal cdf. In case $\Lambda = \Lambda(\theta)$ in \eqref{eqn:lambda}, the pairwise extremal coefficient depends on the path sum $p_{uv} = \sum_{e \in \path{u}{v}} \theta_e^2$ via
\begin{equation}
\label{eq:EC2}
	l_J(1, 1; \theta) = 2 \Phi(\sqrt{p_{uv}}/2), \qquad J = \{u, v\}.
\end{equation}

The non-parametric estimator of the stdf dates back to \citet{drees1998} and yields the following estimator for the extremal coefficient $l_J(1, \ldots, 1)$ for $J \subseteq V$:
\begin{equation} 
\label{eqn:lJkn}
  \hat{l}_{J;n,k}(1,\ldots,1)
  =
  \frac{1}{k}\sum_{i=1}^n
  \mathbbm{1}\left( \max_{j \in J} n\hat{F}_{j,n}(\xi_{j,i}) >n+1/2-k \right).
\end{equation}

Let $\mathcal{Q} \subseteq \{ J \subseteq U : |J| = 2 \}$ be a collection of pairs of nodes associated to observable variables and put $q = |\mathcal{Q}|$, ensuring that $q \ge |E| = d-1$, the number of free edge parameters. The pairwise extremal coefficients estimator (ECE) of $\theta$ is
\begin{equation} 
\label{eqn:ECE}
	\hat{\theta}^{\mathrm{ECE}}_{n,k}
	=
	\argmin_{\theta \in (0,\infty)^{E}}
	\sum_{J \in \mathcal{Q}} \left( \hat{l}_{J;n,k}(1,1) - l_J(1, 1;\theta) \right)^2.
\end{equation}    

If $\mathcal{Q}$ is the collection of all possible pairs of nodes in $U$, then the pairwise extremal coefficients \eqref{eq:EC2} give us access to all path sums $p_{ab}$ for $a, b \in U$, and Proposition~\ref{prop:identif} guarantees we can identify $\theta$. If, however, $\mathcal{Q}$ is a smaller set of pairs, then the identifiability of $\theta$ from the resulting path sums needs to be checked on the case at hand.

\section{High water levels on the Seine network}
\label{sec:Seine}

We have chosen to present an application that allows us to demonstrate the identifiability criterion outlined in Section~\ref{sec:latent}. Data were collected from http://www.hydro.eaufrance.fr, a web-site of the french Ministry of Ecology,
Energy and Sustainable Development, and span the period from January 1987 to April 2019 with gaps for some of the measurement stations. The data represent water levels, in cm, at five locations on the Seine river: Paris, Meaux, Melun, Nemours and Sens. The map on Figure~\ref{fig:seine} shows part of the actual Seine network. The schematic representation of the graphical model used in the estimation is shown in Figure~\ref{fig:seine_scheme}. The tree has $d = 7$ nodes, two of which are associated to latent variables. Since both these nodes have degree equal to three, Proposition~\ref{prop:identif} guarantees we can still identify all six edge parameters $\theta_1, \ldots, \theta_6$. For more information on the data set, some summary statistics and details on data preprocessing, we refer to Appendix~\ref{app:Seine:preproc}.

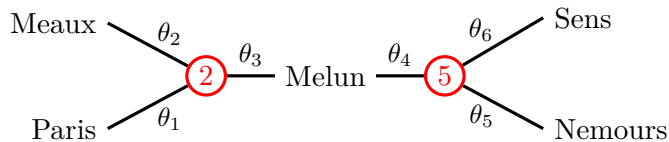
\begin{figure}[h]
\centering
\begin{tikzpicture}[scale=0.7]
\node at (0,0) (mel) {Melun};
\draw [very thick, -] (mel) -- ++(1.85cm,0)
       node[right, hollow node, red] (5) {$5$};
\draw [very thick, -] (mel) -- ++(-1.85cm,0)
       node[left, hollow node, red] (2) {$2$};
\draw [very thick, -] (5) -- ++(1.85cm,-1)
    node[right]  (nem) {Nemours};
\draw [very thick, -] (5) -- ++(1.85cm,+1.1)
        node[right]  (sens) {Sens};
\draw [very thick, -] (2) -- ++(-1.85cm,-1)
        node[left]  (paris) {Paris};
\draw [very thick, -] (2) -- ++(-1.85cm,1)
        node[left]  (me) {Meaux};
\path (paris) -- (2) node [midway,auto=right] {\small $\theta_1$};
\path (2) -- (me) node [midway,auto=right] {\small $\theta_{2}$};
\path (2) -- (mel) node [midway,auto=left] {\small $\theta_{3}$};
\path (mel) -- (5) node [midway,auto=left] {\small $\theta_{4}$};
\path (5) -- (nem) node [midway,auto=right] {\small $\theta_{5}$};
\path (5) -- (sens) node [midway,auto=left] {\small $\theta_{6}$};
\end{tikzpicture}
\caption{The Seine network with the tail dependence parameters associated to each edge of the tree.}
\label{fig:seine_scheme}
\end{figure}

\subsection{Estimates and confidence intervals}
   
We used all three estimators in Section~\ref{section:estim} to obtain estimates of the six parameters of extremal dependence. For the pairwise extremal coefficient estimator (ECE) it is possible to calculate standard errors thanks to the asymptotic distribution derived in \citet[Theorem~2.2]{eks16}. Computational details for the standard errors follow in Appendix~\ref{app:ECE}. The distributions of the MME and CLE are not known so we computed bootstrapped confidence intervals, known as basic bootstrap confidence intervals \citep[Chapter~2]{davison_hinkley_1997}, by resampling from the data.

The EC estimates and their 95\% confidence intervals are displayed in Figure~\ref{fig:eks_ci} for two of the parameters, namely $\theta_1$ and $\theta_4$. The confidence intervals using the MME and CLE are narrower as can be seen from Figure~\ref{fig:all_theta_ci}. The plots for $\theta_2, \theta_5, \theta_6$ are similar to the one for $\theta_1$: the 95\% confidence intervals never include zero,  suggesting that the extremal dependence between the corresponding variables is not perfect and hence that the edges cannot be collapsed. In Section~\ref{sec:latent}, we alluded to the possibility of circumventing the issue of latent variables by suppressing nodes and redrawing edges. The fact that the confidence intervals do not include zero indicate that doing so would have produced a misleading picture of extremal dependence. 

The plot of $\theta_3$, similarly to the plot of $\theta_4$, does contain a segment over $k$ where the lower confidence bound reaches zero: for $\theta_4$ this is approximately $k\in[260,360]$, while for $\theta_3$ it is $k\in[90,180]$. Although the confidence intervals for $\theta_3$ and $\theta_4$ indicate some instability of the estimated parameters, we believe that collapsing the edges is not advisable, especially in networks with many more unobservable variables. Moreover, the river distance, which is one of the important factors in tail dependence \citep{asadi}, is rather long between node~2 and Melun and between Melun and node~5, so that there is no physical motivation for collapsing the corresponding edges.

\begin{figure}
    \centering
    \includegraphics[scale=0.7]{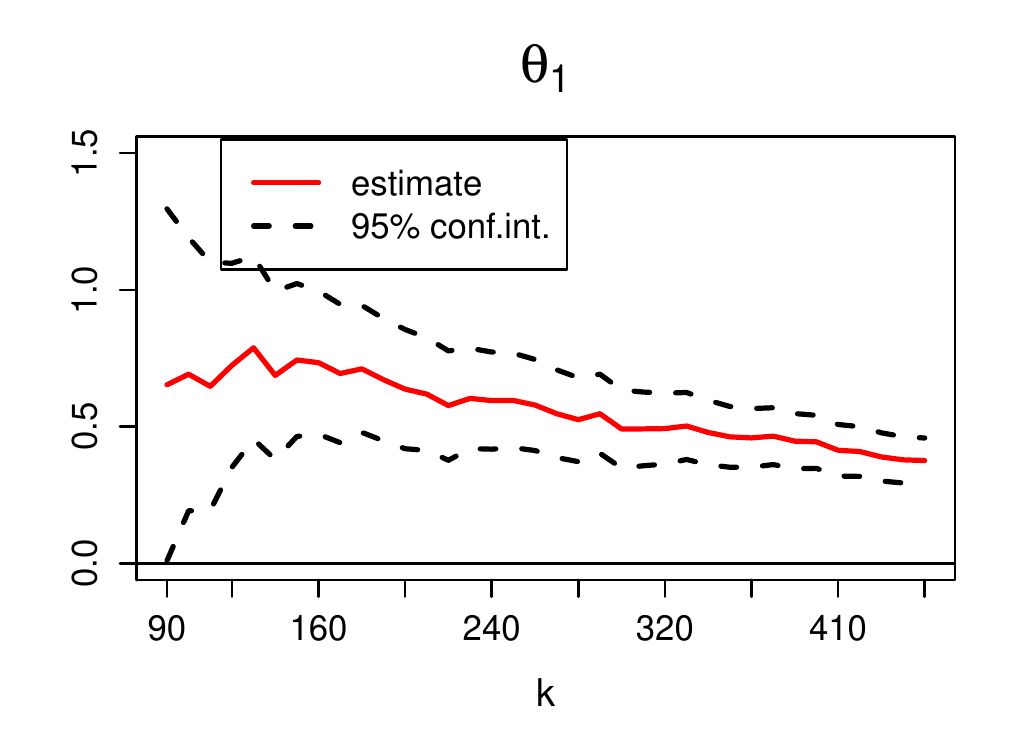}
    \includegraphics[scale=0.7]{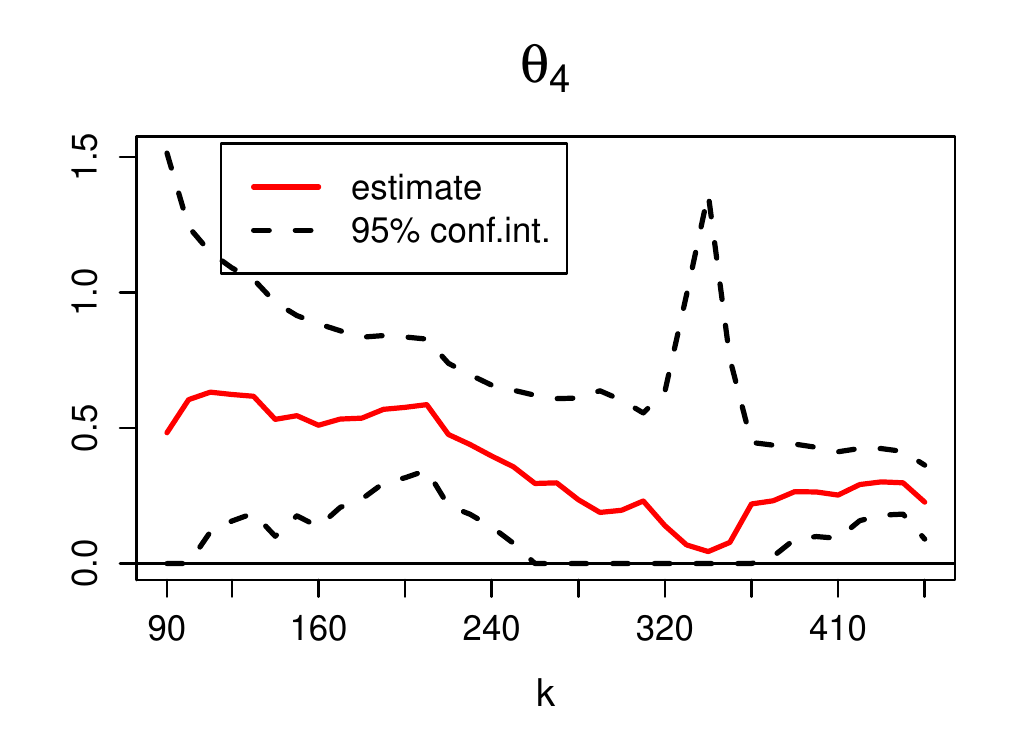}
    \caption{Point estimates and confidence intervals for the pairwise ECE.}
    \label{fig:eks_ci}
\end{figure}

For a point estimate per parameter we need to average out over a range of $k$. The chosen range per estimator and per parameter need not be the same. As a rule we select a range around the beginning where the estimates start stabilizing around a certain level, omitting the most volatile part for relatively small $k$. Most of the time we thus consider $k\in [100,220]$. In this way we end up with the point estimates displayed for comparison in Figure~\ref{fig:all_theta_ci}.

\begin{figure}[ht]
	\centering
	\includegraphics[scale=0.9]{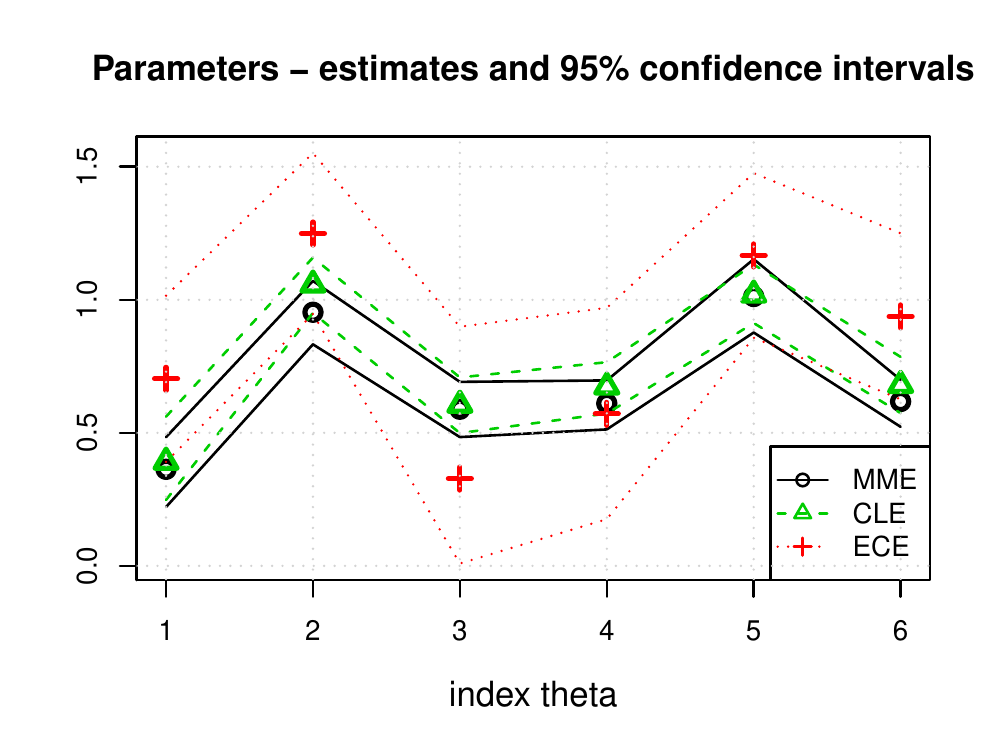}
	\caption{Parameters -- estimates and confidence intervals. The confidence intervals of the moment and composite likelihood estimators are bootstrapped, namely $\theta\in [2\hat{\theta}-q^*_{0.975}, 2\hat{\theta}-q^*_{0.025}]$, where $q^*_{\alpha}$ is the $\alpha$-quantile of the bootstrapped distribution of $\hat{\theta}$.}
	\label{fig:all_theta_ci}
\end{figure}


Given the similarities between the MME and CLE, the estimates are pooled in an average of the two for each parameter. 

\subsection{Considerations on the goodness-of-fit of the model}
\label{ssec:gof}

The Hüsler--Reiss family would not be an appropriate extremal dependence model if some of the variables would exhibit asymptotic independence. As kindly suggested by a Reviewer, we compared non-parametric estimates of multivariate tail dependence coefficients $\P(\min_{v \in W} X_v > t \mid X_u > t) = t \, \P(\min_{v \in W \cup u} X_{v} > t)$ for $W \subsetneq U$ and $u \in U \setminus W$ at finite thresholds $t = n/k$ with their postulated limits as $t \to \infty$ based on the fitted Hüsler--Reiss stdf. The results (not shown) supported the hypothesis of asymptotic dependence for nearly all subvectors $X_{W \cup u}$ of variables.

To assess how well the model from Section~\ref{ssec:introX} fits the data, we compare non-parametric and model-based estimates of quantities describing extremal dependence, such as pairwise and triple-wise extremal coefficients and the Pickands dependence function. For $J \subseteq U$, recall the extremal coefficient $l_J(1, \ldots, 1)$ in \eqref{eqn:lJ} and its non-parametric estimate $\hat{l}_{J;n,k}(1,\ldots,1)$ in \eqref{eqn:lJkn}, also called empirical extremal coefficient. The extremal coefficient $l_J(1, \ldots, 1)$ is always between $1$ and $|J|$, corresponding to perfect extremal dependence and to extremal independence, respectively.  

Figure~\ref{fig:extrcoef} compares the model-based extremal coefficients obtained from \eqref{eq:stdf_hr} by plugging in parameter estimates with the empirical counterparts for pairs and triples $J \subseteq U$. At least visually the fit is quite good for both estimators considered, which are the average of the CLE and MME on the one hand and the ECE on the other hand. Note that the ECE in \eqref{eqn:ECE} is constructed explicitly to ensure that the model-based pairwise extremal coefficients fit the empirical ones as closely as possible. It is therefore only natural that the extremal coefficients based on the ECE fit the empirical ones best. A more comprehensive comparison of the finite-sample performance of MME, CLE and ECE is reported in a simulation study in Appendix~\ref{app:simu}.

It should be noted that it is impossible to compute empirical extremal coefficients involving latent variables. Model-based estimates of such extremal coefficients can still be computed however, thanks to the identifiability of the parameter vector $\theta$.

\begin{figure}
    \centering
    \includegraphics[scale=0.65]{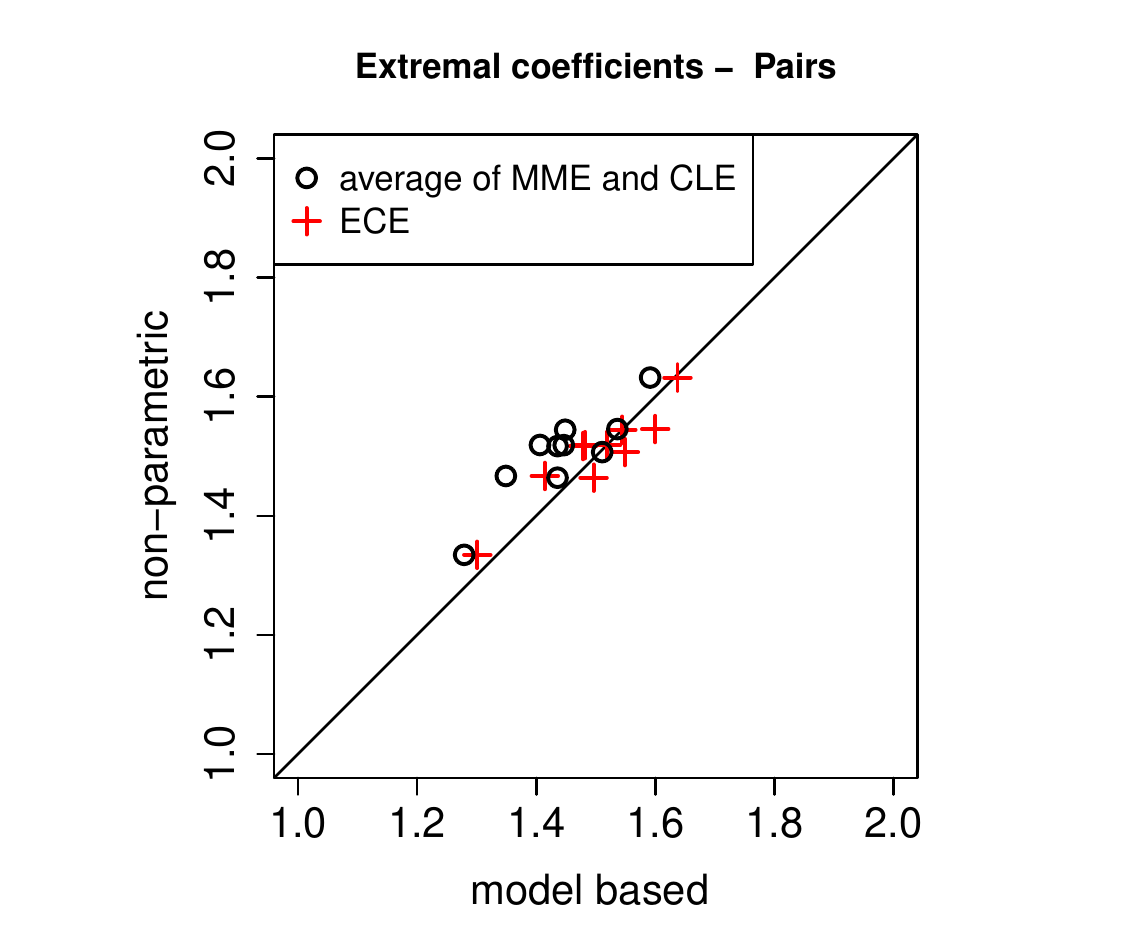}
    \includegraphics[scale=0.65]{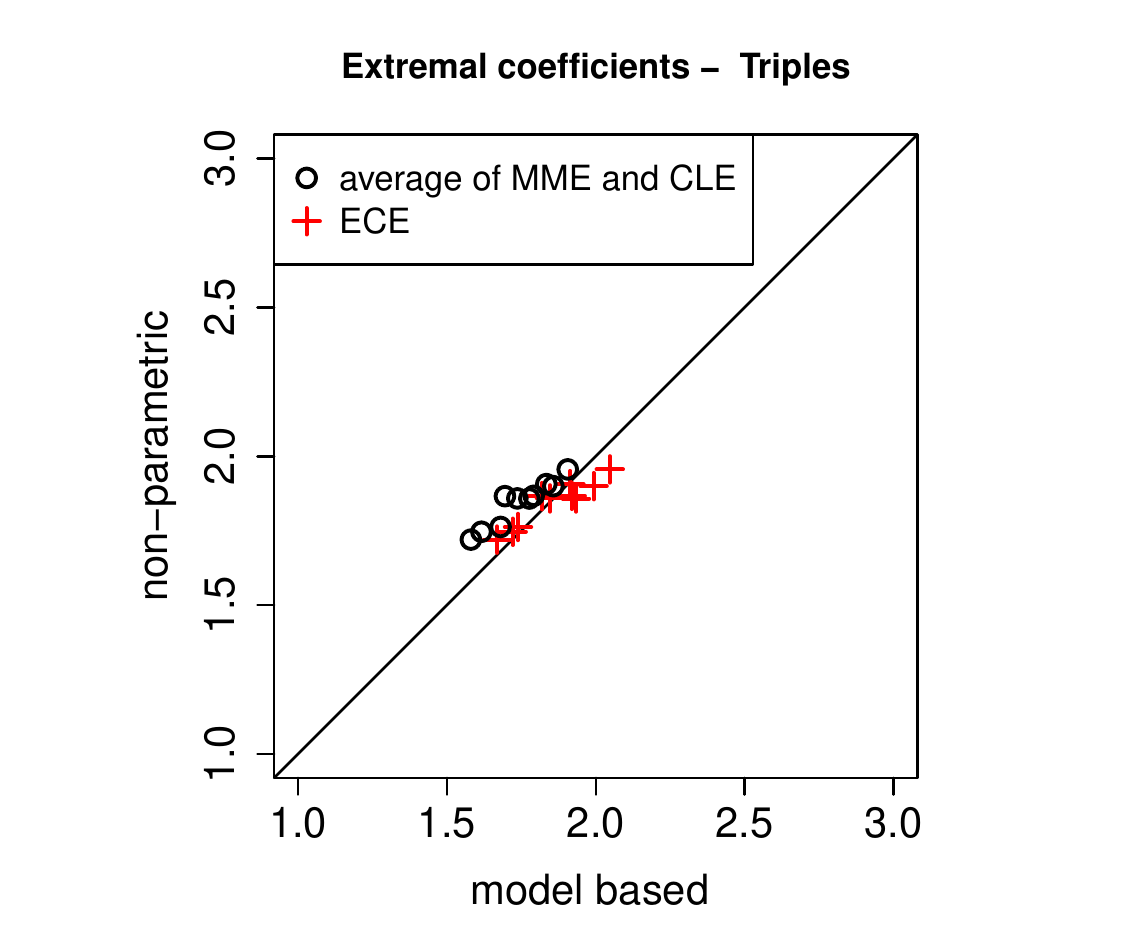}
    \caption{Non-parametric vs model-based extremal coefficients for pairs (left) and triples (right). }
    \label{fig:extrcoef}
\end{figure}

As another visual check of the goodness-of-fit of the assumed model we consider the bivariate Pickands dependence function, usually denoted by $A(w)$ for $w\in [0,1]$. For the Hüsler--Reiss extreme-value distribution at the pair $J = \{u, v\}$, it is equal to 
\begin{align*}
    A_{u,v}(w;\theta)&=l_J(1-w, w;\theta)
    \nonumber\\&=
    (1-w) \, \Phi\left(
    	\frac{\ln(\frac{1-w}{w}) + \frac{1}{2} p_{uv}}{\sqrt{p_{uv}}}
    \right)
    +
    w \, \Phi\left(
    	\frac{\ln(\frac{w}{1-w}) + \frac{1}{2} p_{uv}}{\sqrt{p_{uv}}}
    \right)\, ,
\end{align*}
with $p_{uv}$ as in \eqref{eq:sumpath}. Hence the model-based estimator of $A_{u,v}(w;\theta)$ is $A_{u,v}(w;\hat{\theta}_{n,k})$ where $\hat{\theta}_{n,k}$ can be the average MME/CLE or the ECE.

The non-parametric counterpart of the Pickands dependence function is
\begin{equation*}
    \hat{A}_{u,v}(w)
    =
    \frac{1}{k}\sum_{i=1}^n
    \mathbbm{1} \{
    	n\hat{F}_{u,n}(\xi_{u,i})>n-k(1-w)+1/2
    	\text{ or }
    	n\hat{F}_{v,n}(\xi_{v,i})>n-kw+1/2
    \}.
\end{equation*}
The model-based Pickands dependence function is compared to the empirical counterpart in Figure~\ref{fig:pdf}. The plot is complemented with non-parametric 95\% confidence intervals for $A(w)$ computed by the bootstrap method introduced in \citet[Section~5]{kiril_seg_taf}. The general idea of the method is to approximate the distribution of $\sqrt{k}(\hat{l}_{n,k}-l)$ by the distribution of $\sqrt{k}(\hat{l}_{n,k}^\ast-\hat{l}_{n,k}^{\beta})$ where $\hat{l}_{n,k}^\ast$ is the empirical stable tail dependence function based on the ranks of a sample of size $n$ from the empirical beta copula and $\hat{l}_{n,k}^{\beta}$ is the stdf based on the empirical beta copula using the ranks of the original sample $(\xi_{v,i}, v \in U)$ for $i=1,\ldots, n$. A detailed description of the derived bootstrap confidence intervals is provided in Appendix~\ref{app:stdf:CI}.  

\begin{figure}
	\begin{center}
	\begin{tabular}{@{}c@{}c}
    \includegraphics[width=0.49\textwidth]{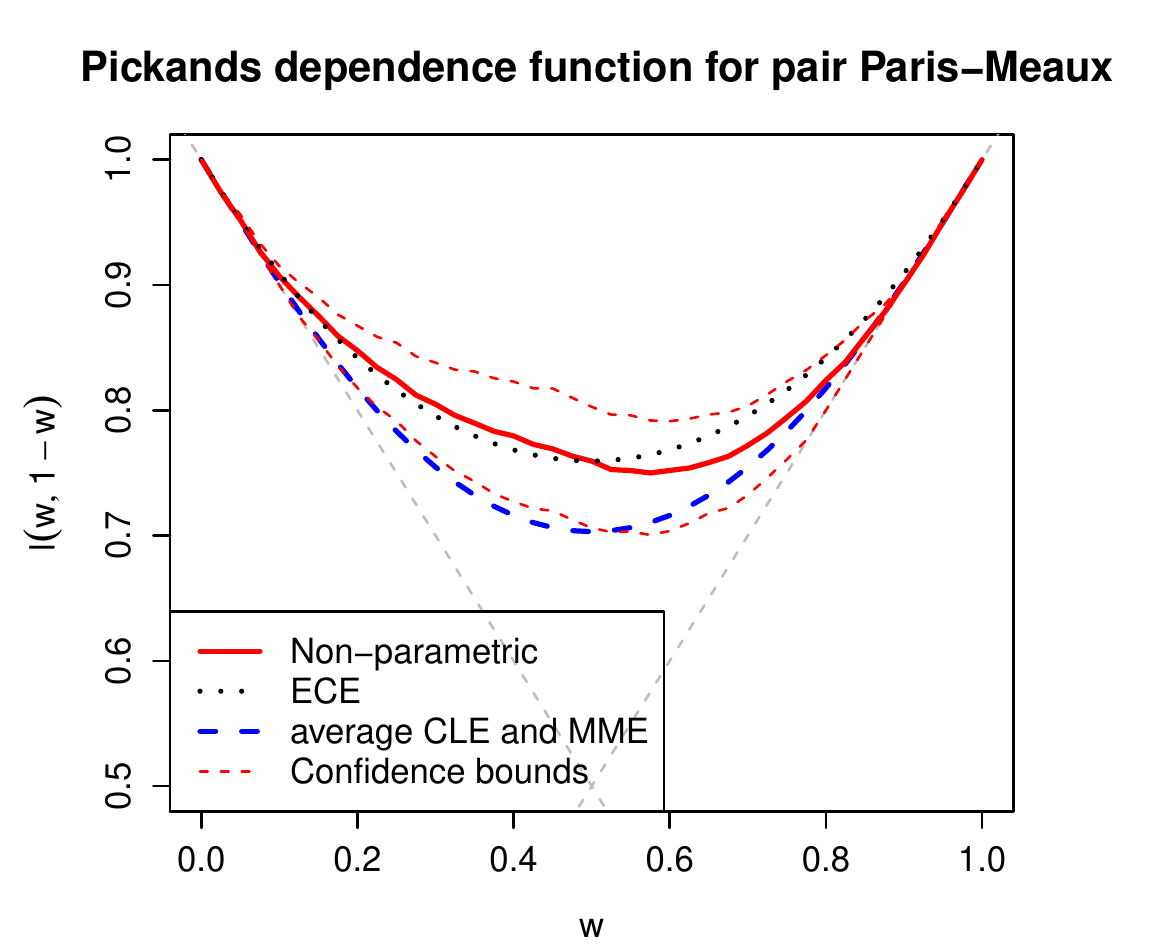}
    \includegraphics[width=0.49\textwidth]{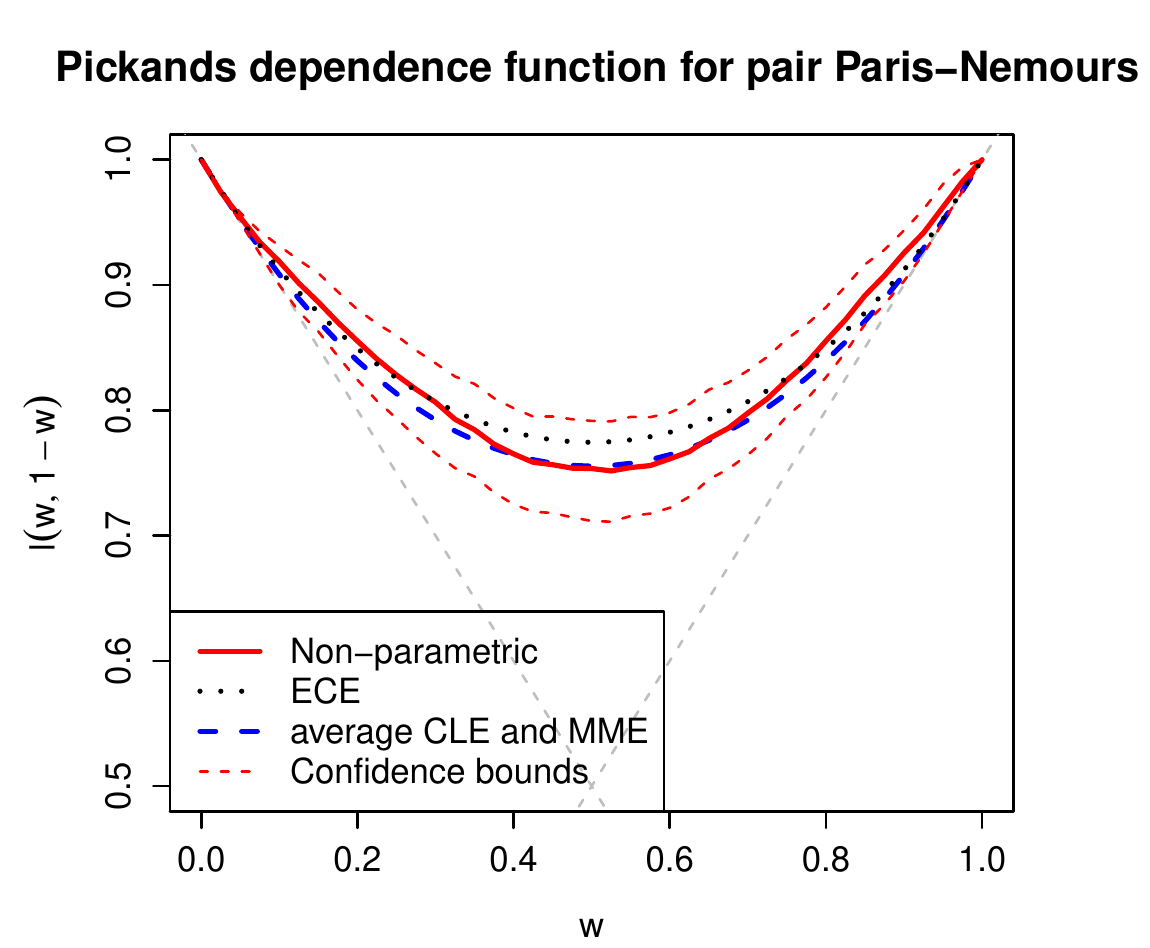}
    \end{tabular}
    \end{center}
    \caption{The empirical and model-based Pickands dependence function computed using the pooled CL and MM estimates and the EC estimates. The dashed gray lines show the lower limit $\max(w, 1-w)$ of any Pickands dependence function $A(w)$.}
    \label{fig:pdf}
\end{figure}

\subsection{Flow-connectedness and tail dependence}
\label{ssec:flow}

In a study by \citet{asadi} of data from the Danube, it was found that a key factor for extremal dependence between two locations is whether or not they are flow connected. Two locations are flow connected if one of them is downstream of the other one. Flow connectedness often dominates river distance or Euclidean distance in importance: variables on distant nodes that are flow connected might have stronger tail dependence than variables on nodes that are nearby but not flow connected. 

This effect is confirmed in our data too and is illustrated in Figure~\ref{fig:pair_plots}. The cities of Sens and Nemours are not flow connected but the Euclidean and river distance between them is smaller than the one between the flow connected cities of Sens and Paris. Still, the tail dependence seems to be stronger for the flow connected pair of locations.

\begin{figure}
    \centering
    \includegraphics[scale=0.85]{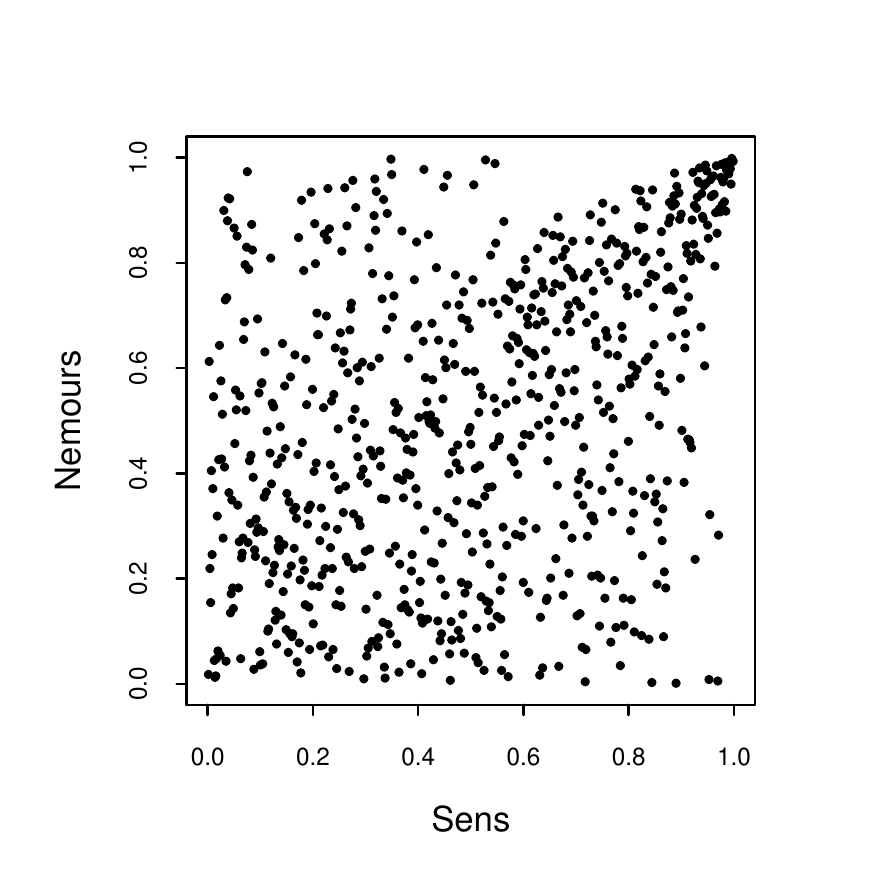}
    \includegraphics[scale=0.85]{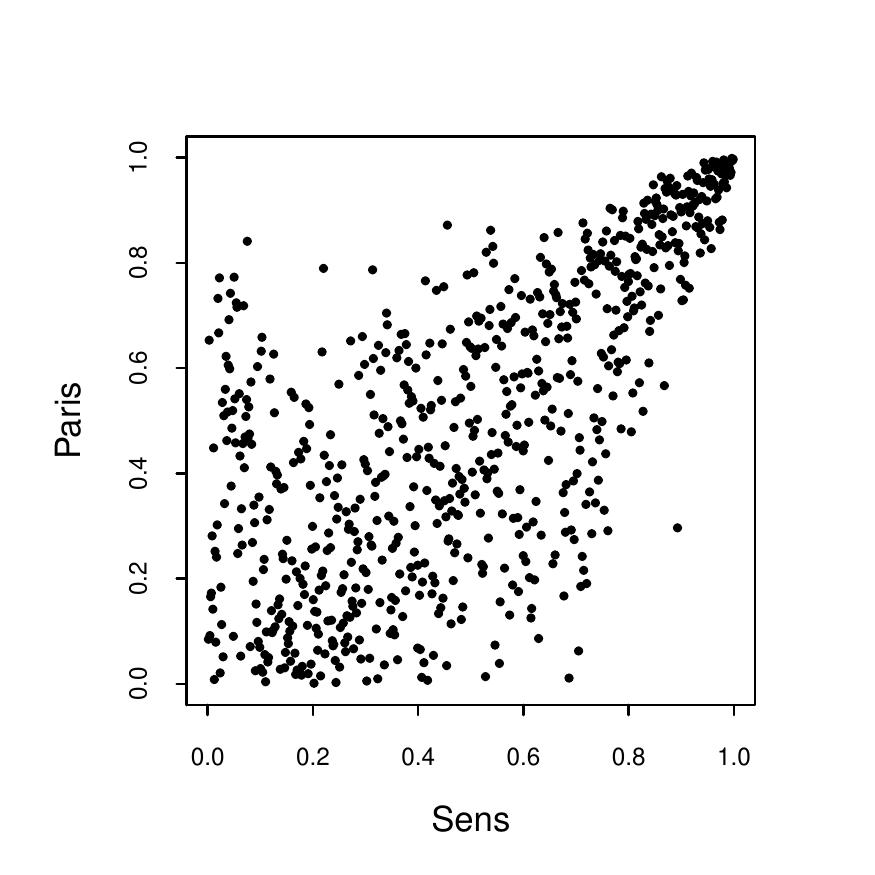}
    \caption{Scatterplots of uniform transformed data, $\hat{F}_{v,n}(\xi_{v,i}), v\in U, i=1, \ldots, n$, for two pairs of locations. Sens and Nemours (left) are not flow connected while Sens and Paris (right) are flow connected. It can be seen from the Seine map in Figure~\ref{fig:seine} that the river and Euclidean distance from Sens to Nemours is much smaller than the one from Sens to Paris. However the tail dependence seems to be stronger for the second pair of locations.}
    \label{fig:pair_plots}
\end{figure}

Figure~\ref{fig:heat_map} illustrates the tail dependence in the Seine network through a heat map of the pairwise extremal coefficients. Pairs which are flow connected are indeed the ones with stronger tail dependence (smaller extremal coefficient). According to both estimators the strongest tail dependence is to be found between Paris and the locations at node~2, node~5 and Melun.

\begin{figure}
	\centering
	\includegraphics[scale=0.75]{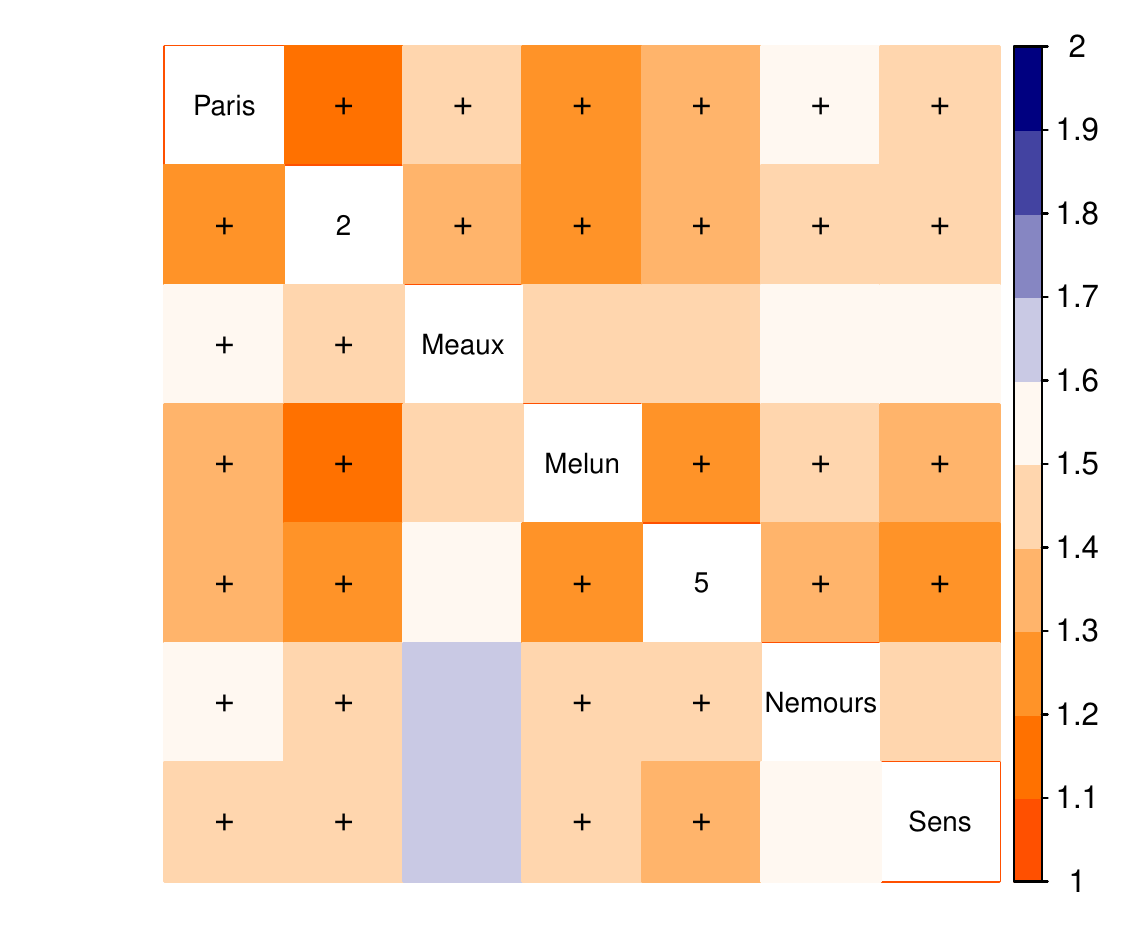}
	
	\caption{Heat map of the extremal coefficients. The upper diagonal is computed using the pooled MM and CL estimates and the lower diagonal uses the EC estimates. The crosses denote flow connected nodes.}
	\label{fig:heat_map}
\end{figure}

\subsection{Suppressing latent variables}
\label{ssec:suppress}

In Section~\ref{sec:latent} we alluded to the possibility of suppressing nodes with latent variables. Here we illustrate that method and compare the results with those presented so far. After removing nodes $2$ and $5$ from the Seine graph in Figure~\ref{fig:seine_scheme}, there is no unique way of reconnecting the remaining five nodes into a tree. Two possible structures for the reduced Seine graph are presented in Figure~\ref{fig:seine_scheme_shrink}. We opt for the right-hand graph and refer to the model associated to that tree as model~B. Model~A will refer to the one associated to the original graph in Figure~\ref{fig:seine_scheme}.

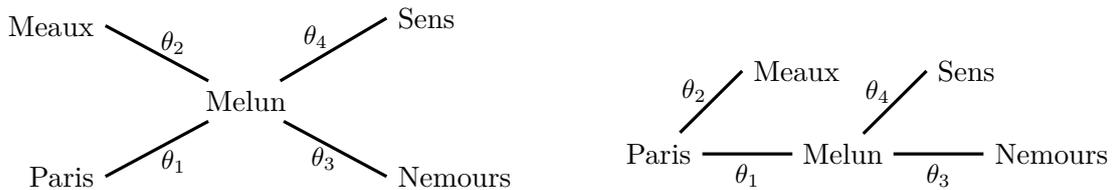
\begin{figure}[h]
	\centering
	\begin{tikzpicture} 
	\node at (0,0) (mel) {Melun};
	\draw [very thick, -] (mel) -- ++(1.85cm,-1)
	node[right]  (nem) {Nemours};
	\draw [very thick, -] (mel) -- ++(1.85cm,+1.1)
	node[right]  (sens) {Sens};
	\draw [very thick, -] (mel) -- ++(-1.85cm,-1)
	node[left]  (paris) {Paris};
	\draw [very thick, -] (mel) -- ++(-1.85cm,1)
	node[left]  (me) {Meaux};
	\path (paris) -- (mel) node [midway,auto=right] {\small $\theta_1$};
	\path (me) -- (mel) node [midway,auto=left] {\small $\theta_{2}$};
	\path (nem) -- (mel) node [midway,auto=left] {\small $\theta_{3}$};
	\path (sens) -- (mel) node [midway,auto=right] {\small $\theta_{4}$};
	\end{tikzpicture}
	\hspace{1cm}
	\begin{tikzpicture} 
	\node at (0,0) (mel) {Melun};
	\draw [very thick, -] (mel) -- ++(1.85cm,0)
	node[right]  (nem) {Nemours};
	\draw [very thick, -] (mel) -- ++(1.1cm,+1.1)
	node[right]  (sens) {Sens};
	\draw [very thick, -] (mel) -- ++(-1.85cm,0)
	node[left]  (paris) {Paris};
	\draw [very thick, -] (paris) -- ++(1.1,1.1cm)
	node[right]  (me) {Meaux};
	\path (paris) -- (mel) node [midway,auto=right] {\small $\theta_1$};
	\path (paris) -- (me) node [midway,auto=left] {\small $\theta_{2\,\,\,}$ };
	\path (nem) -- (mel) node [midway,auto=left] {\small $\theta_{3}$};
	\path (sens) -- (mel) node [midway,auto=right] {\small $\theta_{4\,}$};
	\end{tikzpicture}
	\caption{Two different versions of the graph of the Seine network in Figure~\ref{fig:seine_scheme} if nodes with latent variables are suppressed and new edges are drawn between the remaining nodes of the affected parts of the tree.}
	\label{fig:seine_scheme_shrink}
\end{figure}

In Figure~\ref{fig:ABcomparison}, we compare the extremal coefficients induced by model~B to the empirical ones and to those induced by model~A. The extremal coefficients resulting from both models turn out to be rather close, the little black circles lying almost on the diagonal in all four plots. The reason may be that the original tree is small and that not many nodes have been suppressed, whereas in case of many latent variables, the impact of suppressing them may be large. Furthermore, the results may depend on the particular choice of the reduced tree: out of many possibilities two of which shown in Figure~\ref{fig:seine_scheme_shrink} we selected the second one. A final shortcoming of the method of suppressing nodes is that tail dependence cannot be calculated for random vectors involving latent variables.  

\begin{figure}[]
	\includegraphics[scale=0.73]{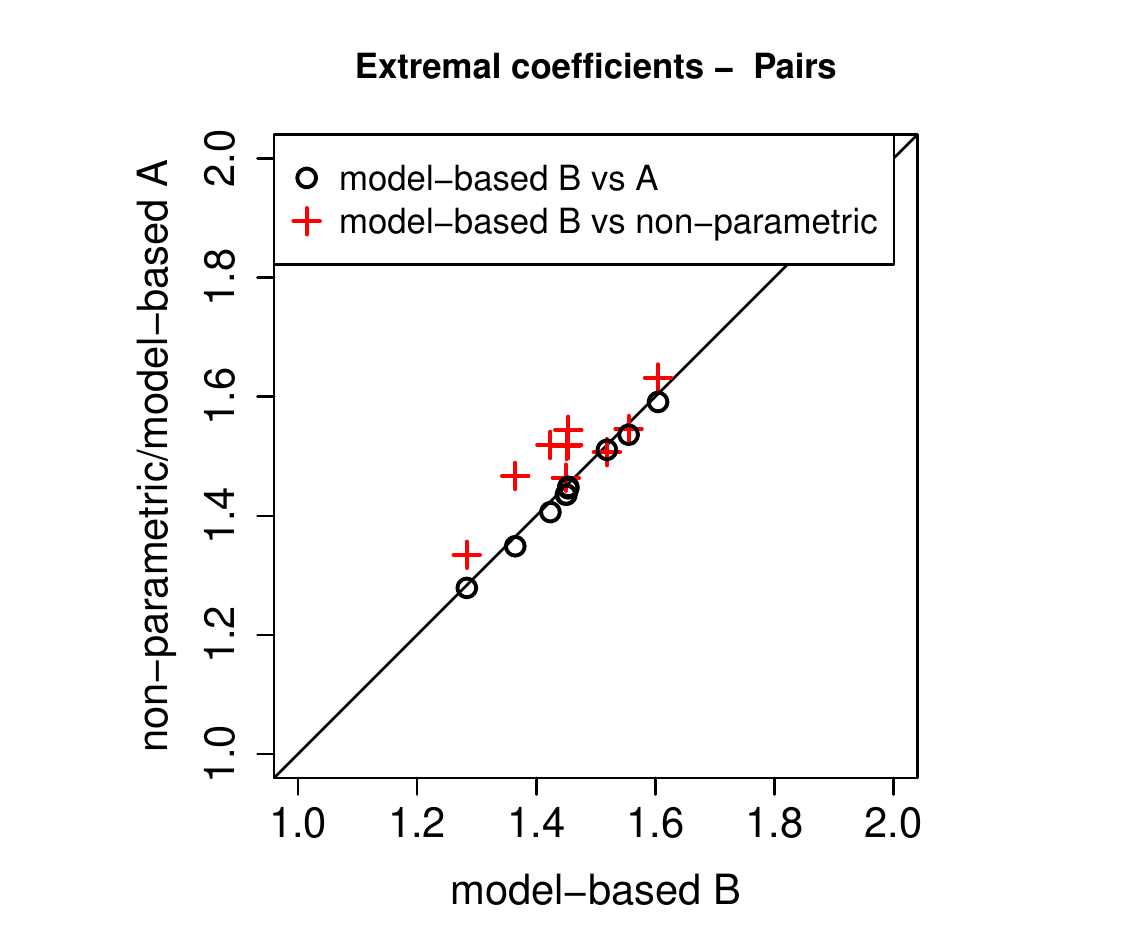}
	\includegraphics[scale=0.73]{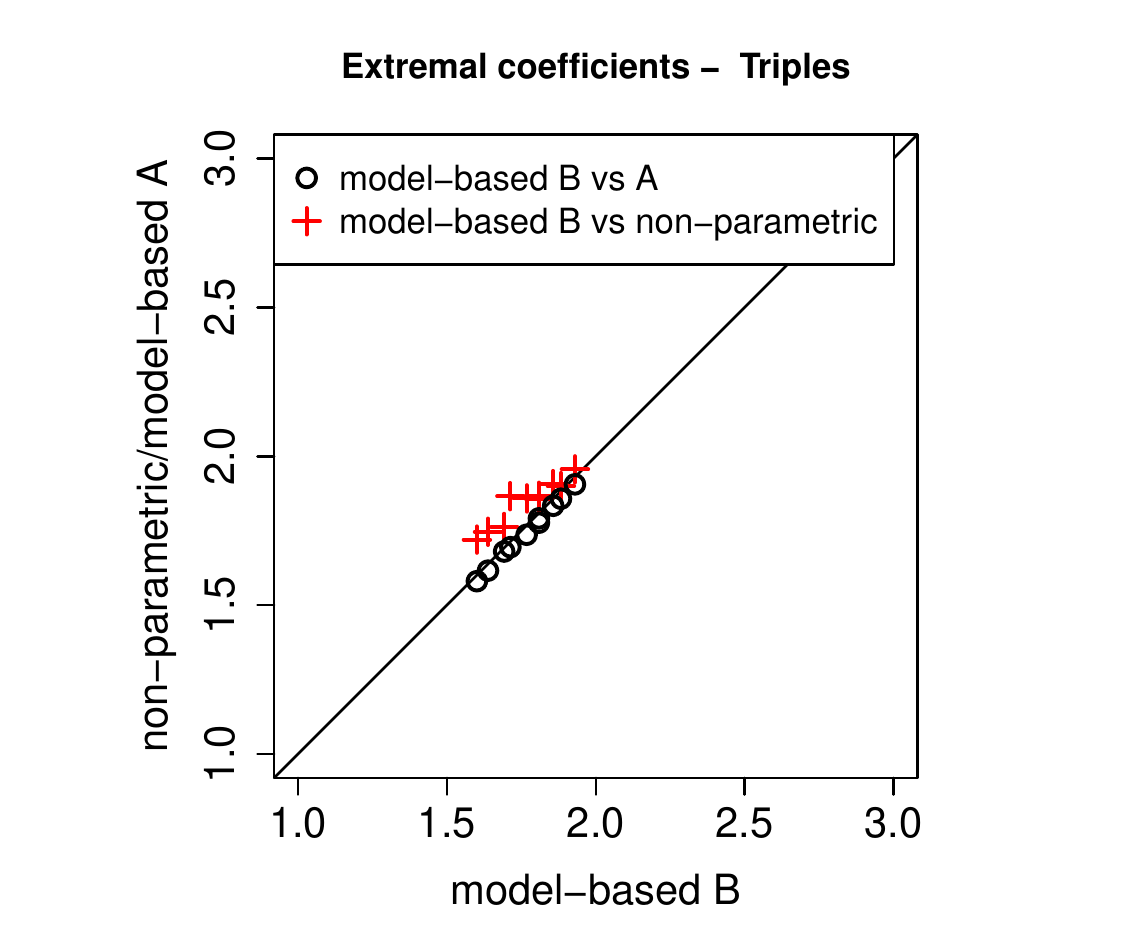}
	\includegraphics[scale=0.73]{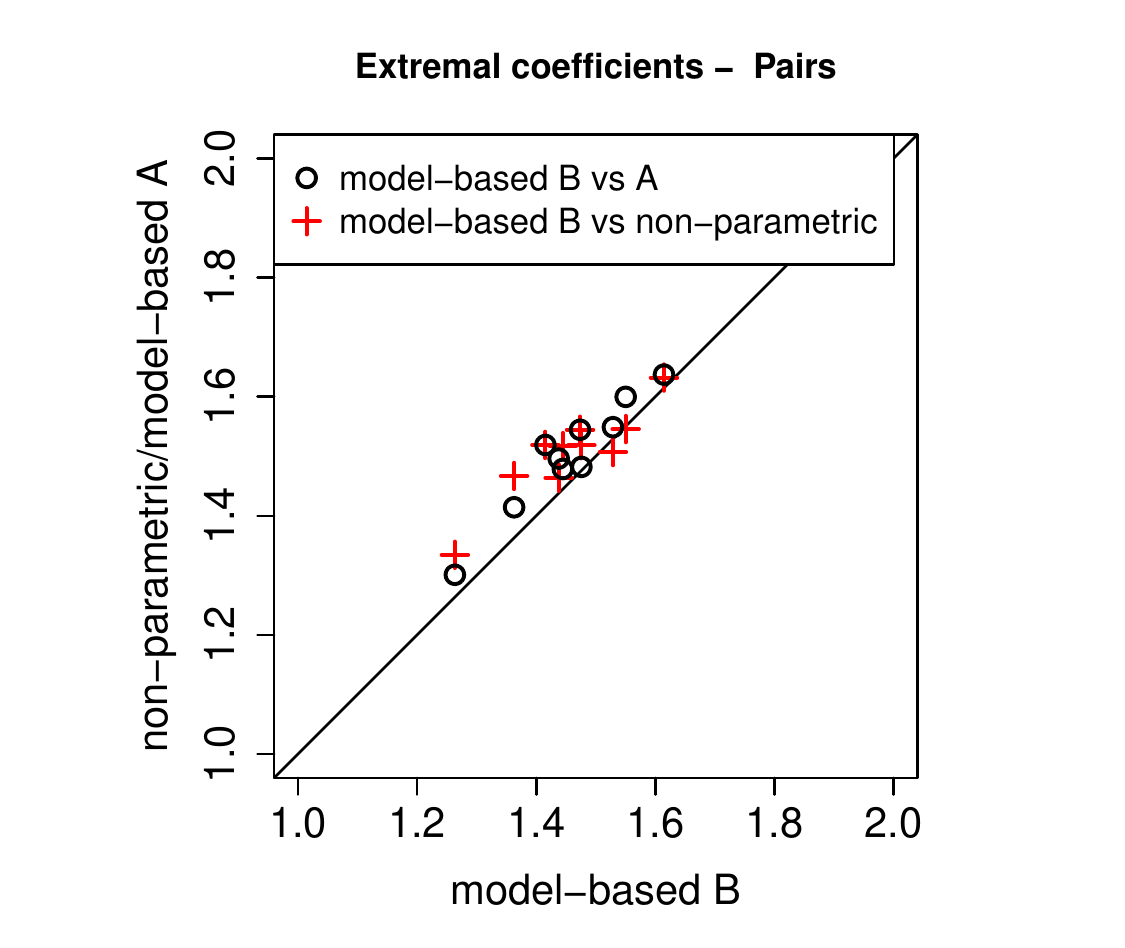}
	\includegraphics[scale=0.73]{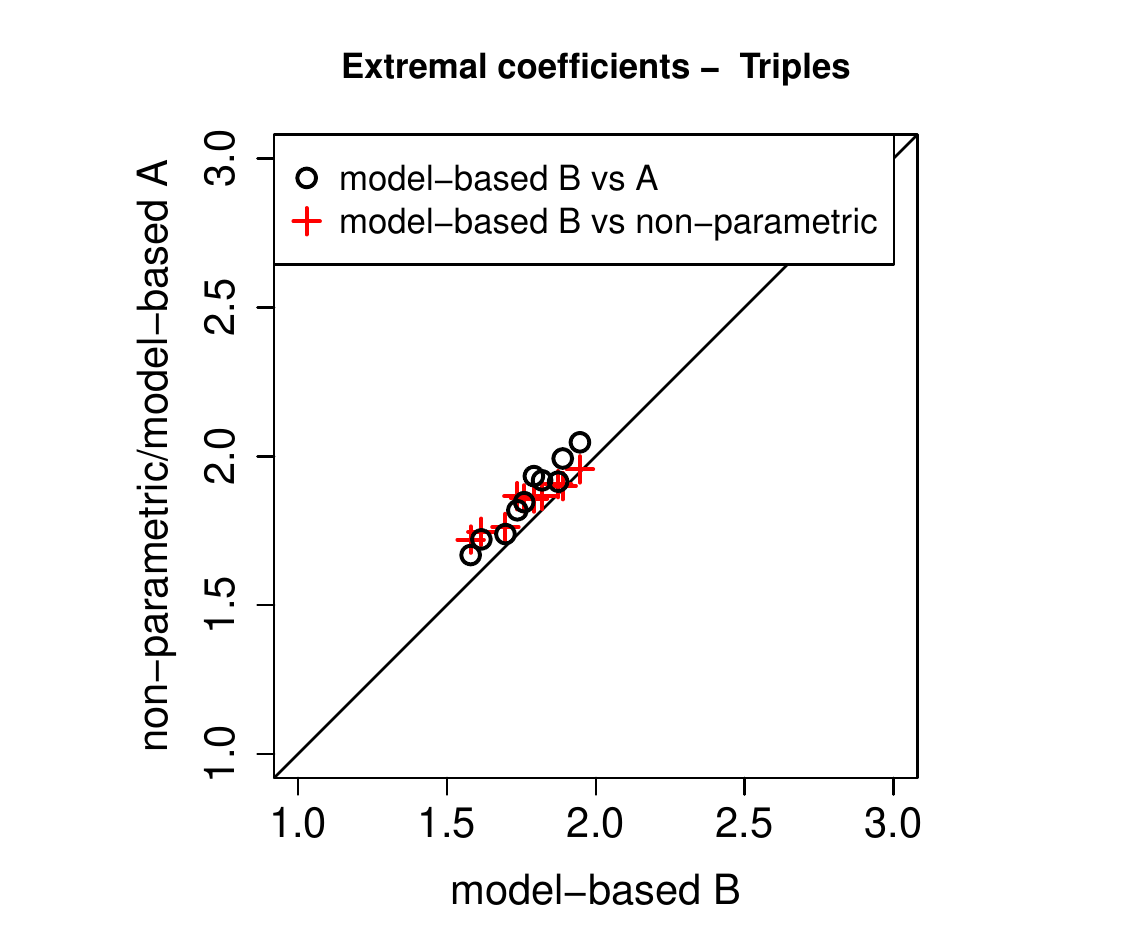}
	\caption{Comparison of extremal coefficients under model~A (latent variables included) and model~B (latent variables excluded). Top: combined MM and CL estimates; bottom: EC estimates. Left: pairs; right: triples.}
	\label{fig:ABcomparison}
\end{figure}

We conclude in Figure~\ref{fig:seine_inv_map_tdf} with a depiction of pairwise upper tail dependence in the Seine network. Shown are the complete and reduced trees with edges weighted by the tail dependence coefficients, defined for a pair $J = \{u, v\} \subset V$ by
\begin{equation}
\label{eq:tdc}
2 \bigl( 1 - \ell_J(1, 1; \theta) \bigr)
\end{equation}
in terms of the pairwise extremal coefficient in \eqref{eq:EC2}. In both trees, the strongest tail dependence occurs along the path from Sens to Paris.  

\begin{figure}
	\centering
	\begin{tabular}{@{}c@{}c}
	\includegraphics[width=0.55\textwidth]{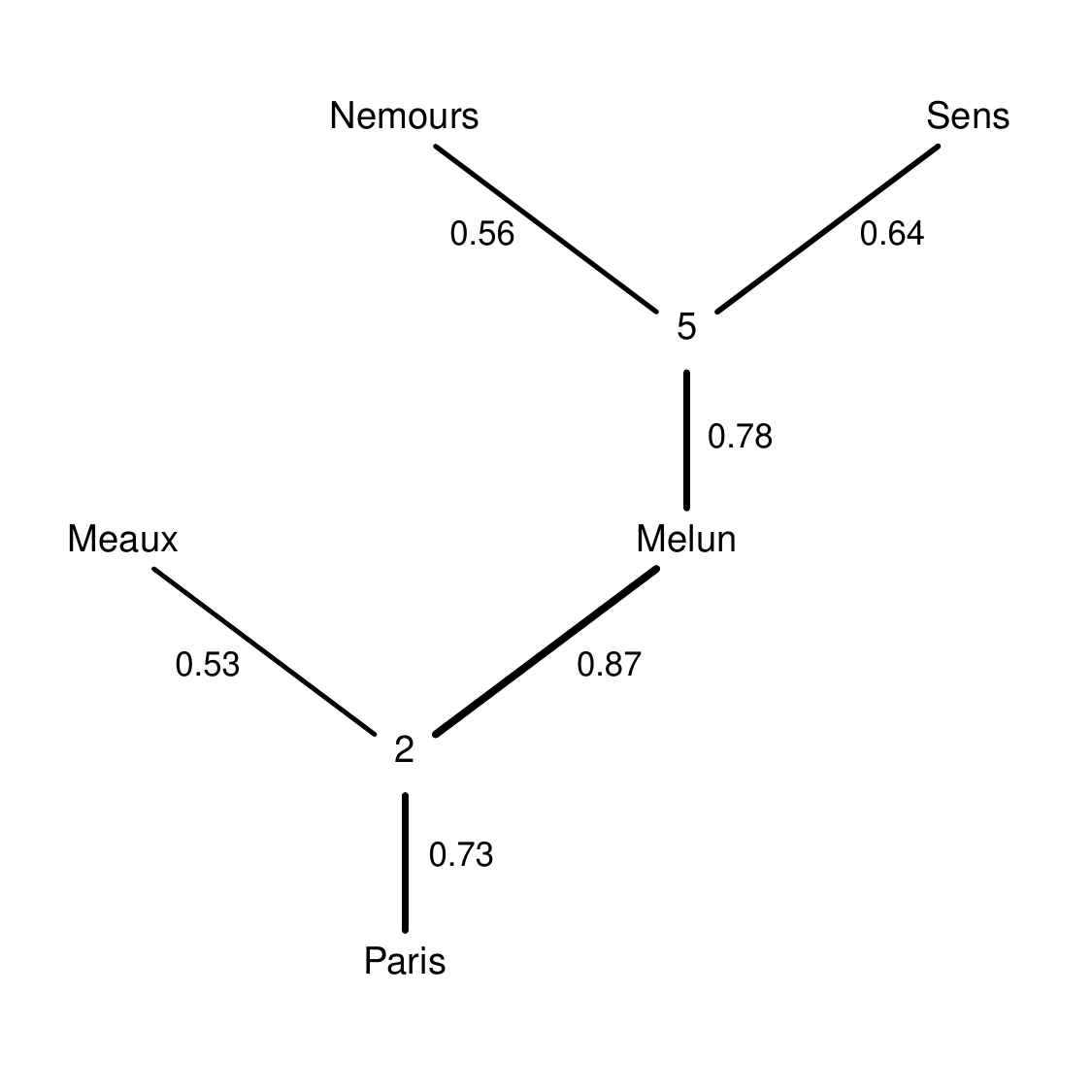}&
	\raisebox{.25\height}{\includegraphics[width=0.42\textwidth]{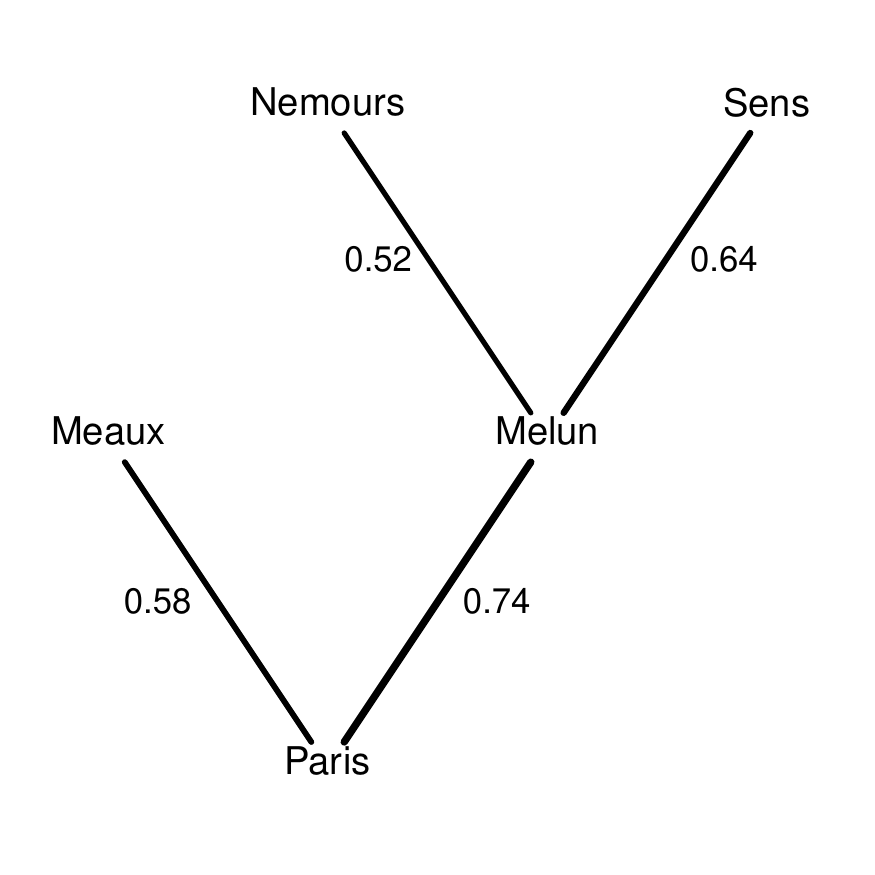}}
	\end{tabular}
	\caption{The trees of the tail dependence coefficients in \eqref{eq:tdc}, using the EC estimates for $\theta$. Left: tree with nodes with latent variables; Right: reduced tree without latent variables.}
	\label{fig:seine_inv_map_tdf}
\end{figure}

\section{Conclusion}
\label{sec:concl}

We have presented a statistical model suitable for studying extremal dependence within a vector of random variables indexed by the nodes of a tree. The edges between the nodes are meant to indicate links between variables arising from a physical or conceptual network, although we do not impose any conditional independence relations. The main assumption is that, upon marginal standardization, the data-generating distribution is in the max-domain of attraction of a max-stable Hüsler--Reiss distribution whose parameter matrix possesses a certain structure induced by the tree: a free parameter is associated to each edge and each element of the Hüsler--Reiss parameter matrix only depends on the sum of the edge parameters along the path between the two corresponding nodes on the tree. We showed that the max-domain of attraction of this tree-structured Hüsler--Reiss distribution contains a specific distribution that, unlike the max-stable Hüsler--Reiss distribution or the associated multivariate Pareto distribution, satisfies the global Markov property with respect to the tree. This auxiliary model not only motivates the postulated structure, it also allowed us to find extremal dependence properties of any distribution satisfying our main assumption.     


The central point and contribution of the paper is related to the identifiability of the edge parameters in case some of the variables are latent (unobservable). This situation occurs for instance in applications on river networks, when measurements on certain locations are missing. We showed that the edge parameters are uniquely identified by the distribution of the observable variables if and only if all nodes indexing latent variables are of degree at least three. Thanks to this result it is possible to quantify tail dependence even between latent variables. The characterization is due to the special structure of the variogram matrix of the Hüsler--Reiss distribution and may not be applicable to other max-stable distributions. 

We fitted the model to water level data on the Seine network on a tree with seven variables, two of which were latent. As the corresponding nodes both had degree three, the six edge parameters were still identifiable and could be estimated based on data from the five observable variables. Three different estimators were proposed and implemented, based on the method of moments, on composite likelihood, and on pairwise extremal coefficients. Comparisons of non-parametric and model-based tail dependence quantities confirmed the adequacy of the fitted structured Hüsler--Reiss distribution.

For comparison we estimated a model where the two nodes with latent variables were suppressed and the edges between the affected parts of the network were redrawn in an arbitrary way. Although for the Seine data this reduction did not have a big impact on the fitted tail dependence model of the observable variables, we argued why it is still recommendable to take latent variables into account, provided there is now a sound way to do so.

An open question concerns parameter identifiability criteria in case of latent variables for Hüsler--Reiss distributions with parameter matrices structured in different ways than in this paper. Even the structure itself may be partially unknown. Another interesting direction for further research concerns extensions from the Hüsler--Reiss family to other parametric families of max-stable distributions.  



\appendix

\section{Appendix}
\subsection{Relation to extremal graphical models in \citet{engelke+h:2020} } \label{app:eh}

\citet{engelke+h:2020} introduce graphical models for extremes in terms of the multivariate Pareto distribution associated to a simple max-stable distribution $G$. We briefly review their approach and compare it with ours in case $G$ is a Hüsler--Reiss distribution. Let $V = \{1,\ldots,d\}$ and let $X = (X_v, v \in V)$ be a random vector with unit-Pareto margins. The condition that $X \in D(G)$ is equivalent to
\[
	\lim_{t \to \infty} \bigl(\P(X_v\leq t z_v, v \in V)\bigr)^t = G(z), \qquad z \in (0, \infty)^V.
\]
By a direct calculation, it follows that
\begin{equation}
\label{eqn:pareto_lim}
	\lim_{t\to\infty} \P \Bigl( X_v/t\leq z_v, v\in V \,\Big\vert\, \max_{v\in V}X_v>t\Bigr)
	=
	\frac{\ln G\bigl(\min(z_v, 1), v\in V\bigr)-\ln G(z)}{\ln G(1, \ldots, 1)},
\end{equation}
for $z \in (0, \infty)^V$, from which
\begin{equation}
\label{eq:Xvtmax2Y}
	(X_v/t, v \in V) \mid \max_{v \in V} X_v > t
	\dto
	Y, \qquad t \to \infty
\end{equation}
where $Y = (Y_v, v \in V)$ is a random vector whose distribution function is equal to the right-hand side in \eqref{eqn:pareto_lim}. The law of $Y$ is a multivariate Pareto distribution, which, upon a change in location, is a special case of the multivariate generalized Pareto distributions arising in \citet{rootzen2006} and \citet[Section~8.3]{beirlant2004statistics} as limit distributions of multivariate peaks over thresholds.

Assuming that $Y$ is absolutely continuous, its support is equal to the L-shaped set $\{ y \in (0, \infty)^V : \max_{v \in V} Y_v > 1 \}$ or a subset thereof, making conditional independence notions related to density factorizations ill-suited for $Y$. This is why \citet{engelke+h:2020} study conditional independence relations for the random vector $Y^u$ defined in distribution as $Y \mid Y_u > 1$ for $u \in V$. 
According to \citet[Definition~2]{engelke+h:2020}, the law of $Y$ is defined to be an extremal graphical model with respect to some graph $\G$ if for all $u\in V$, the law of $Y^u$ satisfies the global Markov property with respect to $\G$. Note that $Y$ itself is not required to satisfy the said Markov property.

The multivariate Pareto distribution derived through \eqref{eqn:pareto_lim} from the Hüsler--Reiss distribution $G=H_{\Lambda}$ is referred to in \citet{engelke+h:2020} as the Hüsler--Reiss Pareto distribution. In their article, the term Hüsler--Reiss graphical models is then used for Hüsler--Reiss Pareto distributions that are extremal graphical models.

To show the relation with our approach, note that \eqref{eq:Xvtmax2Y} implies that, for all $u \in V$, we have
\[
	(X_v/t, v \in V) \mid X_u > t \dto Y^u, \qquad t \to \infty.
\]
Recall the tail tree $(\Xi_{u,v}, v \in V \setminus u)$ in \eqref{eqn:YtoXi} and put $\Xi_{u,u} = 0$. Equations~\eqref{eqn:limitY} and~\eqref{eq:XlimN} in combination with Theorem~2 in \citet{mazo} and the continuous mapping theorem imply that
\[
	(X_v/t, v \in V) \mid X_u > t \dto (\zeta \Xi_{u,v}, v \in V), \qquad t \to \infty,
\]
where $\zeta$ is a unit-Pareto random variable, independent of the log-normal random vector $(\Xi_{u,v}, v \in V)$. Comparing the two previous limit relations, we find that $Y^u$ is equal in distribution to $(\zeta \Xi_{u,v}, v \in V)$. The representation $\ln \Xi_{u,v} = \sum_{e \in \path{u}{v}} \ln M_e$ as path sums starting from $u$ over independent Gaussian increments $\ln M_e$ along the edges implies that the Gaussian vector $(\ln \Xi_{u,v}, v \in V)$ satisfies the global Markov property with respect to $\T$. Since $\zeta$ is independent of $(\Xi_{u,v}, v \in U)$ this Markov property then also holds for $(\zeta \Xi_{u,v}, v \in V)$ and thus also for $Y^u$. But this means exactly that the multivariate Pareto distribution associated to the Hüsler--Reiss distribution with parameter matrix $\Lambda(\theta)$ in \eqref{eqn:lambda} is an extremal graphical model with respect to $\T$.

By way of comparison, the random vector $Z^*$ constructed via properties (Z1)--(Z2) in Section~\ref{ssec:Y} is not max-stable nor multivariate Pareto, but it satisfies the global Markov property with respect to $\T$ and it belongs to $D(H_{\Lambda(\theta)})$, motivating the chosen structure of $\Lambda(\theta)$ in \eqref{eqn:lambda}. In Section~\ref{ssec:introX}, our assumption on $\xi$ after transformation to $X$ with unit-Pareto margins is that $X \in D(H_{\Lambda(\theta)})$. In this sense, we require that the extremal dependence of $\xi$ is like the one of the graphical model $Z^*$. Our approach is thus different from the one in \citet{engelke+h:2020}, who postulate a new definition of extremal graphical models for multivariate Pareto vectors, but without regard for the max-domain of attraction of the corresponding max-stable distributions. Still, for graphical models with respect to trees, both methods arrive at the same structure for the Hüsler--Reiss parameter matrix $\Lambda(\theta)$.

\subsection{Comparison with the Lee--Joe structured Hüsler--Reiss model}


\label{app:LeeJoe}

\citet{joe} already proposed a way to bring structure to the parameter matrix $\Lambda = (\lambda_{ij}^2)_{i,j=1}^d$ of a $d$-variate max-stable Hüsler--Reiss distribution. Recall that \citet{hr} studied the asymptotic distribution of the component-wise maxima of a triangular array of row-wise independent and identically distributed Gaussian random vectors, the $n$-th row having correlation matrix $\rho(n)$. Assuming $(1 - \rho_{ij}(n)) \ln(n) \to \lambda_{ij}^2$ as $n \to \infty$, they found the limit to be the distribution bearing their name. Motivated by this property, \citet{joe} propose to set $\lambda_{ij}^2 = (1 - \rho_{ij}) \nu$ where $\rho = (\rho_{ij})_{i,j=1}^d$ is a structured correlation matrix and $\nu > 0$ is a free parameter. They then introduce the Hüsler--Reiss distributions that result from imposing on $\rho$ the structure of a factor model or the one of a $p$-truncated vine. If $p = 1$, the latter becomes a Markov tree and we can compare their model with ours. In their case, a free correlation parameter $\alpha_e \in (-1, 1)$ is associated to each edge $e \in E$ of the tree on $V = \{1,\ldots,d\}$. The correlation matrix $\rho$ of the resulting Gaussian graphical model is
\[
	\rho_{ij} = \prod_{e \in \path{i}{j}} \alpha_e,
	\qquad i,j \in V.
\]
The Lee--Joe model for the structured Hüsler--Reiss matrix $\Lambda_{\mathrm{LJ}}$ derived from $\rho$ is therefore
\begin{equation}
\label{eqn:LJ}
	\lambda_{ij}^2 
	= (1 - \rho_{ij}) \nu 
	= \left( 1 - \prod_{e \in \path{i}{j}} \alpha_e \right) \nu,
	\qquad i,j \in V.
\end{equation}

The model in \eqref{eqn:LJ} is to be compared with the one in our Eq.~\eqref{eqn:lambda}. The former has $(d-1)+1=d$ free parameters, $(\alpha_e, e \in E)$ and $\nu$, whereas the latter has only $d-1$ free parameters $(\theta_e, e \in E)$. In Eq.~\eqref{eqn:lambda}, the Hüsler--Reiss parameters satisfy
\[
	\lambda_{ij}^2 = \sum_{e \in \path{i}{j}} \lambda_{e}^2,
	\qquad i,j \in V,
\]
where we write $\lambda_e = \lambda_{ab}$ for $e = (a, b) \in E$. In contrast, the Lee--Joe parameter matrix in Eq.~\eqref{eqn:LJ} only satisfies this additivity relation asymptotically as $\nu \to \infty$. For instance, on a tree with $d = 3$ nodes and edges $(1, 2)$ and $(2, 3)$, i.e., a chain, their and our models satisfy respectively
\begin{align*}
	\lambda_{13}^2 
	&= \lambda_{12}^2 + \lambda_{23}^2 - \nu^{-1} \lambda_{12}^2 \lambda_{23}^2
	&& \text{for $\lambda_{ij}^2$ as in Eq.~\eqref{eqn:LJ},} \\
	\lambda_{13}^2 
	&= \lambda_{12}^2 + \lambda_{23}^2 
	&& \text{for $\lambda_{ij}^2$ as in Eq.~\eqref{eqn:lambda}.}
\end{align*}
Since the Lee--Joe parameter $\nu > 0$ takes the role of $\ln(n)$ in the Hüsler--Reiss limit relation, we can think of it as being large. In this interpretation, our parametrization becomes a limiting case of the one of \citet{joe}.

Whereas the Lee--Joe parametrization is motivated from the limit result in \citet{hr} for row-wise maxima of Gaussian triangular arrays, ours is motivated as the max-stable attractor of certain regularly varying Markov trees as in \citet{mazo}, the vector $Z^*$ in Section~\ref{ssec:Y} serving as example. A possible advantage of our structure is that the resulting multivariate Pareto vector falls into the framework of conditional independence for such vectors is an extremal graphical model as in \citet[Definition~2]{engelke+h:2020}, as discussed in Appendix~\ref{app:eh}. In general, this is not true for the multivariate Pareto vector induced by the Lee--Joe structure. For the trivariate tree in the preceding paragraph, for instance, the criterion in Proposition~3 in \citet{engelke+h:2020} is easily checked to be verified for our matrix $\Lambda$ but not for the one of \citet{joe}.

For the Seine data, we compare the fitted Lee--Joe tail dependence model with ours.
In order to avoid possible identifiability issues for the Lee--Joe parameters, we suppress the nodes with latent variables and use the right-hand tree in Figure~\ref{fig:seine_scheme_shrink} for the $d = 5$ observable ones, corresponding to the five locations in the dataset.
The estimation method of \citet{joe} is based on pairwise copulas and annual maxima via composite likelihood.
For year $y$ and for variable $j \in \{1, \ldots, d\}$, let $m_{y,j}$ be the maximum of all observations for that variable and that year, insofar available. These maxima are reported in Table~\ref{annual_maxima} and their availability depends on the variable, i.e., on the location. For Melun there are only 15 such annual maxima in comparison to 33 for Nemours.
For each variable~$j$, transform these maxima to uniform margins $\hat{u}_{y,j}$ using the empirical cumulative distribution function based on all available maxima for that variable. Note that for this transformation, \citet{joe} rely on estimated generalized extreme value distributions instead.
For variables $i, j \in \{1,\ldots,d\}$, let $\mathcal{Y}_{ij}$ be the set of years $y$ for which annual maxima are available for both variables. For the pair (Paris, Meaux) this is the period 1999--2019 while for the pair (Paris, Nemours) this is 1990--2019.
Let $c(u, v; \lambda^2)$ denote the bivariate Hüsler--Reiss copula density with parameter $\lambda^2$.
Following \citet{joe}, we estimate the free parameters in Eq.~\eqref{eqn:LJ} by maximizing a composite likelihood: letting $\lambda_{ij}^2(\alpha, \nu)$ denote the right-hand side in \eqref{eqn:LJ}, the parameter estimates are
\[
	(\hat{\alpha}, \hat{\nu})
	=
	\argmax_{\alpha \in (-1, 1)^{d-1}, \nu \in (0, \infty)}
	\sum_{i, j = 1}^d \sum_{y \in \mathcal{Y}_{ij}}
	\ln c\bigl(\hat{u}_{i,y}, \hat{u}_{j,y}; \lambda_{ij}^2(\alpha, \nu)\bigr).
\]
For the implementation, we relied on the \texttt{R} package \texttt{CopulaModel} \citep{CM}.

Next, we compute bivariate extremal coefficients and compare them with the non-parametric ones on the one hand and with those obtained using our own model on the other hand. The points in Figure~\ref{fig:LJ-ours} being some distance away from the diagonal, the two methods indeed seem to give somewhat different results. Moreover, there is less concordance between the non-parametric estimates and the ones from the Lee--Joe model than between the non-parametric ones and those resulting from our model: compare the red crosses in Figure~\ref{fig:LJ-ours} with those in the left-hand plots in Figure~\ref{fig:ABcomparison}. 
\begin{figure}[h]
	\centering
	\includegraphics[scale=0.75]{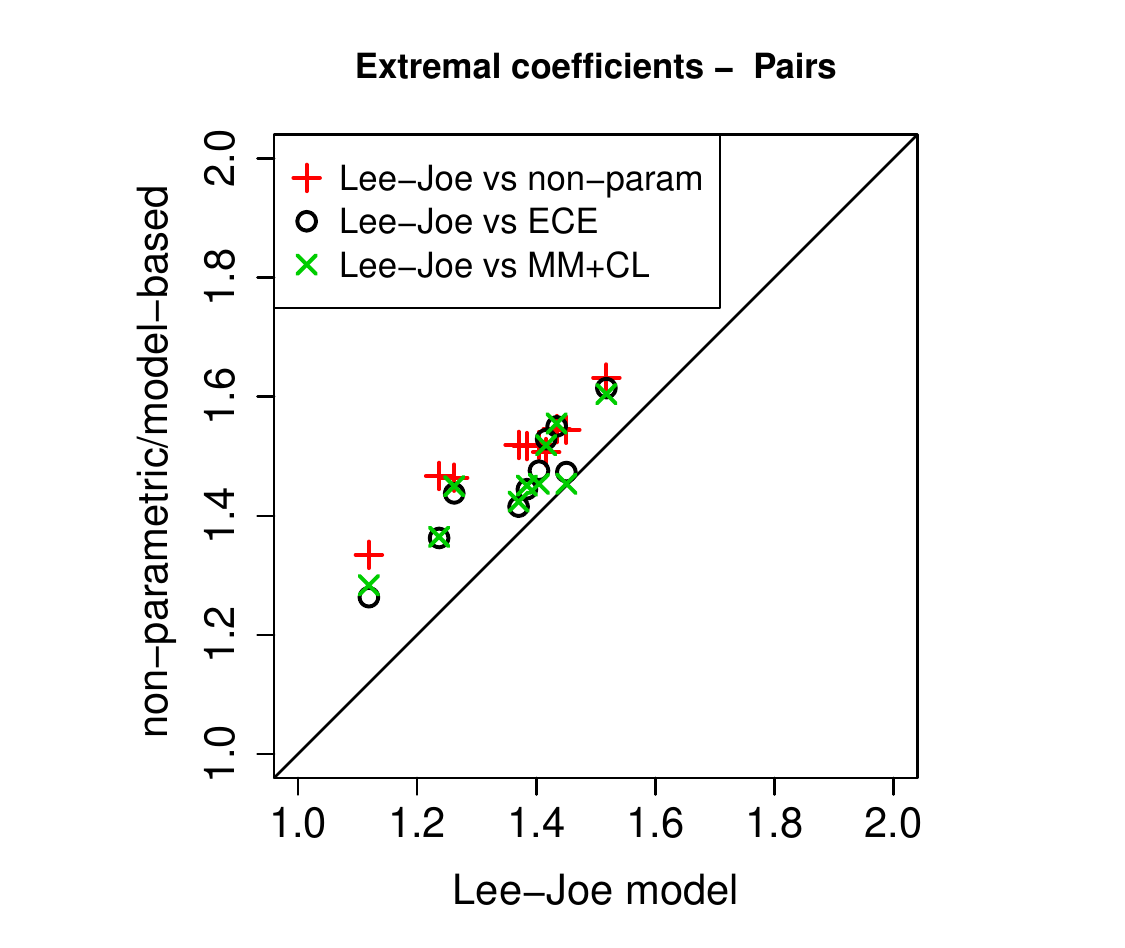}
	\caption{Bivariate extremal coefficients comparison: using the modelling and estimation method of \citet[Section~4--5]{joe} and those proposed in this paper.}
	\label{fig:LJ-ours}
\end{figure}

\subsection{Proof of Proposition~\ref{prop:Ydomain}} 
\label{app:proofprop}

We show that the stdf $l$ of $Z^*$ is equal to $l_U$ in \eqref{eq:stdf_hr} with $U = V$ and $\Lambda = \Lambda(\theta)$. Since the margins of $Z^*$ are unit-Fréchet, they are tail equivalent to the unit-Pareto distribution, so that standardization to the unit-Pareto distribution is unnecessary. By the inclusion--exclusion principle,
\begin{align}
  l(x_1,\ldots,x_d)&
  =
  \lim_{t \to \infty}
  t \, \P(Z_1^*>t/x_1 \text{ or \ldots{} or } Z_d^*>t/x_d)\nonumber \\&
  =
  \sum_{i=1}^d(-1)^{i-1} \sum_{\substack{W\subseteq V\\|W|=i}}
  \lim_{t \to \infty} t \, \P(Z_v^*>t/x_v, v\in W)
\label{eqn:39}
\end{align}
for $x \in (0, \infty)^d$. For any non-empty $W\subseteq V$ and any $u\in W$, it holds by \eqref{eqn:YtoXi} in combination with Theorem~2 in \citet{mazo} that
\begin{align*}
    \lim_{t \to \infty} t \, \P(Z_v^*>t/x_v, v\in W)
	&
    =
    \lim_{t \to \infty} t\frac{1}{t/x_u}
    \P \left( 
    	\frac{Z_u^*}{t/x_u} \frac{Z_v^*}{Z_u^*} > \frac{x_u}{x_v}, \, v\in W \setminus u 
    	\, \Big| \, 
    	Z_u^*>\frac{t}{x_u} 
    \right)\\&
    =
    x_u \, \P(\zeta \Xi_{uv} > x_u/x_v, v\in W \setminus u),
\end{align*}
with $\zeta$ a unit-Pareto variable independent of $(\Xi_{uv}, v \in V \setminus u)$. Using the fact that $1/\zeta$ is a uniform variable on $[0,1]$ and setting $\Xi_{uu}=1$ we have
\begin{align*}
	\lefteqn{
    x_u \, \P(\zeta \Xi_{uv}>x_u/x_v, v\in W\setminus u)
	} \\
    &=
    x_u \, \P\bigl(1/\zeta < \min\{(x_v/x_u)\Xi_{uv}, v\in W\setminus u \}\bigr)\\&
    = x_u \, \E[\min\{1,(x_v/x_u)\Xi_{uv}, v\in W\setminus u\} ]
    =
    \E[\min\{x_v \Xi_{uv}, v\in W\}]
    \\&=
    \int_0^{x_{u}} 
    	\P \bigl(x_v \Xi_{uv}>y, v\in W\setminus {u}\bigr) \, 
    \mathrm{d} y
    \\&=
    \int_{-\ln x_{u}}^{\infty}
    	\P\bigl(
    		\ln \Xi_{uv} > (-\ln x_v)-z, v\in W\setminus u
    	\bigr) \,
    	\exp(-z) \, 
    \mathrm{d} z
\end{align*}
upon a change of variable $y = \exp(-z)$. Since $(\ln \Xi_{uv}, v \in V \setminus u)$ is multivariate normal with mean vector $\mu_{V,u}(\theta)$ and covariance matrix $\Sigma_{V,u}(\theta)$, we obtain from \eqref{eqn:39} that the stdf of $Z^*$ is equal to $- \ln H_{\Lambda(\theta)}(1/x_1, \ldots, 1/x_d)$, with $H_{\Lambda}$ the cumulative distribution function in Eqs.~(3.5)--(3.6) in \citet{hr}, but with unit-Fréchet margins rather than Gumbel ones. By Remark~2.5 in \citet{nikoloulopoulos+j+l:2009}, this stdf is equal to the one given in \eqref{eq:stdf_hr}, as required.

\subsection{Choice of node neighborhoods and parameter identifiability}
\label{app:Wu}

The MM estimator in \eqref{finalmm} and the CL estimator in Section~\ref{ssec:cle} involve the choice of subsets $W_u \subseteq U$ for $u \in U$. These sets or neighborhoods need to be chosen in such a way that the parameter vector $\theta$ is still identifiable from the collection of covariance matrices $\Sigma_{W_u,u}(\theta)$ for $u \in U$ and thus from the path sums $p_{a,b}$ for $a, b \in W_u$ and $u \in U$. Here we illustrate this issue with an example.

Consider the following structure on five nodes where all variables are observable except for the one on node~$2$:
\begin{center}
	%
	\begin{tikzpicture}
	\node[hollow node] (1) at (0,0) {$\xi_1$};
	\node[hollow node, fill=gray] (2) at (1.65,0)  {\textcolor{white}{$\xi_2$}};
	\node[hollow node] (3) at (3.3,0)  {$\xi_3$};
	\node[hollow node] (4) at (4.95,0)  {$\xi_4$};
	\node[hollow node] (5) at (2.85,1)  {$\xi_5$};
	\path (1) edge (2);
	\path (2) edge (3);
	\path (3) edge (4);
	\path (2) edge (5);
	\path (1) -- (2) node [midway,auto=right] {\small $\theta_{12}$};
	\path (2) -- (3) node [midway,auto=right] {\small $\theta_{23}$};
	\path (3) -- (4) node [midway,auto=right] {\small $\theta_{34}$};
	\path (2) -- (5) node [midway,auto=left] {\small $\theta_{25}$};
	\end{tikzpicture} 	
\end{center}
Clearly, the parameter vector $\theta = (\theta_{12}, \theta_{23}, \theta_{34}, \theta_{25})$ is identifiable from the distribution of the observable variables because the criterion of Proposition~\ref{prop:identif} is satisfied: the only node whose variable is latent has degree three.

First, consider the following subsets $W_u$ for $u\in \{1,3,4,5\}$: 
\[
W_1=\{1,5\}, \quad W_3=\{3,4\}, \quad W_4=\{3,4\}, \quad W_5=\{1,5\}.
\] 
The four $1 \times 1$ covariance matrices $\Sigma_{W_u,u}(\theta)$ that correspond to these subsets are 
\begin{align*}
\Sigma_{W_1,1}(\theta)&=\theta_{12}^2+\theta_{25}^2=p_{15}, 
&\Sigma_{W_4,4}(\theta)&=\theta_{34}^2=p_{34},\\
\Sigma_{W_3,3}(\theta)&=\theta_{34}^2=p_{34}, 
&\Sigma_{W_5,5}(\theta)&=\theta_{12}^2+\theta_{25}^2=p_{15}.
\end{align*}
We are not able to identify the parameter $\theta$ because the set of path sums $\{p_{15}, p_{34}\}$ is too small: we have only two equations andfor four unknowns. 

Second, consider instead the following node sets    
\[
W_1=\{1,5,3\}, \quad W_3=\{1,3,4,5\}, \quad W_4=\{3,4\}, \quad W_5=\{1,5\}.
\]
The four covariance matrices $\Sigma_{W_u,u}(\theta)$ are now
\begin{align*}
\Sigma_{W_1,1}(\theta)&=
\begin{bmatrix}
\theta_{12}^2+\theta_{25}^2&\theta_{12}^2\\
\theta_{12}^2&\theta_{12}^2+\theta_{23}^2
\end{bmatrix}=
\begin{bmatrix} 
p_{15}&p_{12}\\p_{12}&p_{13}
\end{bmatrix},
&\Sigma_{W_4,4}(\theta)&=\theta_{34}^2=p_{34}, \\
\Sigma_{W_3,3}(\theta)&=
\begin{bmatrix}
\theta_{12}^2+\theta_{23}^2&0&\theta_{23}^2\\
0&\theta_{34}^2&0\\
\theta_{23}^2&0&\theta_{23}^2+\theta_{25}^2
\end{bmatrix}=
\begin{bmatrix} 
p_{13}&0&p_{23}\\0&p_{34}&0\\p_{23}&0&p_{35}
\end{bmatrix},
&\Sigma_{W_5,5}(\theta)&=
\theta_{12}^2+\theta_{25}^2=p_{15}.
\end{align*}
Clearly, the four edge parameters are identifiable from these covariance matrices.

\subsection{Finite-sample performance of the estimators}
\label{app:simu}

We assess the performance of the three estimators introduced in Section~\ref{section:estim} by numerical experiments involving Monte Carlo simulations. 

Let $\xi'=(\xi'_v, v\in V)$ be a random vector with continuous joint probability density function and satisfying the global Markov property, \eqref{eq:gmp}, with respect to the graph in Figure~\ref{fig:sim_miss}. Let  $f_u(x_u)$ for any $u\in V$ be the marginal density function of the variable $\xi'_u$ and let $x_j \mapsto f_{j\mid v}(x_j\mid x_v)$ be the conditional density function of $\xi'_j$ given $\xi'_v = x_v$. For any $u\in V$ the joint density function of $\xi'$ is 
\begin{equation} \label{eq:jdf}
f(x)=
f_{u}(x_u)\prod_{(v,j)\in E_u}f_{j\mid v}(x_j\mid x_v),
\end{equation}
with $E_u \subseteq E$ the set of edges directed away from $u$, i.e., $(v,j) \in E_u$ if and only if $v = u$ or $v$ separates $u$ and $j$. The joint density $f$ is determined by $d-1$ bivariate densities $f_{vj}$. It would seem that the joint density $f$ depends on $u$, but this is not so, as can be confirmed by writing out the bivariate conditional densities. We make two parametric choices: the univariate margins $f_u$ are unit Fréchet densities,  $f_j(x_j)=\exp(-1/x_j)/x_j^2$ for $x_j \in (0, \infty)$, and the bivariate margins for each pair of variables on adjacent vertices $j,v$ are Hüsler--Reiss distributions with parameter $\theta_{jv}$. Hence, $\xi'$ corresponds to the vector $Z^*$ in Section~\ref{ssec:Y}. 

\begin{figure}[]
	\centering
	\begin{tikzpicture} 
	\node[hollow node](1){$\xi'_2$}
	child[grow=left]{node [hollow node, fill=gray] (a3) {\textcolor{white}{$\xi'_1$}}
		child {node [hollow node] (a4) {$\xi'_6$}}
		child {node [hollow node] (a5) {$\xi'_7$}}}
	child[grow=right]{node [hollow node, fill=gray] (a) {\textcolor{white}{$\xi'_3$}}
		child {node [hollow node] (a1) {$\xi'_5$}}
		child {node [hollow node] (a2) {$\xi'_4$}}
	};
	
	\path (1) -- (a) node [midway,auto=left] {\small $0.3$};
	\path (1) -- (a3) node [midway,auto=right] {\small $0.1$};
	\path (a) -- (a1) node [midway,auto=left] {\small $0.5$};
	\path (a) -- (a2) node [midway,auto=left] {\small $0.8$};
	\path (a3) -- (a4) node [midway,auto=right] {\small $0.2$};
	\path (a3) -- (a5) node [midway,auto=right] {\small $1.2$};
	\end{tikzpicture}
	\caption{Tree used for the graphical model underlying the data-generating process in the simulation study in Appendix~\ref{app:simu}. The value of the parameters are $\theta_{12}=0.1$, $\theta_{23}=0.3$, $ \theta_{34}=0.8$, $\theta_{35}=0.5$, $\theta_{16}=0.2$ and $\theta_{17}=1.2$. Variables $\xi'_1$ and $\xi'_3$ are latent.}
	\label{fig:sim_miss}
\end{figure}
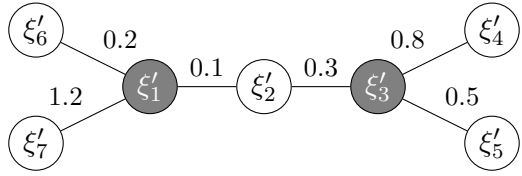

To generate an observation from the left hand-side of \eqref{eq:jdf} above we use the right hand-side of that equation, proceeding iteratively, walking along paths starting from $u$ using the conditional densities.
An observation of $\xi_j'$ given $\xi'_v = x_v$ is generated via the inverse function of the cdf $x_j \mapsto F_{j|v}(x_j \mid x_v)$, the conditional cdf of $\xi'_j$ given $\xi'_v = x_v$. To do so, the equation $F_{j|v}(x_j \mid x_v)-p=0$ is solved numerically as a function in $x_j$ for fixed $p\in(0,1)$. The choice of the Hüsler--Reiss bivariate distribution gives the following expression for $F_{j\mid v}(x_j\mid x_v)$:
\begin{equation*}
\begin{split}
&\Phi\left(
\frac{\theta_{jv}}{2}+\frac{1}{\theta_{jv}}\ln\frac{x_j}{x_v}
\right)
\cdot
\exp\left[
-\frac{1}{x_v}\left\{\Phi\left(\frac{\theta_{jv}}{2}+\frac{1}{\theta_{jv}}\ln\frac{x_j}{x_v}\right)-1\right\}
-
\frac{1}{x_j}\Phi\left(\frac{\theta_{jv}}{2}+\frac{1}{\theta_{jv}}\ln\frac{x_v}{x_j}\right)
\right].
\end{split}
\end{equation*}
After generating all the variables $(\xi'_{v})_{v\in V}$ in this way, independent standard normal noise $\varepsilon\sim \mathcal{N}_{d}(0,I_d)$ is added. Although the distribution of $\xi=\xi'+\varepsilon$ is not necessarily a graphical model with respect to the graph in Figure~\ref{fig:sim_miss}, it is still in the max-domain of attraction of a Hüsler--Reiss distribution with parametric matrix as in \eqref{eqn:lambda}. Hence the vector $\xi$ is still in the class of models under consideration in Section~\ref{ssec:introX}.  
The data on nodes~1 and~3 are discarded and not used in the estimation so as to mimic a model with two latent variables, $\xi_1$ and $\xi_3$; according to Proposition~\ref{prop:identif}, the six dependence parameters are still identifiable. In this way, we generate $200$ samples of size $n = 1000$. The estimators are computed with threshold tuning parameter $k \in \{25,50,100,150,200,300\}$. 

The bias, standard deviation and root mean squared errors of the three estimators are shown in Figure~\ref{fig:theta_miss1-4} and Figure~\ref{fig:theta_miss1-7} for the six parameters. The MME and CLE are computed with the sets $W_u$ being $W_2 = \{2, 4, 5, 6, 7\}$, $W_4 = W_5 = \{2, 4, 5\}$, and $W_6 = W_7 = \{2, 6, 7\}$. As is to be expected, the absolute value of the bias is increasing with $k$, while the standard deviation is decreasing and the mean squared error has a $U$-shape and eventually increases with $k$. The MME and CLE have very similar properties. For larger values of the true parameter, e.g. $\theta_{34}=0.8$ and $\theta_{17}=1.2$, all the three estimators perform in a comparable way. The ECE tends to have larger absolute bias and standard deviation for smaller values of the true parameters.

\begin{figure}[]
	\centering
	\includegraphics[scale=0.82]{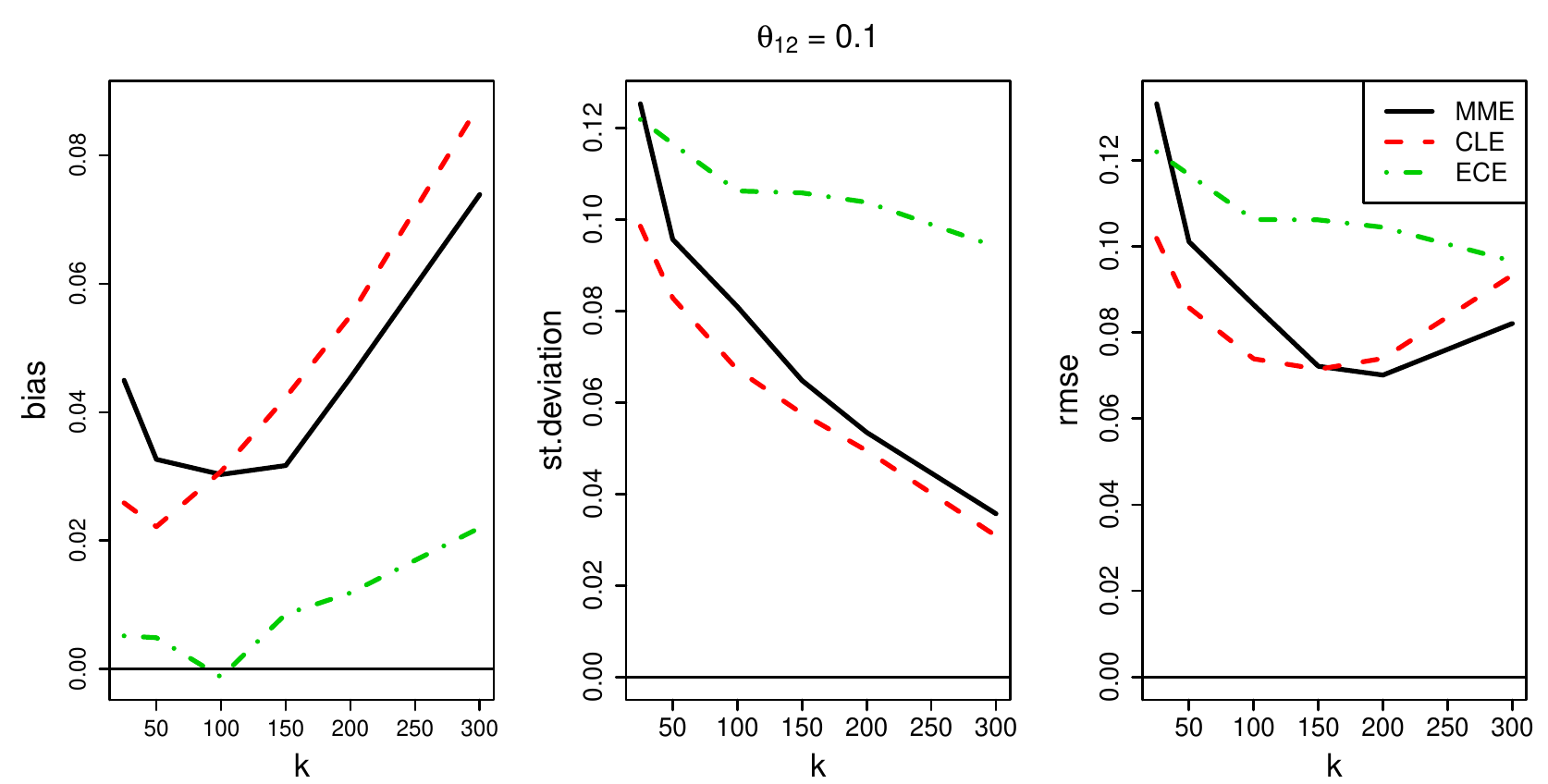}
	\includegraphics[scale=0.82]{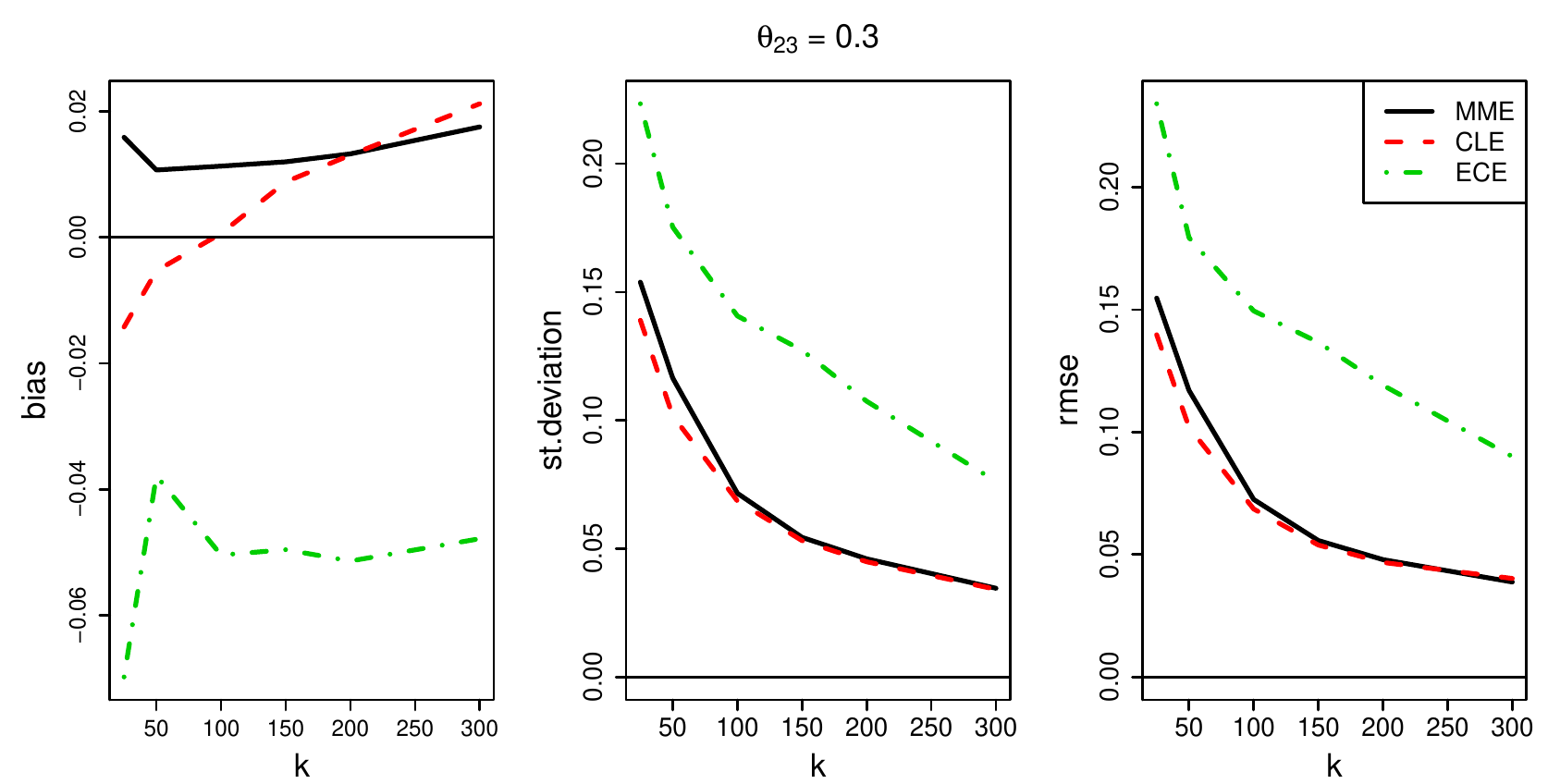}
	\includegraphics[scale=0.82]{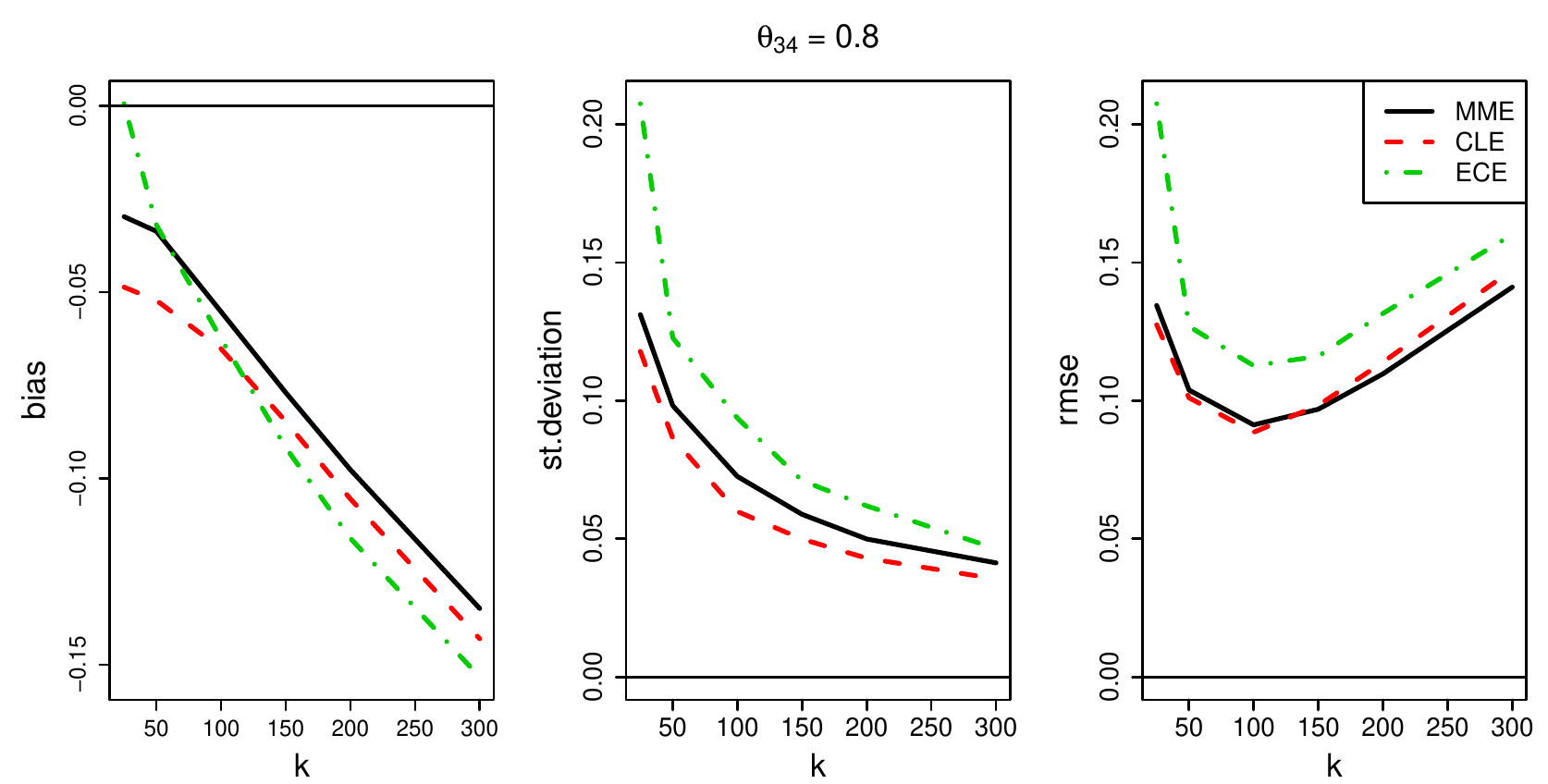}
	\caption{Bias (left), standard deviation (middle) and root mean squared error (right) of the method of moment estimator (MME), composite likelihood estimator (CLE) and pairwise extremal coefficient estimator (ECE) of the parameters $\theta_{12}$ (top), $\theta_{23}$ (middle), and $\theta_{34}$ (bottom) as a function of the threshold parameter $k$. Model and settings as described in Appendix~\ref{app:simu}.}
	\label{fig:theta_miss1-4}
\end{figure}

\begin{figure}[]
	\centering
	\includegraphics[scale=0.82]{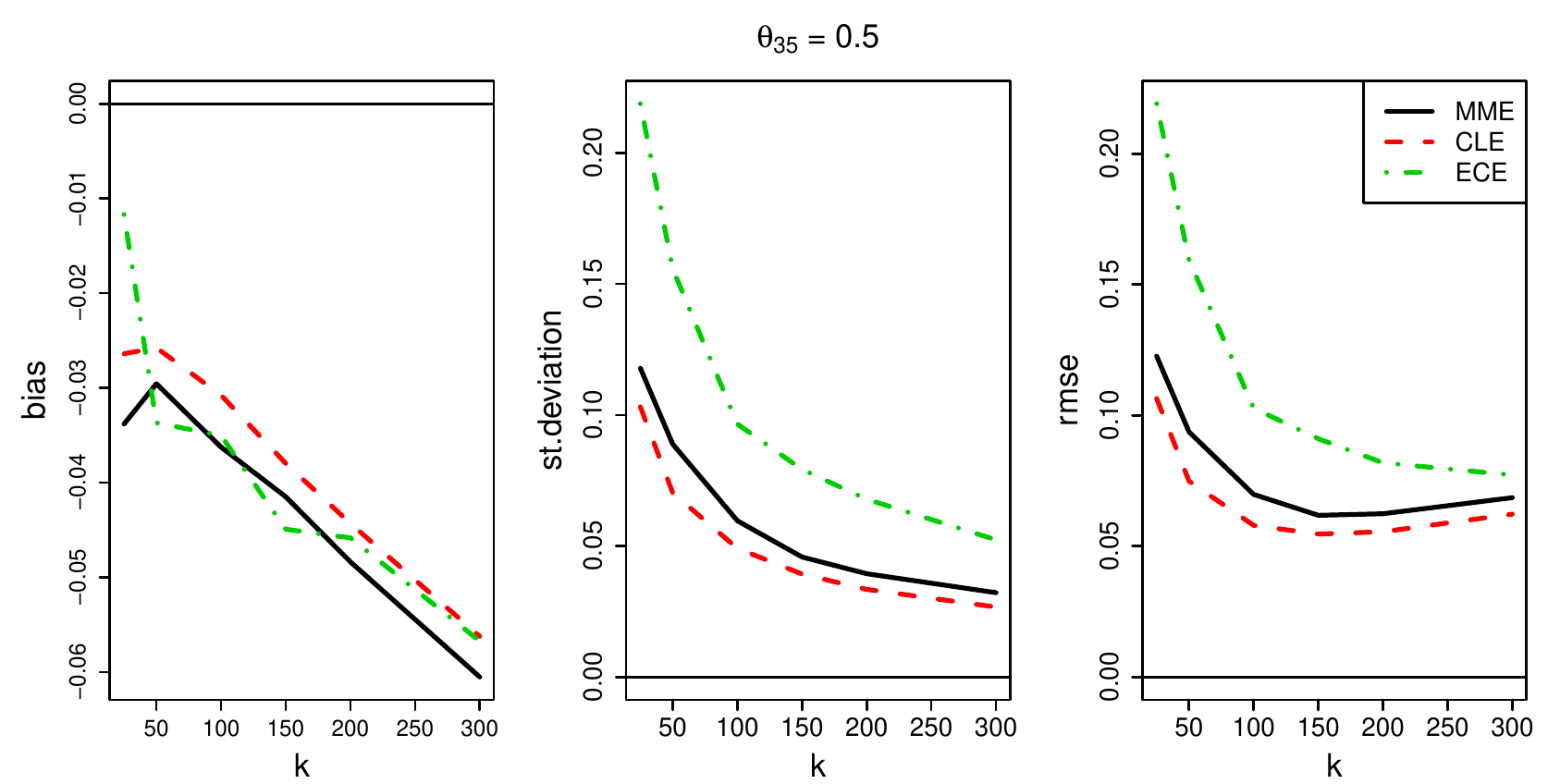}
	\includegraphics[scale=0.82]{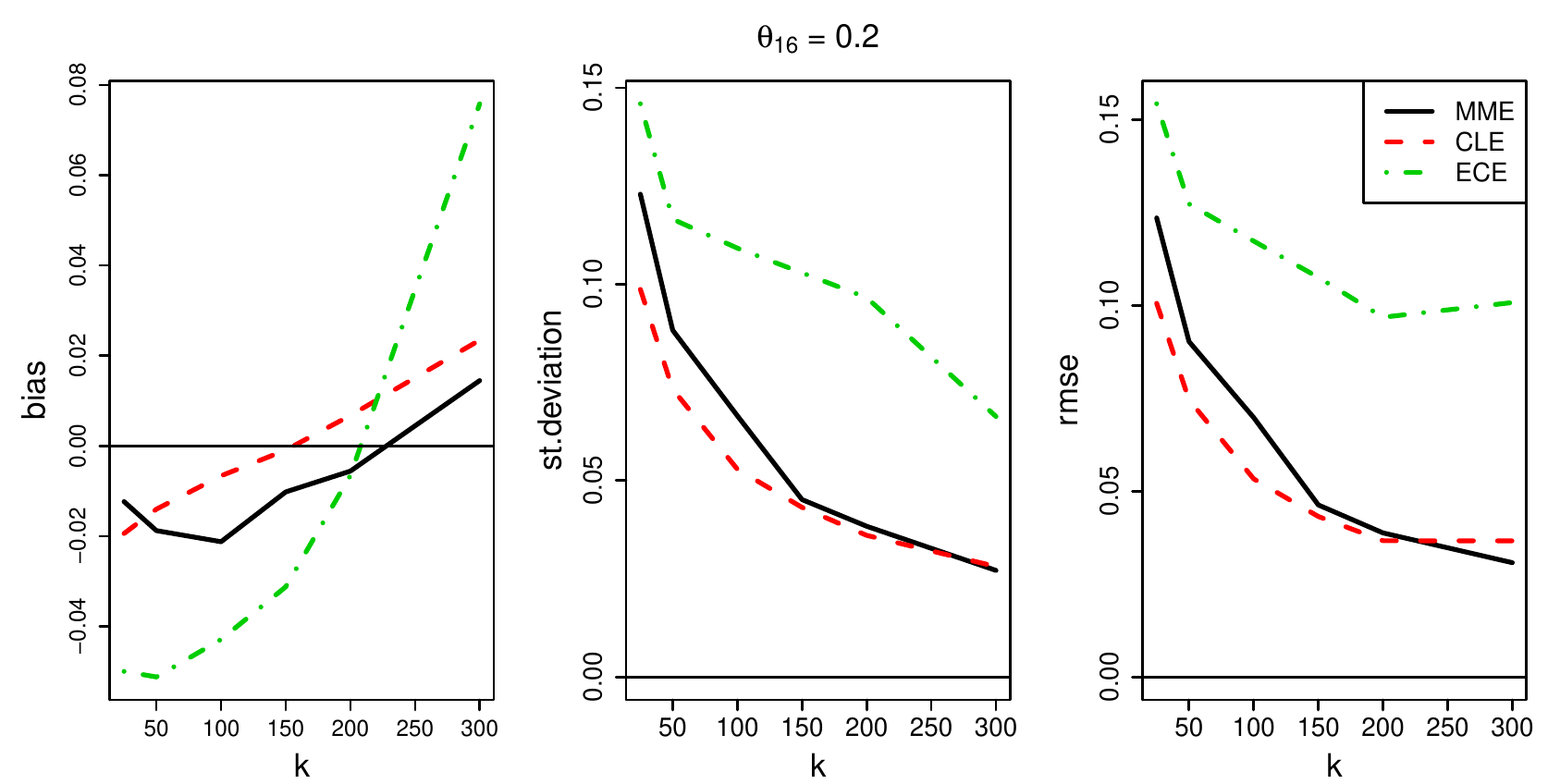}
	\includegraphics[scale=0.82]{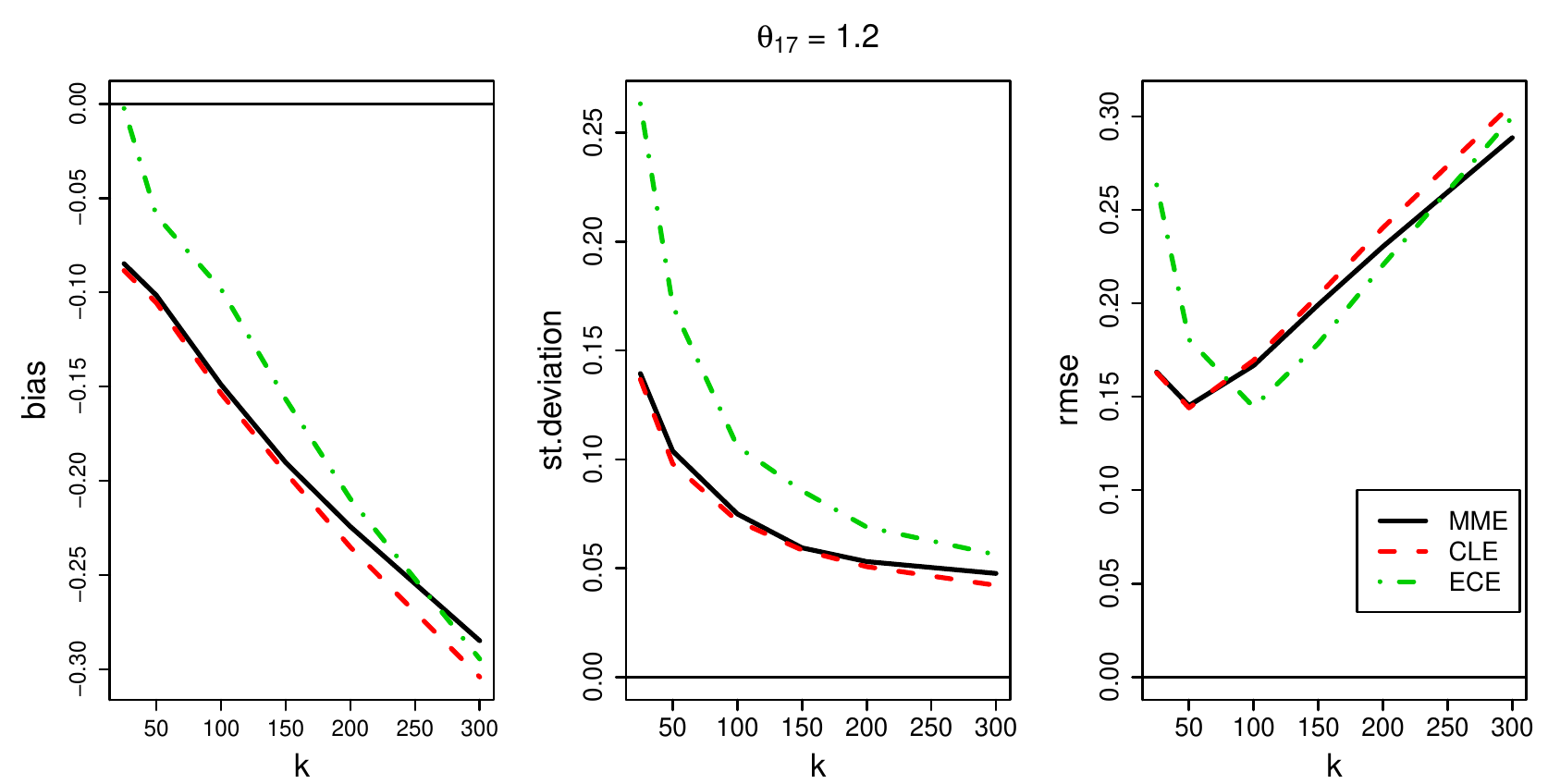}
	\caption{Bias (left), standard deviation (middle) and root mean squared error (right) of the method of moment estimator (MME), composite likelihood estimator (CLE) and pairwise extremal coefficient estimator (ECE) of the parameters $\theta_{35}$ (top), $\theta_{16}$ (middle), and $\theta_{17}$ (bottom) as a function of the threshold parameter $k$. Model and settings as described in Appendix~\ref{app:simu}.}
	\label{fig:theta_miss1-7}
\end{figure}

\subsection{Seine case study: data preprocessing}
\label{app:Seine:preproc}

The data represent water level in centimeters at the five locations mentioned above and were obtained from \emph{Banque Hydro}, http://www.hydro.eaufrance.fr, a web-site of the Ministry of Ecology, Energy and Sustainable Development of France providing data on hydrological indicators across the country. The dataset encompasses the period from January 1987 to April 2019 with gaps for some of the stations. 

Two major floods in Paris make part of our dataset: the one in June 2016 when the water level was measured at \SI{6.01}{\metre} and the one at the end of January 2018 with water levels slightly less than \SI{6}{m} measured in Paris too. A flood of similar magnitude to the ones in 2016 and 2018 occurred in 1982. By way of comparison, the biggest reported\footnote{According to the report of the Organisation for Economic Co-operation and Development (OECD) \emph{Preventing the flooding of the Seine in the Paris – Ile de France region} - p.4. } flood in Paris is the one in 1910 when the level in Paris reached \SI{8.6}{m}. 

Table~\ref{tab:sum_seine} shows the average and the maximum water level per station observed in the complete dataset. The maxima of Paris, Meaux, Melun and Nemours occurred either during the floods in June 2016 or the floods in January 2018, which can be seen from Table~\ref{annual_maxima} which displays the annual maxima at the five locations and the date of occurrence.

\begin{table}[ht]
\centering
\small
\begin{tabular}{rrrrrr}
  \hline
 Station & Paris  & Meaux  & Melun  & Nemours & Sens \\
 Period&1 Jan 1990 -- & 1 Nov 1999 -- & 1 Oct 2005 -- & 16 Jan 1987 -- &1 Jan 1990 --\\
 &9 Apr 2019&9 Apr 2019&9 Apr 2019&9 Apr 2019&9 Apr 2019\\
 (\#obs)&(10,621)&(6,287)&(4,443)&(10,154)&(9,159)\\
  \hline
Mean (cm) & 139.11 & 275.85 & 296.61 & 210.07 & 133.46 \\ 
 Max (cm) & 601.95 & 468.70 & 545.48 & 439.03 & 333.80 \\ 
   \hline
\end{tabular}
\caption{Average and maximum water level per station in the whole dataset.}
    \label{tab:sum_seine}
\end{table}
 
 From Table~\ref{annual_maxima} it can be observed that for many of the years the dates of maxima occurrence identify a period of several consecutive days during which the extreme event took place. For instance the maxima in 2007 occurred all in the period 4--8 March, which suggests that they make part of one extreme event. Similar examples are the periods 25--31 Dec 2010, 4--12 Feb 2013, 2--4 June 2016, etc. For most of the years this period spans between 3 and 7 days. We will take this into account when forming independent events from the dataset. In particular we choose a window of 7 consecutive calendar days within which we believe the extreme event have propagated through the seven locations. We have experimented with different length of that window, namely 3 and 5 days event period, but we have found that the estimation and analysis results are robust to that choice. 

Figure~\ref{fig:am_events} illustrates the water levels attained at the different locations during selected years from Table~\ref{annual_maxima}. The maxima of Sens, Nemours and Meaux seem to be relatively homogeneous compared to the maxima in Paris. 

\begin{table}[]
\centering
	\begin{tabular}{|c|cc|cc|cc|cc|cc|}
		\hline
		Year &
		\multicolumn{2}{c|}{Paris} & 
		\multicolumn{2}{c|}{Meaux}&
		\multicolumn{2}{c|}{Melun}&
		\multicolumn{2}{c|}{Nemours}&
		\multicolumn{2}{c|}{Sens}\\
		\hline
		&date & cm & date & cm & date & cm & date & cm & date & cm \\ 
		\hline
		1987 & n/a & n/a & n/a & n/a & n/a & n/a & 15/11 & 221 & n/a & n/a \\ 
		1988 & n/a & n/a & n/a & n/a & n/a & n/a & 13/02 & 247 & n/a & n/a \\ 
		1989 & n/a & n/a & n/a & n/a & n/a & n/a & 04/03 & 213 & n/a & n/a \\ 
		1990 & 17/02 & 254 & n/a & n/a & n/a & n/a & 03/07 & 217 & 18/02 & 183 \\ 
		1991 & 10/01 & 339 & n/a & n/a & n/a & n/a & 23/04 & 212 & 04/01 & 175 \\ 
		1992 & 06/12 & 293 & n/a & n/a & n/a & n/a & 15/01 & 218 & 06/12 & 170 \\ 
		1993 & 28/12 & 377 & n/a & n/a & n/a & n/a & 26/09 & 217 & 26/12 & 184 \\ 
		1994 & 11/01 & 478 & n/a & n/a & n/a & n/a & 19/10 & 253 & 09/01 & 260 \\ 
		1995 & 30/01 & 500 & n/a & n/a & n/a & n/a & 21/03 & 277 & 28/01 & 259 \\ 
		1996 & 04/12 & 324 & n/a & n/a & n/a & n/a & 03/12 & 219 & 04/12 & 194 \\ 
		1997 & 28/02 & 313 & n/a & n/a & n/a & n/a & 03/07 & 214 & n/a & n/a \\ 
		1998 & 02/05 & 358 & n/a & n/a & n/a & n/a & 21/12 & 216 & n/a & n/a \\
		\rowcolor{bll}
		1999 & 31/12 & 517 & 30/12 & 413 & n/a & n/a & 30/12 & 252 & 31/12 & 259 \\ 
		2000 & 01/01 & 515 & 02/01 & 407 & n/a & n/a & 07/06 & 233 & 01/01 & 239 \\ 
		2001 & 25/03 & 517 & 30/03 & 427 & n/a & n/a & 16/03 & 260 & 17/03 & 334 \\ 
		2002 & 03/03 & 410 & 03/03 & 403 & n/a & n/a & 01/01 & 272 & 01/01 & 200 \\
		\rowcolor{bll}
		2003 & 08/01 & 410 & 09/01 & 331 & n/a & n/a & 05/01 & 253 & 06/01 & 182 \\ 
		\rowcolor{bll}
		2004 & 21/01 & 372 & 21/01 & 383 & n/a & n/a & 16/01 & 230 & 20/01 & 205 \\ 
		2005 & 17/02 & 192 & 22/01 & 296 & 07/12 & 306 & 24/01 & 217 & 16/02 & 152 \\ 
		2006 & 14/03 & 340 & 08/10 & 333 & 13/03 & 357 & 11/03 & 219 & 12/03 & 223 \\ 
		\rowcolor{bll}
		2007 & 05/03 & 308 & 08/03 & 339 & 05/03 & 333 & 04/03 & 217 & 05/03 & 176 \\ 
		2008 & 29/03 & 301 & 01/01 & 250 & 23/03 & 342 & 15/04 & 219 & 23/03 & 167 \\ 
		2009 & 26/01 & 169 & 03/09 & 288 & 25/12 & 311 & 25/01 & 218 & 25/01 & 152 \\ 
		\rowcolor{bll}
		2010 & 28/12 & 387 & 31/12 & 355 & 27/12 & 390 & 25/12 & 230 & 26/12 & 220 \\ 
		2011 & 01/01 & 337 & 07/01 & 347 & 18/12 & 356 & 09/10 & 287 & 18/12 & 167 \\ 
		2012 & 09/01 & 330 & 23/12 & 308 & 09/01 & 353 & 05/01 & 220 & 08/01 & 186 \\ 
		\rowcolor{bll}
		2013 & 09/02 & 390 & 12/02 & 347 & 05/02 & 366 & 04/02 & 252 & 07/05 & 221 \\ 
		2014 & 03/03 & 273 & 13/12 & 295 & 16/02 & 321 & 02/03 & 226 & 15/02 & 157 \\ 
		2015 & 07/05 & 347 & 21/11 & 295 & 07/05 & 389 & 05/05 & 255 & 06/05 & 211 \\ 
		\rowcolor{bll}
		2016 & 03/06 & 602 & 03/06 & 329 & 03/06 & 545 & 02/06 & 439 & 04/06 & 235 \\ 
		2017 & 07/03 & 243 & 28/12 & 304 & 12/01 & 307 & 08/03 & 221 & 08/03 & 151 \\ 
		\rowcolor{bll}
		2018 & 29/01 & 586 & 02/02 & 469 & 28/01 & 488 & 24/01 & 264 & 26/01 & 288 \\ 
		2019 & 03/02 & 222 & 31/03 & 292 & 22/01 & 314 & 02/02 & 216 & 26/02& 149 \\ 
		\hline
	\end{tabular}
\caption{Annual maxima for all stations. We highlighted some of the years where there is a clear indication that the dates of the occurrence of the maxima at the different locations form a period of several consecutive days. The maxima attained during this period across stations can thus be considered as one extreme event. The water level in centimeters is rounded to the nearest integer.}
\label{annual_maxima}
\end{table}

\begin{figure}[ht]
    \centering
    \includegraphics[scale=0.75]{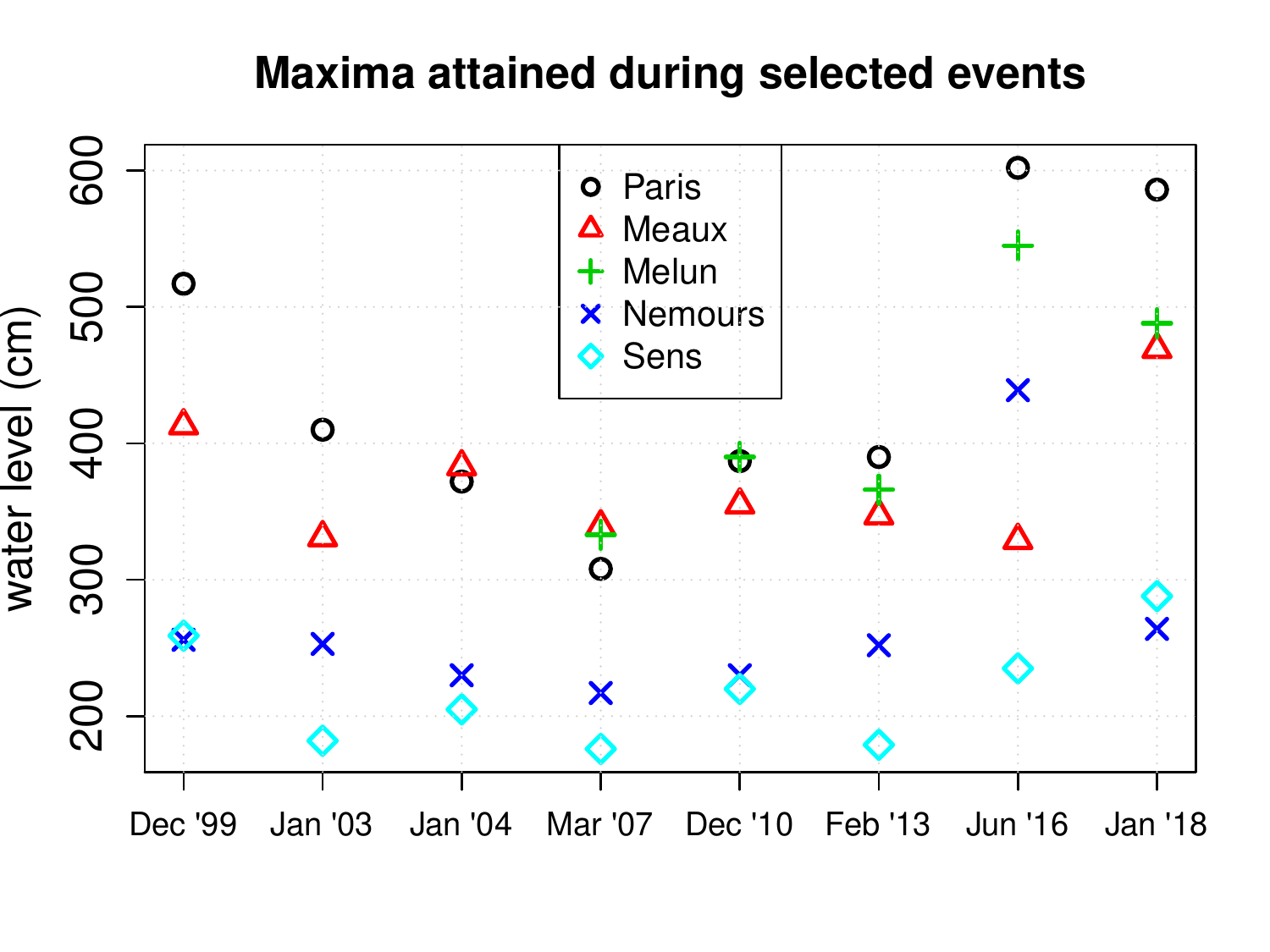}
    \caption{Plot of maxima attained at each location during selected events from Table~\ref{annual_maxima}. }
    \label{fig:am_events}
\end{figure}

For all of the stations water level is recorded several times a day and we take the daily average  to form a dataset of daily observations. Accounting for the gaps in the mentioned period (see Table~\ref{tab:sum_seine} and Table~\ref{annual_maxima}) we end up with a dataset of 3408 daily observations in the period from 1 October 2005 to 8 April 2019. The dataset represents five time series each of length 3408. We consider two sources of non-stationarity: seasonality and serial correlation. 

The serial correlation can be due to closeness in time or presence of long term time trend in the observations. We first apply a declustering procedure, similar to the one in \citet{asadi} in order to form a collection of supposedly independent events. As a first step each of the series is transformed to ranks and the sum of the ranks is computed for every day in the dataset. The day with the maximal rank is chosen, say $d^*$. A period of $2r+1$ consecutive days, centered around $d^*$ is considered and only the observations falling in that period are selected to form the event. Within this period the station-wise maximum is identified and the collection of the station-wise maxima forms one event. Because there is some evidence that the time an extreme event takes to propagate through the seven nodes in our model is about 3--7 days, we choose $r=3$, hence we consider that one event lasts 7 days. In this way we obtain 717 observations of supposedly independent events. As it was mentioned the results are robust to the choice of $r=\{1,2,3\}$.

We test for seasonality and trends each of the series (each having 717 observations). The season factor is significant across all series and the time trend is marginally significant for some of the locations. We used a simple time series model to remove these non-stationarities. The model is based on season indicators and a linear time trend
\begin{equation} \label{eqq:time_series}
    X_{t}=\beta_0+\beta_1\mathbbm{1}_{\text{spring}_t}+\beta_2\mathbbm{1}_{\text{summer}_t}+\beta_3\mathbbm{1}_{\text{winter}_t}+\alpha t+\epsilon_t,
\end{equation}
where $\epsilon_t$ for $t=1,2,\ldots $ is a stationary mean zero process. 
After fitting the model in \eqref{eqq:time_series} to each of the five series through ordinary least squares we obtain the residuals and use those in the estimation of the extremal dependence.

\subsection{ECE-based confidence interval for the dependence parameters}
\label{app:ECE}

Let $\hat{\theta}_{n,k} = \hat{\theta}_{n,k}^{\mathrm{ECE}}$ denote the pairwise extremal coefficient estimator in \eqref{eqn:ECE} and let $\theta_0$ denote the true vector of parameters. By \citet[Theorem~2]{eks16} with $\Omega$ equal to the identity matrix, the ECE is asymptotically normal,
\[
 \sqrt{k}(\hat{\theta}_{n,k}-\theta_0)\dto \mathcal{N}_{|E|}\bigl(0, M(\theta_0)\bigr), \qquad n \to \infty,
\]
provided $k = k_n \to \infty$ such that $k/n \to 0$ fast enough \citep[Theorem~4.6]{einmahl12}. The asymptotic covariance matrix takes the form
\[
    M(\theta_0)
    = (\dot{L}^\top\dot{L})^{-1}
    \dot{L}^\top\Sigma_L\dot{L}
    (\dot{L}^\top\dot{L})^{-1}\, .
\]
The matrices $\dot{L}$ and $\Sigma_L$ depend on $\theta_0$ and are described below. For every $k$ and every $e \in E$, an asymptotic 95\% confidence interval for the edge parameter $\theta_{0,e}$ is given by
\[
	\theta_{0,e} \in \left[\hat{\theta}_{k,n;e}\pm 1.96\sqrt{\{M(\hat{\theta}_{k,n})\}_{ee}/k}\right] .
\]

First, recall that $\mathcal{Q} \subseteq \{ J \subseteq U : |J| = 2 \}$ is the set of pairs on which the ECE is based and put $q = |\mathcal{Q}|$. Define the $\mathbb{R}^q$-valued map $L(\theta) = \bigl(l_J(1,1;\theta), J \in \mathcal{Q}\bigr)$ and let $\dot{L}(\theta) \in \mathbb{R}^{q \times |E|}$ be its matrix of partial derivatives. For a pair $J = \{u, v\}$ and an edge $e=(a,b)$, the partial derivative of $l_J(1,1;\theta)$ with respect to $\theta_{e}$ is given by
\begin{equation*} 
    \frac{\partial  l_J(1,1;\theta)}{\partial \theta_{e}}
    =
    \frac{\phi\left(\sqrt{p_{uv}}/2\right)}
    {\sqrt{p_{uv}}}
    \theta_{e}\mathbbm{1}_{\{e\in\path{u}{v}\}},
\end{equation*}
where $p_{uv}$ is the path sum as in \eqref{eq:sumpath} and $\phi$ denotes the standard normal density function. The partial derivatives of $l_J(1, 1;\theta)$ with respect to $\theta_e$ for every $e\in E$ form a row of the matrix $\dot{L}(\theta)$.

Second, $\Sigma_L(\theta_0)$ is the $q\times q$ covariance matrix of the asymptotic distribution of the empirical stdf,
\[
\big\{\sqrt{k}\big(\hat{l}_{j;n,k}(1,1)
-
l_J(1,1;\theta_0)\big)\big\}_{m=1,\ldots, q}
\dto
\mathcal{N}_{q}(0, \Sigma_L(\theta_0)),
\qquad n \to \infty.
\]
The elements of the matrix $\Sigma_L(\theta_0)$ are defined in terms of the stdf evaluated at different coordinates and of the partial derivatives of the stdf $l(x; \theta)$ with respect to the elements of $x$. For details we refer to \citet[Section~2.5]{eks16}. Here we note that the partial derivatives just mentioned are
\[
	\left. \frac{\partial l_J(x_u,x_v;\theta)}{\partial x_u} \right|_{(x_u,x_v)=(1,1)}
	=
	\Phi(\sqrt{p_{uv}}/2),
	\qquad J = \{u, v\}.
\]

\subsection{Bootstrap confidence interval for the Pickands dependence function}
\label{app:stdf:CI}

For assessing the goodness-of-fit of the proposed model (Section~\ref{ssec:gof}), we construct non-parametric 95\% confidence intervals for $A(w)=l(1-w,w)$ for $w\in [0,1]$. As shown in \citet[Section~5]{kiril_seg_taf} this can be achieved by resampling from the empirical beta copula. For every fixed $w\in [0,1]$ we seek with $a(w)$ and $b(w)$ such that
\[
	\P\bigl(a(w) \leq \hat{l}_{n,k}(1-w,w)-l(1-w,w)\leq b(w)\bigr)
	= 0.95\,,
\]
where $\hat{l}_{n,k}$ is the non-parametric estimator of the stdf. For $a(w)$ and $b(w)$ satisfying the above expression, a point-wise confidence interval is given by 
\begin{equation}
\label{eqn:l:CI}
	A(w) \in \left[\hat{l}_{n,k}(1-w,w)-b(w), \hat{l}_{n,k}(1-w,w)-a(w)\right] .
\end{equation}
Let $(Y^\ast_{v,i})_{v \in U}$, for $i=1,\ldots, n$, be a random sample from the empirical beta copula drawn according to steps A1--A4 of \citet[Section~5]{kiril_seg_taf}. Let the function $\hat{l}^{\beta}_{n,k}$ be the empirical beta stdf based on the original data and let the function $\hat{l}^{\ast}_{n,k}$ be the non-parametric estimate of the stdf using the bootstrap sample.

We use the distribution of $\hat{l}_{n,k}^\ast-\hat{l}^{\beta}_{n,k}$ conditionally on the data as an estimate of the distribution of $\hat{l}_{n,k}-l$. Hence, we estimate $a(w)$ and $b(w)$ by $a^*(w)$ and $b^*(w)$ respectively defined implicitly by
\begin{align*}
	0.95&=
	\P^*\left(
		a^*(w)
		\leq \hat{l}_{n,k}^\ast(1-w,w)-\hat{l}^\beta_{n,k}(1-w,w)
		\leq b^*(w)
	\right)
	\\&=
	\P^*\left(
		a+\hat{l}^\beta_{n,k}(1-w,w)
		\leq \hat{l}_{n,k}^\ast(1-w,w)
		\leq b + \hat{l}^\beta_{n,k}(1-w,w)
	\right) .
\end{align*}
We further estimate the bootstrap distribution of $\hat{l}^\ast_{n,k}$ by a Monte Carlo approximation obtained by $N = 1000$ samples of size $n$ from the empirical beta copula. As a consequence, the lower and upper bounds for $\hat{l}^\ast_{n,k}(1-w,w)$ above are equated to the empirical 0.025- and 0.975-quantiles, respectively, yielding
\begin{align}
\label{eqn:a*b*}
	\hat{l}^\ast_{0.025}(w,1-w)
	&= a^*(w) + \hat{l}^\beta_{n,k}(w,1-w), &
	\hat{l}^\ast_{0.975}(w,1-w)
	&= b^*(w) + \hat{l}^\beta_{n,k}(w,1-w).
\end{align}
Replacing $a(w)$ and $b(w)$ in \eqref{eqn:l:CI} by $a^*(w)$ and $b^*(w)$ respectively as solved from \eqref{eqn:a*b*} yields the bootstrapped confidence interval for $A(w)$ shown in Figure~\ref{fig:pdf}.

\section*{Acknowledgements}
We wish to thank the two Reviewers and the Associate Editor for a wealth of valuable remarks and suggestions, which helped us to enhance the understanding of the place of our models in the existing literature and of our contribution to it. Stefka Asenova is also grateful to David Lee for the clarifications and indications on the model in \citet{joe}.

\small
\bibliography{mypr1_rev2_main}
\bibliographystyle{apalike}

\end{document}